\theoremstyle{definition} %%upright text, extra space above and below
    \newtheorem{definition}{Definition}
\theoremstyle{plain} %% italic text, extra space above and below
    \newtheorem{theorem}[definition]{Theorem}
    \newtheorem{proposition}[definition]{Proposition}
    \newtheorem{lemma}[definition]{Lemma}
    \newtheorem{corollary}[definition]{Corollary}
\theoremstyle{remark} %% upright text, no extra space above or below
    \newtheorem{remark}[definition]{Remark}
\DeclareMathOperator{\Ima}{Im}
\title[Scalar, SU(n), and Spinor Fields on Manifolds with Null-Boundary]{Gravity Coupled with Scalar, SU(n), and Spinor Fields on Manifolds with Null-Boundary}
\author{Alberto S. Cattaneo}
\address{Institut f\"ur Mathematik, Universit\"at Z\"urich, Winterthurerstrasse 190, 8057 Z\"urich, Switzerland}
\email{cattaneo@math.uzh.ch}
\author{Filippo Fila Robattino}
\address{Institut f\"ur Mathematik, Universit\"at Z\"urich, Winterthurerstrasse 190, 8057 Z\"urich, Switzerland}
\email{filippo.filarobattino@math.uzh.ch}
\author{Valentino Huang}
\address{}
\email{huangv@student.ethz.ch}
\author{Manuel Tecchiolli}
\address{Institut f\"ur Mathematik, Universit\"at Z\"urich, Winterthurerstrasse 190, 8057 Z\"urich, Switzerland}
\email{manuel.tecchiolli@math.uzh.ch}
\thanks{ASC, FFR, and MT acknowledge partial support of the SNF Grant No. 200020 192080. ASC acknowledges partial support of the Simons Collaboration on Global Categorical Symmetries. This research was (partly) supported by the NCCR SwissMAP, funded by the Swiss National Science Foundation. This article is based upon work from COST Action 21109 CaLISTA, supported   by COST (European Cooperation in Science and Technology) (www.cost.eu), MSCA-2021-SE-01-101086123 CaLIGOLA, and MSCA-DN CaLiForNIA-101119552. MT acknowledges support from the FK of the UZH. FFR acknowledges funding from the EU project Caligola HORIZON-MSCA-2021-SE-01, Project ID: 101086123.}
\begin{document}

\begin{abstract}
In this paper, we present a theory for gravity coupled with scalar, SU$(n)$ and spinor fields on manifolds with null-boundary. We perform the symplectic reduction of the space of boundary fields and give the constraints of the theory in terms of local functionals of boundary vielbein and connection. For the three different couplings, the analysis of the constraint algebra shows that the set of constraints does not form a first class system.

\end{abstract}

\maketitle
\tableofcontents

\section{Introduction}

The concept of gauge theories is a central aspect of modern mathematical physics, serving as the basis for the formulation of many fundamental physical theories. In a gauge theory, the physical fields are described by means of a principal $G$-bundle over a base manifold $M$ (possibly with boundary) and an action functional $S$ that embodies the symmetry of the theory and from which the field equations are derived. The conserved quantities of the theory come from the invariance of the action functional under the symmetry group, while interactions are introduced by gauging these symmetries, making them local. The mathematical representation of a gauge theory is achieved through a principal bundle $P$ and the gauge group can consequently be defined.

Two of the most widely accepted theories in fundamental physics are the Standard Model (SM) of particle physics and General Relativity (GR). The SM, with its symmetry group SU$(3)\times$SU$(2)\times$U(1), is a gauge theory that explains three of the four fundamental interactions. Meanwhile, GR is a theory that accounts for gravity, with physical fields representing the geometry of the base manifold.

GR is originally formulated using metric and Christoffel symbols, and is not formally equivalent to a Yang--Mills theory like the SM. In the SM, symmetries are encoded in physical fields through a principal connection. In order to study GR within a framework that would make the gauge formulation similar to the one of the SM, the coframe formalism, a.k.a. Palatini--Cartan (PC) theory, can be used. Within this formalism, the physical fields of GR are represented as coframes, a.k.a. vielbein, and local connections (see e.g. \cite{T2019}, \cite{thi2007} and \cite{C23}) and gravity becomes an explicit SO$(3,1)$ gauge theory.
 
In this paper, we examine the boundary structure of GR in the coframe formulation on manifolds with boundary, focusing specifically on the scenario where the boundary is null, resulting in a degenerate boundary metric. Our study extends the findings of the previous work \cite{CCR2022}, in which the authors investigated the geometric structures of gravity coupled to scalar, SU$(n)$, and spinor fields for a non-degenerate boundary metric. Our aim is to extend these findings to boundaries with the most general structure, namely including a possible degenerate metric. From a different perspective, this paper extends the study conducted in \cite{CCT21}, where the authors analyze the degenerate boundary structure of the Palatini--Cartan theory, in order to incorporate gravity coupled with matter and gauge fields. Therefore, the generalization of these results will lay the foundation for formulating the SM on manifolds with boundary (see \cite{SMB}).

The boundary structure is recovered performing the method established by Kijowski and Tulczijew (KT) in \cite{KT1979} (for an introduction see also \cite{C23} and references therein). This method involves characterizing the reduced phase space as a reduction, i.e., a quotient space, of the space of free boundary fields, rather than using the approach proposed by Dirac in \cite{Dirac1958}. The KT method has several mathematical advantages, including a more straightforward procedure for formulating constraints and compatibility with the BV-BFV formalism as described in \cite{CMR2012} (in the case of PC gravity this is done in \cite{CS2017b}, \cite{CCS2020} and \cite{CCS21b}). Additionally, the quantization procedure within the BV-BFV formalism as outlined in \cite{CMR2} can be more readily applied to the theory when using the KT approach.

Gravity in the coframe formalism is expressed though the so called Palatini--Cartan action. The structure of the symplectic form of the boundary fields poses a major challenge in the constraint analysis of the theory. This form is defined on a quotient space of the restrictions of the bulk fields to the boundary, determined by an equivalence relation given by the kernel of the map $e\wedge\,$, where $e$ denotes the coframe. To simplify the analysis, we describe this phase space using a fixed representative instead of working with equivalence classes, by introducing a suitable structural constraint. In prior works, such as \cite{CS2017}, \cite{CCS2020} and \cite{CCR2022}, this method has been successfully applied to space-like and time-like boundaries. However, for a null-boundary, the structural constraint must be adapted, as it only fixes the representative uniquely when the induced metric on the boundary is non-degenerate. We extend the solution proposed for space- and time-like boundaries to a null-boundary by adapting the structural constraint for all three different couplings (scalar, SU$(n)$, and spinor). The solution is slightly more involved and gives rise to second class constraints, compared to the non-degenerate case, where all constraints are first class.

\noindent
\textbf{Acknowledgments.} We thank Giovanni Canepa for useful comments and discussions.

\section{Geometrical background of gravity}

\subsection{Coframe formalism}

In the following section we will examine the geometrical background of the theory, i.e. the coframe formalism and the Palatini--Cartan action (see for example \cite{T2019, thi2007} and references therein).

The general set-up consists of
\begin{itemize}
    \item[--] An $N$-dimensional differentiable oriented\footnote{Orientability is not necessary (see, e.g., \cite[Section 2.1]{CCS2020}), but we assume it here for simplicity of notations.} pseudo-riemannian manifold $M$ with boundary $\Sigma$;
    \item[--] A principal GL$(N,\mathbb R)$-bundle $LM$ called the \emph{frame bundle}, which can be reduced to a principal SO$(N-1,1)$-bundle $P$;
    \item[--] An associated vector bundle $\mathcal V\coloneqq P\times_\rho V$ called the \emph{Minkowski bundle}, where $V$ is an $N$-dimensional real pseudo-riemannian vector space with reference metric $\eta=$diag$(1,...,-1)$ and $\rho\colon$SO$(N-1,1)\to$Aut$(V)$ is the fundamental representation of SO$(N-1,1)$.
\end{itemize}
Then, we define the vielbein via a reduction of the frame bundle.
\begin{definition}\label{coframedef}
	We define the \emph{vielbein} $\tilde e\colon P\to LM$ as the principal bundle isomorphism such that the following diagram commutes
\begin{center}
		\begin{tikzcd}
			P \ar[r,"\tilde e"]\ar[d,"p'"']  &LM \ar[d,"\pi'"] \\
			\mathcal{V} \ar[r,shift left,]
            &\ar[l,shift left,"e"] TM
		\end{tikzcd}
	\end{center}
 where $e\colon TM\to \mathcal V$ is the vector bundle isomorphism induced by $\tilde e\colon P\to LM$ and $p',\pi'$ the corresponding associated bundle maps. This means that the vielbein consists of the elements in $\Omega^1(M,\mathcal V)$ possessing smooth inverse. We can call this space $\tilde\Omega^1(M,\mathcal V)$.
\end{definition}

\begin{remark}~\\\vspace{-0.3cm}
\begin{itemize}
    \item[--] Given $i\colon\mathrm{SO}(N-1,1)\to\mathrm{GL}(N,\mathbb{R})$ as the canonical embedding, we recall that, in order for $\tilde e$ to be a principal bundle isomorphism, it must be an isomorphism of fiber bundles and also satisfy the equivariance condition
	\begin{align}
	   \mathcal{R}_{i(g)}\circ \tilde e=\tilde e\circ\mathcal{R}_g \quad \text{for all}\quad g\in G.
	\end{align}
    This is equivalent to asking that the following diagram commutes
        \begin{center}
		\begin{tikzcd}
			P \arrow[r," \tilde e"] \arrow[d,"\mathcal{R}_g"'] & LM \arrow[d,"\mathcal{R}_{i(g)}"] \\ P \arrow[r," \tilde e"'] & LM
		\end{tikzcd}
  	\end{center}
  \item[--] The existence and uniqueness of the map can be guaranteed through the use of the universal property of the quotient for the bundle isomorphism $\pi'\circ \,\tilde e\colon P\to TM$. This is possible thanks to the equivariance condition of $\tilde e$. The isomorphism property of the map $e\colon TM\to \mathcal V$ is simply inherited from $\tilde e$ by passing to the quotient.
    \item[--]Since the map $e\colon TM\to \mathcal V$ is an isomorphism of vector bundles, it acts like a linear isomorphism on the fibers. It means it can be written in the following way:
    \begin{align}
		{}_xe\colon T_xM & \to V \\
        v & \mapsto v^a=v^\mu e^a_\mu, \nonumber
    \end{align}
    where $v=v^\mu \partial_\mu\in T_xM$. Consider now the dual basis $\{dx^\mu\}$. We can collect the components of the isomorphism into the covector $e_\nu^a dx^\nu(\partial_\mu)=e_\mu^a$, since a covector is a linear map over the tangent space. Given that a basis of the cotangent space can be seen as a family of $N$ covectors $e^a_\mu dx^\mu$ and also that an isomorphism sends a basis to another basis, on a chart over $U\in M$, we can identify the map $e\colon TM\to \mathcal V$ with a family of $N$ covector fields or directly with a $V$-valued covector field in $\Omega^1(U,V)$. Therefore, if $M$ is parallelizable, we can identify the whole map $e$ with a $V$-valued covector field $e\in\Omega^1(M,V)$.
    \item[--] The name coframe formalism comes from the fact that $e$ not only defines an isomorphism, but, thanks to the fact that it is obtained from the reduction of the structure group of the frame bundle to the pseudo-orthogonal group SO$(N-1, 1)$, it is also a linear isometry on the fibers. In fact, the reduction to SO$(N-1, 1)$ means by definition that the frames of the frame bundle are orthonormal, namely we have on the fibers $g_{\mu\nu}e_a^\mu e_b^\nu=\eta_{ab}$. On the other hand, in terms of their dual basis (coframes) $\{e^a\}$, we have $g_{\mu\nu}=\eta_{ab}e^a_\mu e^b_\nu$, which can be written as
        \begin{align}\label{isometrycondition}
		g=e^*\eta.
	\end{align}
    This means that $e$ is a linear isometry.
    \end{itemize}
\end{remark}

    \begin{proposition}\label{prop:isowedge}
        The inner product on $V$ allows the identification $\mathfrak{so}(N-1,1) \cong \textstyle{\bigwedge^2} {V}$.
    \end{proposition}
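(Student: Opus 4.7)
The plan is to build the isomorphism explicitly from the metric $\eta$ on $V$, which gives the musical identification $V^* \cong V$, and hence $\mathrm{End}(V) \cong V \otimes V^* \cong V \otimes V$. Under this identification the condition that an endomorphism lies in $\mathfrak{so}(N-1,1)$ (antisymmetry with respect to $\eta$) is exactly the condition that the corresponding element of $V\otimes V$ be antisymmetric, i.e. lies in $\bigwedge^2 V$.

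More concretely, I would proceed as follows. First, recall that by definition $A\in\mathfrak{so}(N-1,1)$ iff $\eta(Av,w)+\eta(v,Aw)=0$ for all $v,w\in V$, equivalently iff the bilinear form $\omega_A(v,w)\coloneqq\eta(Av,w)$ is antisymmetric, i.e.\ $\omega_A\in\bigwedge^2 V^*$. Thus the assignment $A\mapsto\omega_A$ is a linear map $\Phi\colon\mathfrak{so}(N-1,1)\to\bigwedge^2 V^*$. It is injective: if $\omega_A=0$ then $\eta(Av,w)=0$ for all $w$, and non-degeneracy of $\eta$ forces $Av=0$ for all $v$. It is surjective because, given any $\omega\in\bigwedge^2 V^*$, non-degeneracy of $\eta$ lets one define $A$ by $\eta(Av,w)=\omega(v,w)$, and the resulting $A$ automatically lies in $\mathfrak{so}(N-1,1)$.

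Second, I would compose $\Phi$ with the musical isomorphism $\bigwedge^2 V^*\xrightarrow{\sim}\bigwedge^2 V$ induced by $\eta$ (applied in each tensor slot), obtaining the desired isomorphism $\mathfrak{so}(N-1,1)\cong\bigwedge^2 V$. In indices, choosing an orthonormal basis $\{E_a\}$ of $V$ so that $\eta=\mathrm{diag}(1,\dots,-1)$, an element $A^a{}_b$ of $\mathfrak{so}(N-1,1)$ corresponds to $A_{ab}=\eta_{ac}A^c{}_b$, which is antisymmetric in $a,b$, and one identifies this with the bivector $\tfrac12 A^{ab}E_a\wedge E_b$ via $A^{ab}=\eta^{ac}\eta^{bd}A_{cd}$. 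As a consistency check, the two spaces have the same dimension $\binom{N}{2}$.

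The main (and only) obstacle is really bookkeeping: one must make sure the antisymmetry of $\eta A$ (as a $(0,2)$-tensor) is correctly tracked when one raises indices back up with $\eta$, so that the final object is indeed an antisymmetric $(2,0)$-tensor. Since $\eta$ is symmetric, raising both indices preserves antisymmetry, and the identification goes through without issue.
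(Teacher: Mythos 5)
Your argument is correct: it is the standard identification $A\mapsto\eta(A\cdot,\cdot)$ onto antisymmetric bilinear forms followed by the musical isomorphism $\bigwedge^2 V^*\cong\bigwedge^2 V$, with injectivity and surjectivity both resting on the non-degeneracy of $\eta$, and the symmetry of $\eta$ guaranteeing that raising both indices preserves antisymmetry. The paper states this proposition without proof, treating it as a standard fact, so there is nothing to compare against; your write-up supplies exactly the expected argument.
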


Because of this proposition, we can identify $\mathfrak{so}(N-1,1)$-valued forms\footnote{In the sense of a vector bundle with fibers $\mathfrak{so}(N-1,1)$} with $\textstyle{\bigwedge^2} \mathcal V$-valued forms and we will use the following shortened notation to indicate the spaces of $i$-forms on $M$ with values in the $j$th wedge product of $\mathcal{V}$
\begin{align}\label{e:NotationOmega}
    \Omega^{i,j}\coloneqq \Omega^i\left(M, \textstyle{\bigwedge^j} \mathcal{V}\right),
\end{align}
which is generalized to all possible $i,j\in \mathbb N$.\\
\begin{remark}
    These spaces form indeed a graded algebra with graded product
\begin{align*}
    \wedge \colon \Omega^{i,j} \times   \Omega^{k,l}  &\rightarrow  \Omega^{i+k,j+l} & & \text{for} \; i+k \leq N, \; j+l \leq N \\
    (\alpha,  \beta)  &\mapsto  \alpha \wedge \beta=(-1)^{(i+j)(k+l)}\beta \wedge \alpha. & & 
\end{align*}
We will refer to an alement in $\Omega^{i,j}$ also as an $(i,j)$-form.
\end{remark}
\begin{definition}
    A connection form $\omega$ on a principal $G$-bundle $P$ is a $\mathfrak g$-valued $1$-form on $P$ such that:
\begin{itemize}
    \item[--] It is adjoint-equivariant;
    \item[--] For each $\xi\in\mathfrak g$ and fundamental vector field $X_\xi$, it holds $\omega(X_\xi)=\xi$.
\end{itemize}
\end{definition}
We will refer to the space of principal connections on P as $\mathcal A(P)$.
\begin{remark}
  If we consider a principal connection form on the principal SO$(N-1,1)$-bundle $P$, namely an element $\omega\in\Omega^1(P,\textstyle{\bigwedge^2} V)$ (thanks to \cref{prop:isowedge}), we can pull it back using local sections. We will obtain a family of local connections $\omega_\alpha\in\Omega^1(U_\alpha,\textstyle{\bigwedge^2}\mathcal V)$. These forms define a covariant derivative on $M$ (see \cref{covder}).
\end{remark}

The action of the Lie algebra on $\textstyle{\bigwedge^j}\mathcal V$-differential forms will be denoted in the following way:

\begin{definition}\label{generalizedlie}
    Let $\alpha\in\Omega^{i,j}$ and $\beta\in\Omega^{k,l}$. Then, we define the generalized Lie bracket
    
    \begin{align*}
    [\, \,, \, ]\colon \Omega^{i,j}\times\Omega^{k,l}\to\Omega^{i+k,j+l-2}
    \end{align*}
    through
    \begin{equation}
    \begin{split}
    &[\alpha,\beta]_{\mu_1\dots \mu_{i+k}}^{a_1\dots a_{j+l-2}}=\\[7pt]
    &=\sum_{\sigma_{i+k}}\sum_{\sigma_{j+l-2}} \text{sign}(\sigma_{i+k})\text{sign}(\sigma_{j+l-2}) \alpha_{\mu_{\sigma(1)} \dots \mu_{\sigma(i)}}^{a_{\sigma(1)}\dots a_{\sigma(j-1)}a} \beta_{\mu_{\sigma(i+1)} \dots \mu_{\sigma(i+k)}}^{a_{\sigma(j)}\dots a_{\sigma(j+l-2)}b}\iota(\rho)_{ab},
    \end{split}
    \end{equation}
    where $\iota(\rho)$ is the contraction map that $\textstyle{\bigwedge^m}\mathcal V$ inherits\footnote{The representation $\rho$ induces an algebra representation $\mathrm{d}\rho$ and we can translate that to $\bigwedge^2\mathcal V$ thanks to \cref{prop:isowedge}. Then, we can easily generalize this action to $\textstyle{\bigwedge^m}\mathcal V$.} from the representation $\rho$ of SO$(N-1,1)$. For the fundamental representation, this map is just the contraction with the $\eta$.
\end{definition}
\begin{remark}
    Shortly speaking, the brackets act as a wedge product on both space-time and internal indices not contracted with the contraction map.
\end{remark}
\begin{remark}
    The contraction map $\iota(\rho)$ is obtained from the representation map of the algebra $\mathrm{d}\rho\colon\mathfrak so(N-1,1)\to\text{End}(V)$ composed with the isomorphism of \cref{prop:isowedge}.
\end{remark}
\begin{definition}\label{covder}
    Local connections define an exterior covariant derivative for $\textstyle{\bigwedge^j}\mathcal V$-valued
$i$-forms on $M$. We denote such a map with
\begin{align}
    d_\omega\colon \Omega^{i,j}\to \Omega^{i+1,j}.
\end{align}
Explicitly, it reads
\begin{align}
    d_\omega\alpha=d\alpha+[\omega, \alpha],
\end{align}
where $\alpha\in\Omega^{i,j}$.
\end{definition}
\begin{remark}
    Note that the representation of the brackets is the fundamental one. This is due to the fact that $\mathcal V$ is the associated bundle to $P$ through the fundamental representation. In the case of a different associated bundle, through a different representation, the brackets will be replaced by the given representation.
\end{remark}
\begin{definition}
    Let $\omega\in\mathcal A(P)$ be a principal connection. Then, the associated local connections on $M$ define a global $2$-form $F_\omega\in\Omega^{2,2}$, which satisfies, in any arbitrary trivialization chart $(U_\alpha, s_\alpha)$,
    \begin{align}
        F_\omega|_{U_\alpha}=d\omega_\alpha+\frac12[\omega_\alpha, \omega_\alpha],
    \end{align}
    with $\omega_\alpha=s_\alpha^*\omega$.
\end{definition}
A more detailed derivation of this definition can be found in \cite{T2019}.\\
\begin{definition}\label{PC_def}
    The classical Palatini--Cartan theory is the assignment of the pair $(\mathcal F_{PC}, \mathcal S_{PC})_M$ to every pseudo riemannian $N$-dimensional manifold and vector space $V$ with reference metric\footnote{Note that any particular choice of the Lorentzian structure on $V$ is immaterial, since a change in $V$ would just isomorphically reflect to the space of fields without changing $\mathcal S_{PC}$.} $\eta$ with space of fields
    \begin{align}
    \mathcal F_{PC}=\tilde\Omega^{1,1}\times \mathcal A(P)\ni (e,\omega)
    \end{align}
    and action functional
    \begin{align}\label{eq:actionPCT}
    \mathcal S_{PC}=\int_{M}\frac1{(N-2)!}e^{N-2} F_\omega+\frac1{N!}\Lambda e^N,
    \end{align}
    where $\Lambda \in\mathbb R$ and the powers in $e$ are in terms of the wedge product.
\end{definition}
\begin{remark}
    In \cref{eq:actionPCT}, we have omitted both the wedge product and the trace. The trace operator is the map $\mathrm{Tr}\colon\textstyle{\bigwedge^N} V\to\mathbb R$ such that, given a basis $\{u_i\}_{i=1,...,N}$ of $V$, it holds that $\mathrm{Tr}[u_{i_1}\wedge \cdot\cdot\cdot\wedge u_{i_n}]=\varepsilon_{i_1...i_n}$, thus the trace works as a choice of the orientation on $M$, which must be compatible with the SO$(N-1,1)$ reduction.
\end{remark}
\begin{remark}  
    In the subsequent sections, we will avoid reiterating a similar definition for each distinct case. Rather, we will provide the space of fields on $M$, and it is important to bear in mind that the definitions of the upcoming theories will be straightforward generalizations of \cref{PC_def}.
\end{remark}
The Euler--Lagrange equations coming from the action principle $\delta \mathcal S_{PC}=0$ are, respectively for the variations in $e$ and $\omega$,
Euler--Lagrange equations of Palatini--Cartan theory read:
\label{eq:el}
\begin{align}
\frac1{(N-3)!}e^{N-3}F_\omega-\frac1{(N-1)!}\Lambda e^{N-1}&=0\label{eq:PCe}\\[7pt]
e^{N-3}d_\omega e&= 0,\label{eq:PComega}
\end{align}
which, in $N=4$, reduce to
\label{eq:el4}
\begin{align}
e F_\omega-\frac1{3!}\Lambda e^{3}&=0\label{eq:PCe4}\\[7pt]
e d_\omega e&= 0.\label{eq:PComega4}
\end{align}
By injectivity of the map $e\wedge\cdot$ on $(2,1)$-forms, \cref{eq:PComega} is equivalent to
\begin{equation}\label{e:tf}
d_\omega e = 0,
\end{equation}
which is the torsion-free condition. Therefore, this is the equation that identifies the Levi-Civita connection for the metric \eqref{isometrycondition}.

\subsection{Symplectic reduction on the boundary}

The geometrical method implemented to the study of the boundary structure of the theory is the KT construction described in \cite{KT1979}.\\
The construction starts from a space of bulk fields denoted with $\mathcal F$ and an action functional of such fields denoted with $\mathcal S$. In the case of the Palatini--Cartan theory, these are precisely $\mathcal F_{PC}$ and $\mathcal S_{PC}$. We notice that the integration by parts in the variation of the action $\mathcal S$ gives rise to a boundary term
\begin{align}\label{noetherform}
    \alpha=\int_{\Sigma}\frac1{(N-2)!}e^{N-2}\delta \omega,
\end{align}
which we call the Noether $1$-form.

By considering the pull-back of the fields in $\mathcal F$ to the boundary $\Sigma$ via the natural inclusion $i\colon\Sigma\to M$, we obtain the space of pulled-back fields denoted by $\mathcal{\tilde{F}}_\Sigma$. In this setting, the boundary term $\alpha$ defined in \eqref{noetherform} can be interpreted as a $1$-form on the space of pulled-back fields. Furthermore, the variational operator $\delta$ is regarded as a de Rham differential of the complex of differential forms on $\mathcal{\tilde{F}}_\Sigma$.

Note that the $2$-form on $\mathcal{\tilde{F}}_\Sigma$ defined via
    \begin{align}\label{degsympform}
        \tilde\varpi=\delta\alpha=\int_{\Sigma}\frac1{(N-3)!}e^{N-3}\delta e \delta \omega
    \end{align}
is closed.
\begin{remark}\label{sympformdeg}
    It is important to note that a closed $2$-form does not necessarily have to be non-degenerate. The form $\tilde\varpi$ defined in \cref{degsympform} may have a non-trivial kernel. This is the case with both the free theory and the all three different couplings analyzed in this paper. A closed $2$-form with possibly degenerate kernel is called a pre-symplectic form and a space endowed with such a form is called a pre-symplectic space.
\end{remark}
\begin{definition}\label{PC_preboundary}
    We define the \emph{space of pre-boundary fields} for the Palatini--Cartan theory as the pre-symplectic space $(\mathcal{\tilde{F}}_\Sigma,\tilde\varpi)$, where $\mathcal{\tilde{F}}_\Sigma$ is the space of pulled-back fields on the boundary $\Sigma$ and $\tilde\varpi$ is the pre-symplectic form.
\end{definition}
\begin{remark}
    We will use the following definition for bundle valued differential forms on the boundary
    \begin{align}
        \Omega^{i,j}_\Sigma\coloneqq\Omega^i(\Sigma,\textstyle{\bigwedge^j}i^*\mathcal V).
    \end{align}
\end{remark}

As we pointed out in \cref{sympformdeg}, the pre-symplectic form might be indeed not symplectic (it might be degenerate). In order to obtain a symplectic space, we could just quotient by the distribution given by the kernel of the pre-symplectic form.

\begin{definition}\label{def:geometricphasespace}
	We define the \emph{geometric phase space} of the theory as the symplectic space $(\mathcal{{F}}_\Sigma,\varpi)$ obtained as the quotient of the space of pre-boundary fields by the kernel of its pre-symplectic form\footnote{This quotient is to be intended in the sense of distributions on the tangent bundle. Note that $\mathrm{ker}(\tilde\varpi)$ is involutive, since $\tilde\varpi$ is closed.}
	\begin{equation}
		\mathcal{{F}}_\Sigma \coloneqq \frac{\mathcal{\tilde{F}}_\Sigma}{\mathrm{ker}(\tilde{\varpi})}
	\end{equation}
 and with symplectic form $\varpi$, the unique $2$-form on $\mathcal{{F}}_\Sigma$ such that $p^*\varpi = \tilde\varpi$, where $p\colon\mathcal{\tilde F}_\Sigma\to \mathcal{F}_\Sigma$ is the canonical projection.
\end{definition}

In field theory, it is commonly understood that not all field equations are dynamical, and on a manifold with boundary, this is equivalent to having some field equations that are non-transverse with respect to the boundary. The resulting non-dynamical equations can be interpreted as constraints that must be satisfied by the boundary fields.\\
We can give these constraints the form of local functionals on $\mathcal{{F}}_\Sigma$ (or $\mathcal{\tilde{F}}_\Sigma$), just by integrating the pulled-back equations on $\Sigma$. We denote this set of constraints as $\mathcal C$ (or $\mathcal{\tilde C}$).

The first understanding of the nature of a set of constraints on a symplectic space is due to Dirac \cite{Dir1950}. He pointed out correctly that the nature of the constraints, which he divided in first- and second-class, had important implications on the local degrees of freedom of the theory\footnote{The local degrees of freedom are defined as the dimension of the reduced phase space and the dimension of a space is define as the rank of the fiber or its dimension as a $C^\infty$-module.}. More precisely, the hamiltonian vector fields of the first-class constraints generate the algebra of the symmetry group of the theory and the ones of the second-class constraints do not.  In \cref{sectio_firstsecond}, a more detailed discussion of first- and second-class constraints is presented.

The vanishing locus of these integral constraints, quotiented by the action of the algebra generated by their hamiltonian vector fields, is called the \emph{reduced phase space}. Roughly speaking, this is the space of the non-gauge equivalent (thanks to the quotient) initial conditions (the fields are on the boundary) for the dynamical field equations of the theory (since we have considered the vanishing locus of the constraints).

In Table 1, we summarize all the steps to the reduced phase space\footnote{This table is taken from \cite{CCT21}}.

\begin{table}[ht]
\begin{tikzcd}[row sep=0.75cm] 
(\mathcal{F}, S )
\arrow[d,dashed, "\text{Pull-back to the boundary}"]\\
(\mathcal{\tilde{F}}_\Sigma, \tilde\varpi, \tilde{\mathcal{C}})
\arrow[d, "\text{Reduction by ker$(\tilde{\varpi})$}"]\\
(\mathcal{{F}}_\Sigma, {\varpi}, \mathcal{C})
\arrow[d, "\text{Vanishing locus + /$\sim$ gauge algebra}"]\\
\text{Reduced Phase Space}
\end{tikzcd}
\vspace{0.25cm}
\caption{}
\vspace{-1cm}
\end{table}

\subsection{The structural and degeneracy constraints}\label{sect:strconstr}

From now on, we will work in $N=4$.\\
On the boundary $\Sigma$, the injectivity property of the map $e\wedge\cdot$ acting on boundary $(2,1)$-forms is lost.\footnote{See \cite{CS2017}.} This property guaranteed the equivalence of $d_\omega e=0$ and $e d_\omega e=0$ in the bulk. This situation is indeed problematic. In fact, in the bulk we have two perfectly equivalent conditions, namely two equivalent ways of writing one of the field equations. When we pull these back to the boundary, we want this equivalence to hold in order to make sense of the field equations on the boundary themselves. In other words, since in the bulk $e d_\omega e=0$ must give rise to the same solution space of $d_\omega e=0$, if the solutions space of these two equations on the boundary were to differ, then the Cauchy problem would be ill-defined. I.e., the pull-back to the boundary of the solutions obtained from the field equations in the bulk would be different from the boundary fields obtained from the solutions of the fields equations on the boundary. It means that one has to impose some additional conditions in order to maintain this equivalence on the boundary. We call part of the family of these extra conditions the \emph{structural constraint}.
    
This problem is present in both the non-degenerate and degenerate cases; however, the form of the structural constraint strictly depends on the nature of the boundary (null or non-null). In fact, in the non-degenerate case, the structural constraint alone is sufficient to ensure the aforementioned equivalence on the boundary. On the other hand, on a null boundary, the extra conditions split into a structural and a degeneracy constraint. We will see that, from a  different perspective, the structural constraint of the non-degenerate case is just a specific characterization of the structural and the degeneracy constraint where the latter is trivial.

Note that the core of this section, as we will mention again later in \cref{remark_wlog1}, is maintained rather general, namely independent of the field equations. The application of these results to the Palatini--Cartan theory is, on the one hand, a fundamental building block for the subsequent sections and, on the other hand, a useful way to get a solid grasp on the ideas behind the main results of the section itself.

First, we start by giving some definitions.

\begin{definition}
Let $e\in \Omega_\Sigma^{1,1}$ and $e^k\in\Omega_\Sigma^{k,k}$ be the wedge product of $k$ elements $e$. Then, we define the following maps:
    \begin{align}
                W_k^{\Sigma, (i,j)}\colon \Omega_\Sigma^{i,j}  & \longrightarrow \Omega_\Sigma^{i+k,j+k} \\
                \alpha  & \longmapsto    e^k\wedge \alpha \nonumber
    \end{align}  
    \begin{align}
                    \varrho^{(i,j)} \colon \Omega_{\Sigma}^{i,j}  & \longrightarrow \Omega_{\Sigma}^{i+1,j-1} \\
                    \alpha & \longmapsto [e,\alpha] \nonumber
    \end{align}
    \begin{align}\label{maptilderho}
                    \tilde{\varrho}^{(i,j)} \colon \Omega_{\Sigma}^{i,j}  & \longrightarrow \Omega_{\Sigma}^{i+1,j-1} \\
                    \alpha & \longmapsto [\tilde e,\alpha], \nonumber
    \end{align}
with $\tilde e\in \tilde\Omega_\Sigma^{1,1}$ being a degenerate vielbein, namely $\tilde e^*\eta=0$.
\end{definition}

We also give the definitions of three geometrical objects that we will require in the following theorems.

\begin{definition}\label{fund_spaces}
Let $J$ be a complement\footnote{To obtain an explicit expression for the complement, one can follow these steps. Start by selecting an arbitrary Riemannian metric on the boundary manifold $\Sigma$ and extend it to the space $\Omega^{2,1}$. Then, the orthogonal complement of the image of the map $\varrho^{(1,2)} |{\mathrm{Ker} W_{1}^{\Sigma, (1,2)}}$ in $\Omega_{\Sigma}^{2,1}$ can be identified as the space $J$, with respect to the chosen Riemannian metric.} in $\Omega_{\Sigma}^{2,1}$ of the space $\Ima \varrho^{(1,2)} |_{\mathrm{Ker} W_{1}^{\Sigma, (1,2)}}$. Then, we define the following subspaces:
    \begin{align}
    \mathcal{T}&\coloneqq \mathrm{Ker}W_{1}^{\Sigma (2,1)} \cap J \subset \Omega_{\Sigma}^{2,1}\\
    \mathcal S&\coloneqq\mathrm{Ker}W_{1}^{\Sigma, (1,3)} \cap \mathrm{Ker} \tilde{\varrho}^{(1,3)}  \subset \Omega^{1,3}_{\Sigma}\\
    \mathcal K&\coloneqq \mathrm{Ker}W_{1}^{\Sigma, (1,2)} \cap \mathrm{Ker} \varrho^{(1,2)} \subset \Omega_{\Sigma}^{1,2}.
\end{align}
\end{definition}
We present the initial key result for the degenerate theory, which will ensure the equivalence between $d_\omega e=0$ and $e d_\omega e=0$ at the boundary. While it may appear initially quite redundant with respect to \cref{strconstr_free}, it will have profound implications for the geometry of the theory, as highlighted in \cref{geom_impl}.
\begin{lemma}[Corollary of \cref{strconstr_free}]\label{strconstr_free_deg}
Let $e_n\in\Omega_\Sigma^{0,1}$ be fixed such that, for a chosen vielbein $e\in\tilde\Omega_\Sigma^{1,1}$, $\{e(v_1),e(v_2),e(v_3),e_n\}$\footnote{Notice in particular that, in any neighborhood of $e$ of the space of boundary fields, we are allowed to pick $e_n$ independently of the dynamics of the vielbein $e$. In other words, we can state that $e_n$ is constant in the field $e$. This trivially implies that $e_n$ has no variation along $e$.} is a basis of $i^*\mathcal V$, where $\{v_1,v_2,v_3\}$ is a basis of $T\Sigma$. Moreover, let $\alpha\in\Omega_\Sigma^{2,1}$. Then, we have that
$$\alpha = 0$$
if and only if
	\begin{equation}\label{tau_strcontr}
        \begin{cases}  \alpha\in\mathrm{Ker}W_1^{\Sigma, (2,1)}\\[6pt]
        e_n (\alpha- p_{\mathcal{T}}\alpha) \in \Ima W_{1}^{\Sigma,(1,1)}\\[6pt]
        p_{\mathcal T}\alpha=0,
        \end{cases}
    \end{equation}
where $p_{\mathcal T}$ is the projector onto $\mathcal T$. We call the second and third conditions in \eqref{tau_strcontr} respectively the structural and the degeneracy constraints.
\end{lemma}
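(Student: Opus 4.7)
The ``only if'' direction is immediate, so I focus on the converse. My strategy is to reduce the claim to the non-degenerate structural lemma \cref{strconstr_free} by splitting $\mathrm{Ker}\, W^{\Sigma,(2,1)}_1$ using the complement $J$. The first step I would carry out is to verify the inclusion
\begin{equation*}
\Ima\varrho^{(1,2)}\big|_{\mathrm{Ker}\, W^{\Sigma,(1,2)}_1}\subseteq \mathrm{Ker}\, W^{\Sigma,(2,1)}_1,
\end{equation*}
i.e.\ that $e\wedge[e,c]=0$ whenever $e\wedge c=0$. This will follow from the graded Leibniz identity for the bracket of \cref{generalizedlie}, after noting that the only potentially obstructing term of the form $[e,e]\wedge c$ vanishes at the relevant bidegree. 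Combined with the defining decomposition $\Omega^{2,1}_\Sigma = \Ima\varrho^{(1,2)}|_{\mathrm{Ker}\, W^{\Sigma,(1,2)}_1} \oplus J$, this yields the refined splitting $\mathrm{Ker}\, W^{\Sigma,(2,1)}_1 = \Ima\varrho^{(1,2)}|_{\mathrm{Ker}\, W^{\Sigma,(1,2)}_1} \oplus \mathcal T$.

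Given the three hypotheses, the first condition places $\alpha$ in $\mathrm{Ker}\, W^{\Sigma,(2,1)}_1$, and the splitting writes it uniquely as $\alpha=\alpha_\perp+p_{\mathcal T}\alpha$, with $\alpha_\perp\in\Ima\varrho^{(1,2)}|_{\mathrm{Ker}\, W^{\Sigma,(1,2)}_1}$. The degeneracy constraint kills $p_{\mathcal T}\alpha$, reducing the task to proving $\alpha_\perp=0$; simultaneously it collapses the structural constraint to $e_n\alpha_\perp\in\Ima W^{\Sigma,(1,1)}_1$. At this stage the pair $(\alpha_\perp\in\mathrm{Ker}\, W^{\Sigma,(2,1)}_1,\ e_n\alpha_\perp\in\Ima W^{\Sigma,(1,1)}_1)$ is exactly the input required by \cref{strconstr_free}, with $e_n$ providing the completion of $\{e(v_1),e(v_2),e(v_3)\}$ to a basis of $i^*\mathcal V$. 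Invoking that lemma yields $\alpha_\perp=0$, and hence $\alpha=0$.

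The main obstacle I anticipate is the verification of the restricted splitting above, i.e.\ the algebraic identity $e\wedge[e,c]=0$ for $c\in\mathrm{Ker}\, W^{\Sigma,(1,2)}_1$; once this is in hand, the rest is bookkeeping plus a direct application of \cref{strconstr_free}. Conceptually, this identity is what guarantees that $\mathcal T$ sits inside $\mathrm{Ker}\, W^{\Sigma,(2,1)}_1$ transversely to $\Ima\varrho^{(1,2)}|_{\mathrm{Ker}\, W^{\Sigma,(1,2)}_1}$, and it explains why in the null-boundary regime one genuinely needs the extra degeneracy constraint $p_{\mathcal T}\alpha=0$ on top of the structural one that was already present in the non-degenerate case.
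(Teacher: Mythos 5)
Your argument is correct, and its logical core is the intended one: the statement is a corollary of \cref{strconstr_free} precisely because, once $p_{\mathcal T}\alpha=0$ is imposed, the structural constraint reads $e_n\alpha\in\Ima W_1^{\Sigma,(1,1)}$, which together with $\alpha\in\mathrm{Ker}W_1^{\Sigma,(2,1)}$ is exactly the hypothesis of \cref{strconstr_free}. The splitting machinery you build around this is sound but logically superfluous here. The inclusion $\Ima\varrho^{(1,2)}|_{\mathrm{Ker}W_1^{\Sigma,(1,2)}}\subseteq\mathrm{Ker}W_1^{\Sigma,(2,1)}$, which you flag as the main obstacle, does hold and is easy: since the bracket contracts exactly one internal index of each argument, $[e,e\wedge c]=[e,e]\wedge c\pm e\wedge[e,c]$, and $[e,e]=\eta_{ab}\,e^a\wedge e^b=0$ by the symmetry of $\eta$ against the antisymmetry of the wedge, so $e\wedge c=0$ forces $e\wedge[e,c]=0$; the resulting decomposition $\mathrm{Ker}W_1^{\Sigma,(2,1)}=\Ima\varrho^{(1,2)}|_{\mathrm{Ker}W_1^{\Sigma,(1,2)}}\oplus\mathcal T$ is indeed the conceptual reason why $\mathcal T$ measures the genuinely new degeneracy of the null case. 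But you never actually need to decompose $\alpha$: whatever projector $p_{\mathcal T}$ denotes, the hypothesis $p_{\mathcal T}\alpha=0$ gives $\alpha-p_{\mathcal T}\alpha=\alpha$ identically, so your $\alpha_\perp$ is just $\alpha$ and the refined splitting plays no role in the deduction. Where that splitting (and the associated dimension count) genuinely earns its keep is downstream, in \cref{lemma_tau_1} and in the analysis showing that $R_\tau$ is second class, as carried out in \cite{CCT21}.
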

The next lemma provides a formulation of the degeneracy constraint in terms of an integral functional.
\begin{lemma}\label{lemma_tau_1}
    Let $\alpha\in\Omega_\Sigma^{2,1}$. Then, we have the following equivalence
    	\begin{equation}
        p_{\mathcal T}\alpha=0\quad\Longleftrightarrow\quad\int_\Sigma \tau\alpha=0\quad\forall\tau\in\mathcal{S}.
	   \end{equation}
\end{lemma}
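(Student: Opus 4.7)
The plan is to prove the equivalence by establishing that the pairing $\langle \tau,\alpha\rangle:=\int_\Sigma \tau\alpha$ restricts to a duality between $\mathcal T\subset \Omega_\Sigma^{2,1}$ and $\mathcal S\subset \Omega_\Sigma^{1,3}$. First I would observe that the pointwise trace pairing $\textstyle\bigwedge^3 V\otimes V\to \textstyle\bigwedge^4 V\cong \mathbb R$ is non-degenerate, so the induced bilinear form on $\Omega_\Sigma^{1,3}\times \Omega_\Sigma^{2,1}$ is a non-degenerate pairing of $C^\infty(\Sigma)$-modules. The statement then says precisely that the annihilator of $\mathcal S$ in $\Omega_\Sigma^{2,1}$ is $\ker p_{\mathcal T}$, equivalently that $\mathcal S$ is dual to $\mathcal T$ under this pairing.

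The key step is the annihilation claim: if $p_{\mathcal T}\alpha=0$, then $\int_\Sigma \tau\alpha=0$ for every $\tau\in\mathcal S$. Using the decomposition $\Omega_\Sigma^{2,1}=\mathcal T\oplus J'\oplus \Ima \varrho^{(1,2)}|_{\mathrm{Ker}\, W_1^{\Sigma,(1,2)}}$, where $J=\mathcal T\oplus J'$ splits $J$ along the restriction of $W_1^{\Sigma,(2,1)}$, I would write $\alpha=\alpha_{J'}+[e,\beta]$ with $\beta\in\mathrm{Ker}\, W_1^{\Sigma,(1,2)}$. For the $[e,\beta]$ piece I would apply the graded Leibniz rule for the bracket to $[e,\tau\wedge\beta]$; since $\tau\wedge\beta\in \Omega_\Sigma^{2,5}=\{0\}$, this produces the identity $\tau\wedge[e,\beta]=\pm[e,\tau]\wedge\beta$. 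I would then use $e\beta=0$ together with the two constraints defining $\mathcal S$, namely $e\tau=0$ and $[\tilde e,\tau]=0$, to conclude that the integrand vanishes; the bridge between $[e,\tau]$ and the hypothesis $[\tilde e,\tau]=0$ should come from decomposing $e$ in the frame $\{e(v_1),e(v_2),e(v_3),e_n\}$ of \cref{strconstr_free_deg} and peeling off the component along $e_n$. For the $\alpha_{J'}$ piece, I would exploit the fact that the Riemannian complement used to define $J$ in \cref{fund_spaces} was chosen so that $\Ima \varrho^{(1,2)}|_{\mathrm{Ker}\,W_1^{\Sigma,(1,2)}}$ is orthogonal to $J$; combined with the kernel conditions on $\tau$, this forces the pairing with $\alpha_{J'}$ to vanish.

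The reverse direction then reduces to showing that the pairing restricted to $\mathcal T\times\mathcal S$ is non-degenerate in the $\mathcal T$ slot: given any non-zero $\alpha_{\mathcal T}\in\mathcal T$, there exists $\tau\in\mathcal S$ with $\int_\Sigma \tau\alpha_{\mathcal T}\neq 0$. This is a pointwise linear-algebra claim that I would verify by choosing local bases adapted to $\{e(v_1),e(v_2),e(v_3),e_n\}$ on the fibres of $i^*\mathcal V$, writing out the defining constraints of $\mathcal T$ and $\mathcal S$ in these bases, and checking directly that the induced fibrewise pairing has trivial left kernel. Combining this with the annihilation claim applied to $\alpha-p_{\mathcal T}\alpha$, one immediately gets $(\Leftarrow)$, while $(\Rightarrow)$ is exactly the annihilation claim applied to $\alpha$ itself.

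I expect the main obstacle to be the $[e,\beta]$ part of the annihilation claim: because $\mathcal S$ is defined using the degenerate vielbein $\tilde e$ rather than the genuine boundary object $e$, the vanishing of $[e,\tau]\wedge\beta$ really depends on the null structure of the boundary and on the compatibility of the splitting $e=\tilde e+(\text{component along }e_n)$ with the kernel conditions. This is precisely where the argument diverges from the non-degenerate case of \cite{CCR2022,CCS2020}, where the analogue of $\mathcal S$ involves only the map $\varrho$ and the Leibniz computation closes immediately.
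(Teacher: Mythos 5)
Your high-level architecture is reasonable: reading the statement as ``the annihilator of $\mathcal{S}$ under the fiberwise perfect pairing $\Omega^{1,3}_\Sigma\times\Omega^{2,1}_\Sigma\to\Omega^3(\Sigma)$ equals $\mathrm{Ker}\,p_{\mathcal T}$'' and splitting this into an annihilation claim plus non-degeneracy of the induced pairing $\mathcal S\times\mathcal T$ is the right organization, and the Leibniz step $\tau\wedge[e,\beta]=\pm[e,\tau]\wedge\beta$ (from $\tau\wedge\beta\in\Omega^{2,5}_\Sigma=0$) is the right tool for the $\Ima \varrho^{(1,2)}|_{\mathrm{Ker}W_1^{\Sigma,(1,2)}}$ summand. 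Note, however, that the paper itself does not prove this lemma but defers entirely to \cite{CCT21}, where the verification is carried out by writing $\tau$ and $\alpha$ in the adapted diagonal degenerate frame (cf.\ \cref{diagtetr} and the ``independent components'' $\mathcal X,\mathcal Y$ used throughout); every fiberwise computation you postpone to ``a local basis check'' is precisely the substance of that proof, so your proposal is at this stage a plan rather than a proof.

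The concrete gap is your treatment of the $J'$ summand. You claim $\int_\Sigma\tau\,\alpha_{J'}=0$ because $J$ was chosen orthogonal to $\Ima\varrho^{(1,2)}|_{\mathrm{Ker}W_1^{\Sigma,(1,2)}}$ with respect to the auxiliary Riemannian metric of \cref{fund_spaces}. This is a non sequitur: that orthogonality lives in an arbitrary auxiliary inner product on $\Omega^{2,1}_\Sigma$ and has no bearing on the trace pairing against elements of $\Omega^{1,3}_\Sigma$, so it cannot force the vanishing of $\int_\Sigma\tau\alpha_{J'}$. Moreover, a complement $J'$ of $\mathcal T$ inside $J$ is not canonical: replacing $x\in J'$ by $x+t$ with $0\neq t\in\mathcal T$ yields another admissible complement, and since by your own second step the pairing of $\mathcal S$ with $\mathcal T$ has trivial kernel on the $\mathcal T$ side, at most one of these choices can lie in the annihilator of $\mathcal S$. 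Hence the inclusion $\mathrm{Ker}\,p_{\mathcal T}\subset\mathcal S^{\perp}$ is not a formal consequence of the definitions; it has to be verified for the specific projector the construction uses, by the explicit component computation in the adapted frame (where one finds that $\int_\Sigma\tau\alpha$ only sees the two combinations of components of $\alpha$ that span $\mathcal T$, and simultaneously that $\dim\mathcal S=\dim\mathcal T$, which your dimension count also needs). Until that computation is supplied, the forward implication — and with it the whole equivalence — remains unproved.
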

\begin{proof}
    See \cite{CCT21}.
\end{proof}
\begin{remark}\label{remark_wlog1}
    As long as we do not specify any $\alpha$, these two lemmas remain purely geometrical and do not depend on the properties of the field equations. We will then be able to use these results for the interactive theories where the equivalence condition on the boundary will differ from $d_\omega e=0$ and $e d_\omega e=0$ (since the field equations will be different themselves). Therefore, in general, we need to specify $\alpha$ for each different theory. In particular, for the Palatini--Cartan theory, $\alpha=d_\omega e$ and the structural and the degeneracy constraints read
    \begin{equation}\label{tau_strcontr_PC}
        \begin{cases}  e_n (d_\omega e- p_{\mathcal{T}}d_\omega e) \in \Ima W_{1}^{\Sigma,(1,1)}\\[6pt]
        p_{\mathcal T}d_\omega e=0.
        \end{cases}
    \end{equation}
\end{remark}
\begin{remark}\label{geom_impl}
    It is important to emphasize that Eq.s \eqref{tau_strcontr} are trivially equivalent to the structural constraint
    \begin{align}\label{str_constr_nd}
        e_n \alpha\in \Ima W_{1}^{\Sigma,(1,1)}
    \end{align}
    in the non-degenerate case. Nonetheless, the introduction of this split plays a crucial role in the analysis of the degenerate theory. More specifically, apart from $p_{\mathcal T}$ not being trivial, \cref{str_constr_nd} alone will not be sufficient to uniquely fix a representative of the equivalence class defining the symplectic space (see \cref{mastertheorem1}). In other words, since in the non-degenerate case $p_{\mathcal T}\alpha=0$ holds trivially, we can infer that the second equation in \eqref{tau_strcontr} is the most general form of the structural constraint of the theory, whose geometrical implications are only visible in the degenerate case. In fact, the peculiar integral condition of the degenerate case, introduced in \cref{lemma_tau_1}, carries significant consequences. It can be interpreted as a modification of the set of constraints of the theory by incorporating a new functional constraint. For $\alpha=d_\omega e$ (the case of the Palatini--Cartan theory), this is denoted as
    	\begin{equation}\label{constrtau}
        R_\tau=\int_\Sigma \tau d_\omega e.
	\end{equation}
 Further discussions of this matter will be presented in the next section.
\end{remark}

\subsection{Fixing the representative}\label{sect:fixrepr}
The reduction by the kernel of the presymplectic form, as shown in \cite{CS2017}, is equivalent to a quotient space with an equivalence relation on the connection form, as stated in the following theorem.

\begin{theorem}\label{mastertheorem1}
The geometric phase space for the Palatini--Cartan theory is the symplectic manifold ($\mathcal F_{\Sigma}, \varpi)$ given by the following equivalence relation on the space of pre-boundary fields $\mathcal{\tilde{F}}_\Sigma$
\begin{align}\label{eqcl}
    \omega^\prime\sim\omega\quad\Longleftrightarrow\quad\omega^\prime-\omega \in\mathrm{Ker}W_{1}^{\Sigma, (1,2)}
\end{align}
and the symplectic form
    \begin{align}
        \varpi=\int_{\Sigma}e\delta e \delta [\omega].
    \end{align}
We refer to this equivalence class as $\mathcal A(i^* P)_{red}$.
\end{theorem}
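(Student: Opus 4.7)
The strategy is to compute the kernel of the pre-symplectic form $\tilde\varpi=\int_\Sigma e\,\delta e\,\delta\omega$ directly, prove that any tangent vector in this kernel has vanishing vielbein component, and conclude that the only residual gauge freedom consists of shifts of $\omega$ by elements of $\ker W_{1}^{\Sigma,(1,2)}$.

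First I write a generic tangent vector at $(e,\omega)\in\mathcal{\tilde F}_\Sigma$ as $(\xi,\beta)\in\Omega^{1,1}_\Sigma\times\Omega^{1,2}_\Sigma$ and contract it into $\tilde\varpi$. Requiring that the resulting functional linear in the test variations $(\delta e,\delta\omega)$ vanish separates into the two independent conditions
\begin{align*}
    e\wedge\xi=0 \text{ in } \Omega^{2,2}_\Sigma,\qquad e\wedge\beta=0 \text{ in } \Omega^{2,3}_\Sigma,
\end{align*}
that is $\xi\in\ker W_{1}^{\Sigma,(1,1)}$ and $\beta\in\ker W_{1}^{\Sigma,(1,2)}$.

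The key technical point is the injectivity of $W_{1}^{\Sigma,(1,1)}$ in the null-boundary setting. I choose pointwise a frame $\{u_1,u_2,u_3,u_4\}$ of $V$ such that $\{u_1,u_2,u_3\}$ span the image of the injective linear map $i^{*}e_x\colon T_x\Sigma\to V$ (which remains injective on a null boundary, even though its image is $\eta$-degenerate, since the bulk vielbein is an isomorphism of vector bundles) and $u_4$ is transverse. Writing $e=e^{i}\otimes u_i$ with $\{e^i\}_{i=1,2,3}$ a coframe on $\Sigma$ and $\xi=\xi^{a}\otimes u_a$ with $\xi^a\in\Omega^1_\Sigma$, the equation $e\wedge\xi=0$ decomposes in the basis of $\textstyle\bigwedge^{2}V$ into the system $e^{a}\wedge\xi^{b}-e^{b}\wedge\xi^{a}=0$ with the convention $e^{4}\equiv 0$. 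The equations with a free index equal to $4$ force $\xi^{4}=0$, and those with $a,b\in\{1,2,3\}$, after expanding $\xi^a=\xi^a_i e^i$ and reading off coefficients in the basis $\{e^i\wedge e^j\}$ of $\Omega^2_\Sigma$, reduce to the same antisymmetry/consistency system that fixes the Christoffel symbols of a torsion-free connection and has only the trivial solution $\xi^a_i=0$.

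Therefore $\ker\tilde\varpi=\{(0,\beta):\beta\in\ker W_{1}^{\Sigma,(1,2)}\}$, and integrating this involutive distribution (it is involutive because $\tilde\varpi$ is closed) produces exactly the equivalence relation \eqref{eqcl} on $\omega$, with $e$ untouched. Since $\tilde\varpi$ is closed, Cartan's formula gives $L_X\tilde\varpi=d\,\iota_X\tilde\varpi=0$ for every $X\in\ker\tilde\varpi$, so $\tilde\varpi$ is basic for $p\colon\mathcal{\tilde F}_\Sigma\to\mathcal F_\Sigma$ and descends to a unique 2-form $\varpi$ with $p^{*}\varpi=\tilde\varpi$; written in terms of any representative of the class it takes the stated form $\int_\Sigma e\,\delta e\,\delta[\omega]$, and non-degeneracy on $\mathcal F_\Sigma$ is automatic by the quotient construction. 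The hardest step is the frame-adapted injectivity argument, since the standard proof for non-degenerate boundaries exploits an $\eta$-orthonormal splitting of $V$ which is unavailable when $i^{*}(e^{*}\eta)$ is degenerate; I must therefore rely only on the pointwise injectivity of $i^{*}e$ as a linear map and not on any signature hypothesis on its image.
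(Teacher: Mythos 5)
Your proposal is correct and follows essentially the same route as the proof the paper defers to (\cite{CS2017}): contract a tangent vector into $\tilde\varpi$, split the vanishing condition into $e\wedge\xi=0$ and $e\wedge\beta=0$, kill $\xi$ by the pointwise injectivity of $W_1^{\Sigma,(1,1)}$ (which the paper instead quotes from \cite{Can2021}), and let $\tilde\varpi$ descend to the quotient by the resulting involutive kernel distribution. Your explicit frame computation correctly isolates the one point that needs care here, namely that the injectivity of $W_1^{\Sigma,(1,1)}$ uses only the injectivity of $i^*e$ and no signature hypothesis, which is exactly why the statement survives on a null boundary.
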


\begin{proof}
    See \cite{CS2017}.
\end{proof}

\begin{remark}
    To study the reduced phase space of the theory, we make use of representatives for the equivalence classes defined in \eqref{eqcl}. In the non-degenerate case, these representatives are uniquely determined by the structural constraint itself. In other words, ensuring the equivalence of $d_\omega e=0$ and $e d_\omega e=0$ on the boundary, is enough to determine uniquely the representatives of the equivalence classes defined in \eqref{eqcl}. However, in the degenerate case, the structural constraint and the degenerate constraint (or its integral form $R_\tau$), despite the fact that they indeed ensure on the boundary the equivalence mentioned above, are not sufficient to uniquely assign a representative to each equivalence class. Therefore, it is necessary to seek an alternative way to guarantee the unambiguous determination of these representatives.
 \end{remark}
 We can accomplish that through the following lemma.
\begin{lemma}\label{uniquerep_free}
Let $i^*g$ be degenerate. Then, given $\omega\in\Omega_\Sigma^{1,2}$ and $e_n\in\Omega_\Sigma^{0,1}$ as in \cref{strconstr_free_deg}, the conditions
    \begin{equation}
        \begin{cases}\label{eq:uniquerep_free}
             e_n (d_\omega e- p_{\mathcal{T}}(d_{\omega} e)) \in \Ima W_{1}^{\Sigma,(1,1)}\\[6pt]
           p_{\mathcal{K}} \omega = 0
        \end{cases}
    \end{equation}
    uniquely define a representative of the equivalence class $[\omega]\in\mathcal A(i^* P)_{red}$.
\end{lemma}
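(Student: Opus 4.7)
I would prove uniqueness and existence separately within each class $[\omega]\in\mathcal A(i^*P)_{red}$; both reduce to a single algebraic statement about shifts $v\in\mathrm{Ker}\,W_{1}^{\Sigma,(1,2)}$ analysed in the null frame $\{e(v_1),e(v_2),e(v_3),e_n\}$ of $i^*\mathcal V$ furnished by \cref{strconstr_free_deg}.

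For \textbf{uniqueness}, suppose $\omega,\omega'$ both satisfy \eqref{eq:uniquerep_free} with $\omega'\sim\omega$, and set $v:=\omega'-\omega\in\mathrm{Ker}\,W_{1}^{\Sigma,(1,2)}$. Subtracting the second lines of \eqref{eq:uniquerep_free} gives $p_{\mathcal K}v=0$. Using $d_{\omega+v}e=d_\omega e+[v,e]$ and observing that, because $v\in\mathrm{Ker}\,W_{1}^{\Sigma,(1,2)}$, the form $[v,e]$ lies (up to sign) in $\Ima\bigl(\varrho^{(1,2)}|_{\mathrm{Ker}\,W_{1}^{\Sigma,(1,2)}}\bigr)$, which is complementary to $J\supset\mathcal T$, one obtains $p_{\mathcal T}[v,e]=0$. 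Subtracting the two structural constraints then collapses to $e_n[v,e]\in\Ima W_{1}^{\Sigma,(1,1)}$, and the lemma reduces to the algebraic claim that $v\in\mathrm{Ker}\,W_{1}^{\Sigma,(1,2)}$ together with $p_{\mathcal K}v=0$ and $e_n[v,e]\in\Ima W_{1}^{\Sigma,(1,1)}$ forces $v=0$.

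For \textbf{existence}, given an arbitrary representative $\omega_0$ of $[\omega]$, I would construct $v$ in two steps. Fix a complement $U$ of $\mathcal K$ inside $\mathrm{Ker}\,W_{1}^{\Sigma,(1,2)}$. Since $\mathcal K\subset\mathrm{Ker}\,\varrho^{(1,2)}$, shifts by $\mathcal K$ leave $[v,e]$ unchanged, so the two conditions of \eqref{eq:uniquerep_free} decouple. First I would use the map $v\mapsto e_n[v,e]\bmod\Ima W_{1}^{\Sigma,(1,1)}$, which is injective on $U$ by the uniqueness computation above and surjective onto the set of possible structural-violations by a dimension count in the null frame, to find $v_1\in U$ making $\omega_0+v_1$ solve the structural constraint; then I would add the unique $v_2\in\mathcal K$ with $p_{\mathcal K}v_2=-p_{\mathcal K}(\omega_0+v_1)$, which exists since $p_{\mathcal K}|_{\mathcal K}=\mathrm{id}$.

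\textbf{Main obstacle.} The substance is the basis computation underlying both halves. I would expand $v\in\Omega^{1,2}_\Sigma$ in the null frame $\{e(v_1),e(v_2),e(v_3),e_n\}$, use the null condition in the $e_n$-direction to identify which components of $[v,e]$ and of $e_n[v,e]$ survive modulo $\Ima W_{1}^{\Sigma,(1,1)}$, and match them against the components killed by $p_{\mathcal K}v=0$; when done carefully, no residual freedom remains. The failure of the simpler non-degenerate bookkeeping in the null case --- already foreshadowed in \cref{geom_impl} --- is precisely what forces the additional condition $p_{\mathcal K}\omega=0$ here, beyond the structural constraint alone used in the analysis of \cite{CS2017}.
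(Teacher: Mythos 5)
Your architecture --- uniqueness and existence treated separately, both reduced to linear algebra on shifts $v\in\mathrm{Ker}\,W_{1}^{\Sigma,(1,2)}$ --- is sound, and the uniqueness half is correct. It can in fact be closed without any frame computation: since $[e,e]=0$, the Leibniz rule gives $e\wedge[v,e]=0$ whenever $e\wedge v=0$, so $[v,e]\in\mathrm{Ker}\,W_{1}^{\Sigma,(2,1)}$; applying \cref{strconstr_free} to $\alpha=[v,e]$ then upgrades $e_n[v,e]\in\Ima W_{1}^{\Sigma,(1,1)}$ to $[v,e]=0$, i.e.\ $v\in\mathcal K$, and $p_{\mathcal K}v=0$ forces $v=0$. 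Your reduction to this algebraic claim (including the observation $p_{\mathcal T}[v,e]=0$) is correct.

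The genuine gap is in existence. The map $\Phi\colon v\mapsto e_n[v,e]\bmod\Ima W_{1}^{\Sigma,(1,1)}$, restricted to your complement $U$ of $\mathcal K$, is injective of rank $\dim\mathrm{Ker}\,W_{1}^{\Sigma,(1,2)}-\dim\mathcal K$, whereas the quotient $\Omega_\Sigma^{2,2}/\Ima W_{1}^{\Sigma,(1,1)}$ has dimension $\dim\mathrm{Ker}\,W_{1}^{\Sigma,(2,1)}=\dim\mathrm{Ker}\,W_{1}^{\Sigma,(1,2)}$ by \cref{lemma_map_en_0}; the two differ by $\dim\mathcal K>0$ precisely on a null boundary. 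Hence $\Phi(U)$ is a \emph{proper} subspace of the quotient, and ``surjective by a dimension count'' cannot be right --- the count only closes when $\mathcal K=0$, which is why the non-degenerate argument of \cite{CS2017} is simpler. What must actually be proved is that the specific obstruction class $e_n\bigl(d_{\omega_0}e-p_{\mathcal T}(d_{\omega_0}e)\bigr)\bmod\Ima W_{1}^{\Sigma,(1,1)}$ lies in $\Phi(U)$, i.e.\ that removing the $\mathcal T$-component of $d_{\omega_0}e$ kills exactly the directions $\Phi$ cannot reach. This is not a formality: under the identification of the quotient with $e_n\,\mathrm{Ker}\,W_{1}^{\Sigma,(2,1)}$, the image $\Phi(U)$ corresponds to $\Ima\varrho^{(1,2)}|_{\mathrm{Ker}W_{1}^{\Sigma,(1,2)}}$, while the part of $d_{\omega_0}e$ lying outside $\mathrm{Ker}\,W_{1}^{\Sigma,(2,1)}$ produces cross-terms after multiplication by $e_n$ whose class must be checked to land there. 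Until that inhomogeneous solvability is verified, a class $[\omega]$ might contain no representative satisfying \eqref{eq:uniquerep_free} at all; your closing ``main obstacle'' computation, as described, only addresses the homogeneous (uniqueness) problem.
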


\begin{proof}
    See \cite{CCT21}.
\end{proof}

\begin{remark}
    In \cite{CCT21}, it has been proved that the analysis is independent of the choice of the representative of the equivalence class \eqref{eqcl}. In more rigorous terms, for each choice of the representatives there is a canonical symplectomorphism between the symplectic space defined by representatives and the geometric phase space of the theory.
\end{remark}

\begin{remark}
    It is important to highlight that, in the non-degenerate case, the subspaces $\mathcal{T}$, $\mathcal{S}$, and $\mathcal{K}$ of \cref{fund_spaces} are trivial. It follows that the projectors $p_\mathcal{K}$ and $p_\mathcal{T}$ are also trivial. Once again, this means that, in the non-degenerate theory, the structural constraint alone serves the purpose of establishing the equivalence between $d_\omega e=0$ and $e d_\omega e=0$ on the boundary, as well as uniquely determining the representatives of the equivalence classes defined in \cref{eqcl}.
\end{remark}

We have seen that, on a null-boundary, we need both the structural and the degeneracy constraints together with the additional equation $p_\mathcal{K}\omega=0$ in order to both guarantee the equivalence between $d_\omega e=0$ and $e d_\omega e=0$ on the boundary and uniquely fix the representative of the equivalence class $[\omega]\in\mathcal A(i^* P)_{red}$.\\
More specifically, the role of the structural constraint together with the integral constraint $R_\tau$ is the one of ensuring the aforementioned equivalence condition, whereas, the structural constraint together with $p_\mathcal{K}\omega=0$ will uniquely fix the representatives.

We display now the constraints of the theory.
\begin{definition}\label{constraints_gravity-deg}
Let\footnote{The notation $[1]$ indicates a shift in parity.} $c\in\Omega_\Sigma^{0,2}[1]$, $\xi\in\mathfrak{X}(\Sigma)[1]$, $\lambda\in C^\infty(\Sigma)[1]$ and $\tau\in\mathcal{S}[1]$. Then, we define the following functionals
    \begin{align}
		L_c & = \int_\Sigma ced_\omega e \\[3pt]
		P_\xi & = \int_\Sigma \frac{1}{2}\iota_\xi(e^2)F_\omega+\iota_\xi(\omega-\omega_0)ed_\omega e \\[3pt]
		H_\lambda & = \int_\Sigma \lambda e_n\Big(eF_\omega+\frac{\Lambda}{3!}e^3\Big)\\[3pt]
		R_\tau & = \int_\Sigma \tau d_\omega e.
	\end{align}
We refer to these as the constraints of the Palatini--Cartan (degenerate) theory.
\end{definition}
We are now able to determine the algebra of the constraints of the theory. This differs from the one of the non-degenerate theory, since the new constraint $R_\tau$ changes the nature of the Poisson brackets, which become second class.
\begin{theorem}\label{thm:Brackets_constraints}
    Let $i^*g$ be degenerate. Then the structure of the Poisson brackets of the constraints $L_c$, $P_{\xi}$, $H_{\lambda}$ and $R_{\tau}$ is given by the following expressions
    \begin{align*}
    &\begin{aligned}
        	&\{L_c, L_c\}  = - \frac{1}{2} L_{[c,c]} & \qquad\qquad
       	& \{P_{\xi}, P_{\xi}\}  =  \frac{1}{2}P_{[\xi, \xi]}- \frac{1}{2}L_{\iota_{\xi}\iota_{\xi}F_{\omega_0}} \\
       	& \{L_c, P_{\xi}\}  =  L_{\mathcal{L}_{\xi}^{\omega_0}c} &
       	& \{H_{\lambda},H_{\lambda}\}  \approx 0 \\
       	&  \{L_c, R_{\tau}\} =  -R_{p_{\mathcal{S}}[c, \tau]} & 
       	& \{P_{\xi},R_{\tau}\}  = R_{p_{\mathcal{S}}\mathcal{L}_{\xi}^{\omega_0}\tau}.\\
       	&\{ R_{\tau}, H_{\lambda} \}   \approx G_{\lambda \tau}  &
 		& \{R_{\tau},R_{\tau}\}  \approx F_{\tau \tau}
        \end{aligned}\\
	&\{L_c,  H_{\lambda}\}
        = - P_{X^{(a)}} + L_{X^{(a)}(\omega - \omega_0)_a} - H_{X^{(n)}}\\        
        &\{P_{\xi},H_{\lambda}\} =  P_{Y^{(a)}} -L_{ Y^{(a)} (\omega - \omega_0)_a} +H_{ Y^{(n)}} 
    \end{align*}
    with $X=[c,\lambda e_n]$ and $Y=\mathcal{L}_\xi^{\omega_0}(\lambda e_n)$ and where the superscripts $(a)$ and $(n)$ describe their components with respect to $e_a,e_n$. Furthermore $F_{\tau \tau}$ and $G_{\lambda \tau}$ are functionals of $e$, $\omega$, $\tau $ and $\lambda$ that are not proportional to any other constraint.
\end{theorem}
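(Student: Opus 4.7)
The plan is to compute each Poisson bracket explicitly from the symplectic form $\varpi = \int_\Sigma e\,\delta e\,\delta[\omega]$ and the defining relation $\iota_{\mathbb{X}_F}\varpi = \delta F$. The first step is to derive, for each constraint $F \in \{L_c, P_\xi, H_\lambda, R_\tau\}$, the components of its Hamiltonian vector field $(\mathbb{X}_F^{(e)}, \mathbb{X}_F^{(\omega)})$ by writing $\delta F$ and extracting the coefficients of $\delta e$ and $\delta[\omega]$. The representative-fixing condition $p_\mathcal{K}\omega = 0$ from \cref{uniquerep_free} must be imposed throughout, which forces the $\omega$-components of the vector fields to lie in a complement of $\mathrm{Ker}\,W_1^{\Sigma,(1,2)}$; the auxiliary equations that pin them down are solved by appealing to the structural constraint of \cref{strconstr_free_deg}.

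With the Hamiltonian vector fields in hand, the brackets $\{L_c,L_c\}$, $\{L_c,P_\xi\}$, $\{P_\xi,P_\xi\}$, $\{L_c,R_\tau\}$ and $\{P_\xi,R_\tau\}$ can be obtained by recognizing that $L_c$ and $P_\xi$ act as generators of internal $\mathfrak{so}(3,1)$-rotations and of (deformed) tangential diffeomorphisms respectively. Thus each such bracket reduces to reading off the action of the associated symmetry on the arguments $c$, $\xi$, $\tau$ of the other constraint and then re-expressing the result as a constraint of the same type. The projection $p_\mathcal{S}$ appearing in $\{L_c,R_\tau\}$ and $\{P_\xi,R_\tau\}$ arises because the resulting test parameter must remain in $\mathcal{S}$; components in the orthogonal complement pair with elements of $\mathrm{Ker}\,\tilde\varrho^{(1,3)}$ and drop out under the remaining $d_\omega e$ upon using the structural constraint. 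The bracket $\{H_\lambda,H_\lambda\}\approx 0$ follows because $H_\lambda$ is polynomial in $\lambda e_n$ with $e_n$ constant along the field $e$, so the only surviving contribution is proportional to the equations of motion.

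For $\{L_c,H_\lambda\}$ and $\{P_\xi,H_\lambda\}$ the computation is substantially longer. Acting with $\mathbb{X}_{L_c}$ (resp.\ $\mathbb{X}_{P_\xi}$) on $H_\lambda$ produces a term depending on $X = [c,\lambda e_n]$ (resp.\ $Y = \mathcal{L}_\xi^{\omega_0}(\lambda e_n)$). Decomposing $X$ and $Y$ in the basis $\{e_a,e_n\}$ from \cref{strconstr_free_deg} splits the contribution into a normal part, which yields the $H_{X^{(n)}}$ (resp.\ $H_{Y^{(n)}}$) term, and tangential components, which, once combined with the corrections coming from $\omega - \omega_0$ through the structural constraint, recombine into $P_{X^{(a)}} - L_{X^{(a)}(\omega-\omega_0)_a}$ (resp.\ the analogous expression for $Y$). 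This is essentially the same computation as in the non-degenerate case of \cite{CCR2022}, but adapted to the null representative.

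The main obstacle, and the genuinely new content compared to the non-degenerate theory, is the last two brackets $\{R_\tau,H_\lambda\}\approx G_{\lambda\tau}$ and $\{R_\tau,R_\tau\}\approx F_{\tau\tau}$. After substituting the Hamiltonian vector fields, one obtains local integral expressions that cannot be cast as linear combinations of $L_c$, $P_\xi$, $H_\lambda$, $R_\tau$ with constraint-valued coefficients. To establish this rigorously, one exhibits a specific boundary configuration $(e,\omega)$ and specific test parameters $(\tau,\lambda)$ (resp.\ $(\tau,\tau')$) on which $G_{\lambda\tau}$ (resp.\ $F_{\tau\tau}$) is non-zero while all of $L_c$, $P_\xi$, $H_\lambda$, $R_\tau$ vanish identically for every choice of the other parameters. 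Constructing such explicit configurations, using the freedom in choosing $\tau \in \mathcal{S}[1]$ and the degenerate structure of $i^*g$, is the technical heart of the proof and is precisely what forces the constraint set to be second class, in contrast to the non-degenerate situation.
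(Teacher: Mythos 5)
Your outline follows the same strategy as the proof this paper actually relies on: \cref{thm:Brackets_constraints} is not proved here but imported from \cite{CCT21} (Theorem 30 there), where the argument is exactly the one you describe --- compute the Hamiltonian vector fields from $\iota_{\mathbb X}\varpi=\delta F$ with the representative fixed as in \cref{uniquerep_free}, evaluate the brackets pairwise, and show that the brackets of $R_\tau$ with $H_\lambda$ and with itself are not linear combinations of the constraints. So the route is right, but for this theorem the route is not the content: everything lives in the computations you defer. You never write down the Hamiltonian vector fields, never derive $G_{\lambda\tau}$ or $F_{\tau\tau}$, and the step you yourself call the technical heart --- exhibiting configurations on the common zero locus where these functionals are non-zero --- is only announced, not performed. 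As it stands the proposal certifies the shape of the table but proves none of its entries; in particular it gives no evidence for the coefficients and signs (e.g.\ $-\tfrac12 L_{[c,c]}$ rather than $+\tfrac12$), which is precisely what such a computation is for.

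One concrete omission that would derail the calculation if left unaddressed: $\tau$ lies in $\mathcal S=\mathrm{Ker}W_{1}^{\Sigma,(1,3)}\cap\mathrm{Ker}\tilde\varrho^{(1,3)}$, and $\mathcal S$ is built from $e$, so $\tau$ has a non-trivial variation along $e$. This produces the extra term (written $g(\tau,\omega,e)$ in the proofs of \cref{poiss_brack_scalar} and \cref{poiss_brack_YM}) in the $\omega$-component of the Hamiltonian vector field of $R_\tau$, and it contributes to every bracket involving $R_\tau$; simply "extracting the coefficients of $\delta e$ and $\delta[\omega]$" from $\int_\Sigma\tau\,\delta(d_\omega e)$ would miss it. Relatedly, the mechanism you give for the appearance of $p_{\mathcal S}$ is garbled: what is actually needed is \cref{cortaue_n} (every $\tau\in\mathcal S$ has the form $e_n\beta$ with $\beta\in\mathrm{Ker}W_1^{\Sigma,(1,2)}$), which lets one discard the component of $[c,\tau]$ outside $\mathcal S$ against the structural constraint. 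Finally, for the non-proportionality of $F_{\tau\tau}$ and $G_{\lambda\tau}$, the reference argues by expressing them through the independent components $\mathcal X,\mathcal Y$ of $\tau$ (as this paper does for $K^{A}_{\lambda\tau}$ and $K^{\psi}_{\lambda\tau}$) and observing that no linear combination of $L$, $P$, $H$, $R$ can reproduce them; your tactic of evaluating at a point of the zero locus is logically sufficient, but you would still have to construct that point, and you have not.
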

\begin{remark}
	The symbol $\approx$ indicates the identity on the zero locus of the constraints. In particular, this means that those brackets written with this symbol are not a linear combination of the constraints themselves. On the other hand, all the brackets written with a = vanish on the zero locus, for example $\{L_c,L_c\}\approx0$.
\end{remark}
For the details of the proof and the definition of first and second class constraints, we refer to \cite{CCT21}.
\begin{remark}
    The former results still hold in the case where the boundary metric has some extra degeneracy, apart from the one coming from the nature of the null-boundary. In the upcoming sections, we will, for the sake of simplicity, assume that the restriction of the metric, excluding the degenerate direction of the boundary, remains non-degenerate.
\end{remark}

\begin{remark}
    The distinctive feature of the degenerate theory, highlighted in \cite{CCT21} and summarized in \cref{sectio_firstsecond}, is that the additional constraint $R_\tau$ turn out to be second-class (see \cref{def_firstsecond}). As discussed in \cref{sectio_firstsecond}, this implies that the requirement to uniquely determine a representative for $[\omega]$ results in the reduced phase space of the theory having a dimension of two, compared to the non-degenerate theory, which had a dimension of four.
\end{remark}

The first step in each of the following sections will be the one of finding the correct set of equations as a choice for the structural constraint. Then, we will find the relations for uniquely fixing the representatives. Once established the correct geometrical set-up, we will proceed by computing the algebra of the constraints of the theory at hand. This will be done in the cases of scalar, SU$(n)$ and spinor couplings.

\section{Coupling terms: degenerate structure}
\subsection{Scalar field}\label{sect:scalar}
In the following section, we will dive into the case of the scalar coupling. We stress that we refer to \cite{CCR2022} for what concerns the non-degenerate case.\\
In the canonical formalism, the Palatini--Cartan theory coupled with a scalar field maintains the very same geometrical background of the previous sections with the addition of two new fields, the scalar field $\phi$ and its conjugate momentum (upon equation of motion) $\Pi$.

We must define the building blocks of our scalar Palatini--Cartan theory, starting with the space of fields on $M$, which reads
    \begin{align}
    \mathcal F^{\phi}=\tilde\Omega^{1,1}\times \mathcal A(P)\times C^\infty(M)\times\Omega^{0,1}\ni (e,\omega,\phi,\Pi),
    \end{align}
    and the action functional
    \begin{align}\label{eq:actionPCT_scalar}
    \mathcal S^{\phi}=\mathcal S_{PC}+\int_{M}\frac{1}{6}e^{3}\Pi d\phi+\frac{1}{48}e^4(\Pi,\Pi),
    \end{align}
where the brackets indicates the inner product of the Minkowski bundle. It follows that the Euler-Lagrange equations of the theory are given by
    \begin{align}
	ed_\omega e&=0\\\label{EL_domegae_scalar}
	eF_\omega+\frac{\Lambda}{6}e^{3}+\frac{1}{2}e^{2}\Pi d\phi+\frac{1}{12}e^{3}(\Pi,\Pi)&=0  \\
	d(e^{3}\Pi) &= 0\\
	e^{3}(d\phi-(e,\Pi)) &= 0.
    \end{align}
The variation of the action leads to the following Noether $1$-form on the space of pre-boundary fields\footnote{Note that we keep the same notation for the Noether and the pre-symplectic forms of the previous section even though these are different.}
\begin{align}
    	\tilde{\alpha} = \int_\Sigma \frac{1}{2}e^{2}\delta\omega+\frac{1}{6}e^{3}\Pi\delta\phi,
\end{align}
which gives rise to the following pre-symplectic form
\begin{align}\label{eq:scal_presympl_form}
		\tilde{\varpi} =\delta\tilde{\alpha} = \int_\Sigma e\delta e\delta\omega+\frac{1}{6}\delta(e^3\Pi)\delta\phi.
\end{align}
Similarly to the previous section, we can define the space of pre-boundary fields $\tilde{\mathcal F}^{\phi}_\Sigma$, as in \cref{PC_preboundary} for the Palatini--Cartan theory, by pulling back the fields to the boundary $\Sigma$. Also in this case, we will write the fields on the boundary with the same letters as for those in the bulk.

As shown in \cite{CCR2022}, we are now able to define the geometric phase space of the theory via a reduction through the kernel of the pre-symplectic form.\\
Here is a generalization of \cref{mastertheorem1}.
\begin{theorem}\label{mastertheorem_scalar}
The geometric phase space for the scalar Palatini--Cartan theory is the symplectic manifold $(\mathcal F^{\phi}_\Sigma,\varpi)$ given by the following equivalence relations on the space of pre-boundary fields $\tilde{\mathcal F}^{\phi}_\Sigma$
\begin{align}\label{eqcl_scalar}
    \omega^\prime\sim\omega\quad&\Longleftrightarrow\quad\omega^\prime-\omega \in\mathrm{Ker}W_{1}^{\Sigma, (1,2)}\\[4pt]
    \Pi^\prime\sim\Pi\quad&\Longleftrightarrow\quad\Pi^\prime-\Pi \in\mathrm{Ker}W_{3}^{\Sigma, (0,1)}
\end{align}
and the symplectic form
    \begin{align}
        \varpi=\int_{\Sigma}e\delta e \delta [\omega]+\frac{1}{6}\delta(e^3[\Pi])\delta\phi.
    \end{align}
We refer to these equivalence classes as $\mathcal A(i^* P)_{red}$ and $\Omega_{\Sigma, red}^{0,1}$.
\end{theorem}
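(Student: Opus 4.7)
My plan is to follow the strategy of \cref{mastertheorem1} and compute directly the kernel of the pre-symplectic form \cref{eq:scal_presympl_form}. Expanding $\delta(e^3 \Pi) = 3 e^2 \delta e\, \Pi + e^3 \delta \Pi$, the form $\tilde{\varpi}$ splits into three antisymmetric wedges in the pairs $(\delta e, \delta \omega)$, $(\delta e, \delta \phi)$ and $(\delta \Pi, \delta \phi)$. The new scalar sector introduces a cross-coupling between $\delta e$ and $\delta \phi$, so the gravitational and scalar sectors cannot be treated in isolation from the outset; the bulk of the work is to disentangle this interaction.

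The key step is to impose $\iota_X \tilde{\varpi} = 0$ for a generic tangent vector $X = (X_e, X_\omega, X_\phi, X_\Pi)$ and to read off the coefficient of each of the four test variations. This produces a system of four local equations, two involving the maps $W_1^{\Sigma,(i,j)}$ and two involving $W_3^{\Sigma,(i,j)}$. The $\delta\omega$-coefficient reads $e X_e = 0$, placing $X_e \in \mathrm{Ker}\, W_1^{\Sigma,(1,1)}$; the $\delta\Pi$-coefficient reads $e^3 X_\phi = 0$, which under the standing assumption that the induced metric degenerates only in the null direction gives $X_\phi = 0$. Substituting back, the $\delta e$-coefficient collapses to $e X_\omega = 0$, i.e. $X_\omega \in \mathrm{Ker}\, W_1^{\Sigma,(1,2)}$. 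Invoking the injectivity of $W_1^{\Sigma,(1,1)}$ on a null boundary, adapted from \cite{CCT21}, then forces $X_e = 0$, after which the $\delta\phi$-coefficient reduces to $e^3 X_\Pi = 0$ and yields $X_\Pi \in \mathrm{Ker}\, W_3^{\Sigma,(0,1)}$.

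The converse inclusion follows by direct substitution, so the kernel of $\tilde{\varpi}$ coincides exactly with the distribution of variations in the two claimed equivalence classes. Symplectic reduction in the sense of \cref{def:geometricphasespace} then produces the quotient $(\mathcal{F}^{\phi}_\Sigma, \varpi)$, and well-definedness of the descended form $\int_\Sigma e\, \delta e\, \delta [\omega] + \frac{1}{6} \delta(e^3 [\Pi])\, \delta \phi$ follows from the very same kernel analysis, since shifting $\omega$ or $\Pi$ by a representative of the respective kernels leaves $\tilde{\varpi}$ invariant. The main technical obstacle I anticipate is justifying the injectivity of $W_1^{\Sigma,(1,1)}$ together with the non-vanishing of $e^3 \in \Omega^{3,3}_\Sigma$ on a null boundary; both statements rest on the assumption that the degeneracy of the induced boundary metric is confined to a single direction, and they should be importable with only minor modifications from the free degenerate Palatini--Cartan analysis of \cite{CCT21}.
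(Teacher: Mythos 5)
Your proposal is correct and follows the same route as the paper's source for this statement (the paper defers to \cite{CCR2022}, which, like \cref{mastertheorem1} in \cite{CS2017}, proceeds exactly by contracting $\tilde\varpi$ with a generic tangent vector, reading off the coefficients of $\delta\omega$, $\delta e$, $\delta\phi$, $\delta\Pi$, and using the injectivity of $W_1^{\Sigma,(1,1)}$ together with $e^3\neq 0$). You also correctly isolate the only two inputs that must survive the passage to a null boundary --- injectivity of $W_1^{\Sigma,(1,1)}$ (which holds by \cite{Can2021}, as used in \cref{lemma_map_en_0}) and the pointwise non-vanishing of $e^3$ (which follows from the injectivity of $e\colon T\Sigma\to i^*\mathcal V$ irrespective of the degeneracy of $i^*g$) --- so the argument goes through unchanged in the degenerate case.
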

\begin{proof}
    See \cite{CCR2022}.
\end{proof}
We notice that the first field equation \eqref{EL_domegae_scalar} does not couple with the scalar field. Therefore, since this purely geometrical term is equivalent to the one of the Palatini--Cartan theory (namely $\alpha=d_\omega e$), the structural and degeneracy constraints possess the same form of the free theory. In fact, as we said, they serve the purpose of maintaining the equivalence between $e d_\omega e=0$ and $d_\omega e=0$ on the boundary. We recall here the aforementioned constraints, which thus read
    \begin{equation}\label{tau_strcontr_scalar}
        \begin{cases}  e_n (d_\omega e- p_{\mathcal{T}}d_\omega e) \in \Ima W_{1}^{\Sigma,(1,1)}\\[6pt]
        p_{\mathcal T}d_\omega e=0.
        \end{cases}
    \end{equation}
Similarly to the Palatini--Cartan theory, we focus on fixing the representative of the equivalence classes defined in \cref{mastertheorem_scalar}. The purely gravitational part remains the same, since it follows uniquely from the kernel of the piece of \eqref{eq:scal_presympl_form} equal to the pre-symplectic form of the free Palatini--Cartan theory. In other words, since in the present case the equivalence class $[\omega]$ is defined in the same way of the Palatini--Cartan theory, as well as the structural constraint, it follows that \cref{uniquerep_free} applies verbatim to the scalar field theory.\\
Although, we are left to fix the representative of the equivalence class for $\Pi$. For this purpose, we give the following lemma.
\begin{lemma}\label{uniquerep_scalar}
Let $i^*g$ be degenerate. Then, given $\phi\in C^\infty(\Sigma)$, $\Pi\in\Omega_{\Sigma}^{0,1}$ and $e_n\in\Omega_\Sigma^{0,1}$ as in \cref{strconstr_free_deg}, the conditions
    \begin{equation}
        \begin{cases}\label{eq:uniquerep_scalar}
             d\phi-(e,\Pi)=0\\[6pt]
           p_{W}\Pi=0,
        \end{cases}
    \end{equation}
with\footnote{Here, we  regard the boundary metric as a map $i^*g\colon T\Sigma\to T^*\Sigma$ and therefore we have that $\mathrm{Ker}(i^*g)=\{\xi\in\mathfrak{X}(\Sigma)\mid\iota_\xi (i^*g)=0\}\subset\mathfrak{X}(\Sigma)$.} ${W}=e(\mathrm{Ker}(i^*g))$, uniquely define a representative of the equivalence class $[\Pi]\in\Omega_{\Sigma, red}^{0,1}$.    
\end{lemma}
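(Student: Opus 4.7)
The plan is to mirror the strategy used for Lemma \ref{uniquerep_free}, working in a local frame adapted to the null degeneracy of $i^*g$. I pick a basis $\{v_1,v_2,v_3\}$ of $T\Sigma$ with $v_3$ spanning $\mathrm{Ker}(i^*g)$, so that $\{e(v_1),e(v_2),e(v_3),e_n\}$ is a basis of $i^*\mathcal V$. Under the remark preceding the lemma, the $2\times 2$ block $h_{ij}=\eta(e(v_i),e(v_j))$ for $i,j\in\{1,2\}$ is non-degenerate, while $\eta(e(v_3),e(v_i))=0$ for $i\le 3$ and $c:=\eta(e(v_3),e_n)\neq 0$. From this two algebraic identifications follow: $\mathrm{Ker}\,W_3^{\Sigma,(0,1)}=\Omega^0(e(T\Sigma))$ (only the $e_n$-component of $\Pi$ survives $e^3\wedge\cdot$), and $\mathrm{Ker}\bigl((e,\cdot)\colon\Omega^{0,1}_\Sigma\to\Omega^1_\Sigma\bigr)=\Omega^0(W)$ (the equation $\eta(e(v_3),\Pi)=c\,\Pi^n=0$ kills the $e_n$-component, then invertibility of $h_{ij}$ kills the remaining transverse components, leaving $e(v_3)$ free). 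Note that $W\subset e(T\Sigma)$.

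Uniqueness is then a short chase. If $\Pi_1,\Pi_2\in[\Pi]$ both satisfy the two conditions, then $\Pi_1-\Pi_2\in\Omega^0(e(T\Sigma))$ by the equivalence relation; subtracting the first condition gives $(e,\Pi_1-\Pi_2)=0$, so in fact $\Pi_1-\Pi_2\in\Omega^0(W)$; and $p_W(\Pi_1-\Pi_2)=0$ combined with $p_W|_W=\mathrm{id}$ forces $\Pi_1=\Pi_2$. For existence, starting from any $\Pi_0\in[\Pi]$, I would use that the restriction of $(e,\cdot)$ to $\Omega^0(e(T\Sigma))$ surjects onto the $1$-forms annihilating $v_3$ with kernel $\Omega^0(W)$; this yields $\lambda\in\Omega^0(e(T\Sigma))$, unique modulo $W$, with $(e,\Pi_0+\lambda)=d\phi$, and the residual $W$-freedom is then removed by subtracting $p_W(\Pi_0+\lambda)\in W$.

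The main obstacle is the existence step, where one must check that $d\phi-(e,\Pi_0)$ actually annihilates $v_3$, so that it lies in the image of the restricted map. Since $(e,\Pi)(v_3)=c\,\Pi^n$ depends only on the $e_n$-component of $\Pi$, which is invariant along the equivalence class, this is a compatibility condition tying $[\Pi]$ to $\phi$ that does not follow from the two conditions alone. I would justify it on-shell via the Euler--Lagrange equation $e^{3}(d\phi-(e,\Pi))=0$ of the scalar sector, whose pull-back to $\Sigma$ forces exactly this $v_3$-component to vanish. Modulo this input, the two conditions decouple along the complementary directions $e(T\Sigma)/W$ and $W$, so that the first condition fixes the transverse part of $\Pi$ while the projector condition $p_W\Pi=0$ fixes the residual null part, completing the argument.
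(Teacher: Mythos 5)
Your argument is correct and in substance follows the same route as the paper's proof: an adapted frame with $e_3=e(v_3)$ spanning $W$, the identification $\mathrm{Ker}W_3^{\Sigma,(0,1)}=\Omega^0(e(T\Sigma))$ so that the class $[\Pi]$ is labelled by $\pi^n$, invertibility of the transverse block $h_{ij}$ to solve the first condition for $\pi^1,\pi^2$, and $p_W\Pi=0$ to fix the residual $W$-component. Two points of comparison are worth recording. First, your uniqueness step is a structural kernel chase ($\mathrm{Ker}W_3^{\Sigma,(0,1)}\cap\mathrm{Ker}(e,\cdot)=\Omega^0(W)$, then $p_W$ kills the difference), whereas the paper solves explicitly for the components via $\pi^b=g^{ab}(\partial_a\phi-\pi^ng_{an})$ and $\pi^ne_n^3+\pi^3=0$; the content is the same, but your version makes the injectivity of the combined conditions on each fibre of the class transparent. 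Second, the obstacle you isolate in the existence step --- that the $v_3$-component of $d\phi-(e,\Pi)$ equals $\iota_{v_3}d\phi-c\,\pi^n$ with $c=\eta(e_3,e_n)\neq0$, is constant along the equivalence class, and hence is a compatibility condition between $\phi$ and $[\Pi]$ rather than something one can arrange by choosing the representative --- is exactly the issue the paper disposes of at the outset by extracting $\iota_Xd\phi=0$ from the contraction of the field equations with $X\in\mathrm{Ker}(i^*g)$ and treating it as a separate geometric input (see \cref{select_orth_phi} and its footnote). Your on-shell formulation $\iota_{v_3}d\phi=c\,\pi^n$ is the sharper version of that input, since $\iota_{v_3}(e,\Pi)$ does not drop out of the contracted equation; in either form this condition does not follow from \eqref{eq:uniquerep_scalar} alone, and your write-up has the merit of saying explicitly where it enters. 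So there is no gap relative to the paper's own argument; you have simply made visible a hypothesis that the paper invokes implicitly.
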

\begin{proof}
We first notice that, if we consider a vector field along the degenerate direction, namely $X\in\mathrm{Ker}(i^*g)$, and we take the contraction of the field equations with it, we obtain the condition
\begin{align}\label{select_orth_phi}
    \iota_X d\phi=0.
\end{align}
What happens is that the degeneracy in the boundary metric decouples $\phi$ and $\Pi$ along the degenerate direction\footnote{This gives a condition on the derivative of $\phi$. More specifically, the degeneracy of the boundary metric complicates the selection of the components of the fields in the orthogonal direction to the boundary. This implies that we could potentially have some spurious components of the field $\phi$ generating the diffeomorphisms along the orthogonal direction. Therefore, we can interpret the condition of \cref{select_orth_phi} as a geometrical constraint that selects the only component of the symmetry transformations orthogonal to the boundary which are actually generated by the Hamiltonian vector field $h_\lambda^\phi$ of \cref{h_phi}. In other words, one could say that \cref{select_orth_phi} selects the “physically meaningful" components of the derivative of the field $\phi$.} and this is precisely why we need, compared to the non-degenerate case, an extra condition in order to fix the representative of the equivalence class of $\Pi$.

We can decompose the field $\Pi\in\Omega_\Sigma^{0,1}$ in the following way\footnote{We take the basis of $i^*\mathcal V$ given by the vielbein and the completion $e_n$. Notice that, as a section of $i^*\mathcal V$, $e_n$ will have components along the vielbein in general.}
\begin{align}\label{dec_Pi}
    \Pi=\pi^ne_n+\pi^ae_a,
\end{align}
with $a=1,2,3$. Then, we notice that, thanks to definition of the wedge product, $e^3e_a=0$ for every $a$ and therefore $\pi=\pi^ae_a\in\mathrm{Ker}W_3^{\Sigma,(0,1)}$. This means that fixing a certain $\pi^n$ uniquely defines an equivalence class $[\Pi]\in\Omega_{\Sigma, red}^{0,1}$ and vice versa. We are thus left to show that the conditions \eqref{eq:uniquerep_scalar} fix also uniquely $\pi=\pi^a e_a$, as a function of $\pi^n$. Now, we recall that $\mathrm{dim}(\mathrm{Ker}(i^*g))=1$ and $e$ is injective and, therefore, we have that $W\subset e(T\Sigma)$ is a 1-dimensional subspace. Furthermore, for any open neighbourhood of $e(T\Sigma)$, without loss of generality, we can assume that the basis given by the vielbein $\{e_1,e_2,e_3\}$ is such that, say, $e_3$ spans $W$. From \cref{dec_Pi}, it follows that the condition
\begin{align}
    p_W\Pi=0
\end{align}
implies
\begin{align}\label{P_WPi=0} 
    \pi^ne^3_n+\pi^3=0.
\end{align}
Moreover, with such a choice of basis, we can write the exterior derivative of the scalar field as
\begin{align}\label{iota_X dphi=0}
    d\phi = \partial_i\phi dx^i = e_1^a\partial_a\phi dx^1+e_2^a\partial_a\phi dx^2,
\end{align}
where we implemented the condition $\iota_X d\phi=0$, which reads $e^a_3\partial_a\phi=0$. Lastly, we can write the field equations implementing \cref{iota_X dphi=0}, obtaining
    \begin{align}
		d\phi-(e,\Pi) & = \partial_i\phi dx^i-(e^a_idx^ie_a,\pi^be_b+\pi^ne_n) \\
		& = e_i^a\partial_a\phi dx^i-e_i^a\pi^bg_{ab}dx^i-e_i^a\pi^ng_{an}dx^i \\
		& = e_i^a(\partial_a\phi-\pi^bg_{ab}-\pi^ng_{an})dx^i = 0,
	\end{align}
where $g$ is the metric and $b=1,2$. Since the restricted inner product is non-degenerate, we have
\begin{align}
    \partial_a\phi-\pi^bg_{ab}-\pi^ng_{an}=0
\end{align}
and thus we deduce the following equation for $\pi^b$ (with $b=1,2$)
\begin{align}\label{fixpi12}
    \pi^b=g^{ab}(\partial_a\phi-\pi^n g_{an}).
\end{align}
It follows that \cref{P_WPi=0,fixpi12} completely fix the components of $\pi$ in terms of $\pi^n$. Hence, since fixing $\pi^n$ is equivalent to fixing a representative for $[\Pi]$ and vice versa, we have that, given an equivalence class (or equivalently a $\pi^n$), the conditions \eqref{eq:uniquerep_scalar} fix uniquely the representative of $[\Pi]$. On the other hand, given a representative, the conditions \eqref{eq:uniquerep_scalar} fix unambiguously a $\pi^n$ and therefore an equivalence class $[\Pi]$.
\end{proof}
We have uniquely determined the representatives for the equivalence classes that define the symplectic space of boundary fields. As a result, we can now write the set of constraints of the theory as functionals of the representatives themselves.
\begin{definition}\label{constraints_SPC}
Let $c\in\Omega_\Sigma^{0,2}[1]$, $\xi\in\mathfrak{X}(\Sigma)[1]$, $\lambda\in C^\infty(\Sigma)[1]$ and $\tau\in\mathcal{S}[1]$. Then, we define the following functionals
    \begin{align}
		L_c & = \int_\Sigma ced_\omega e \\[3pt]
		P^{\phi}_\xi & = \int_\Sigma \frac{1}{2}\iota_\xi(e^2)F_\omega+\frac{1}{3!}\iota_\xi(e^3\Pi)d\phi+\iota_\xi(\omega-\omega_0)ed_\omega e \\[5pt]
		H^{\phi}_\lambda & = \int_\Sigma \lambda e_n\Big(eF_\omega+\frac{\Lambda}{3!}e^3+\frac{1}{2}e^2\Pi d\phi+\frac{1}{2\cdot3!}e^3(\Pi,\Pi)\Big)\\
		R_\tau & = \int_\Sigma \tau d_\omega e.
	\end{align}
We refer to these as the constraints of the scalar Palatini--Cartan theory.
\end{definition}
In the following theorem, we give the form of the Poisson brackets determining the constraint algebra of the theory.
\begin{theorem}\label{poiss_brack_scalar}
    Let $i^*g$ be degenerate. Then the Poisson brackets of the constraints of \cref{constraints_SPC} read
        \begin{align*}
    &\begin{aligned}
        	&\{L_c, L_c\}  = - \frac{1}{2} L_{[c,c]} & \qquad\qquad
       	& \{P^{\phi}_{\xi}, P^{\phi}_{\xi}\}  =  \frac{1}{2}P^{\phi}_{[\xi, \xi]}- \frac{1}{2}L_{\iota_{\xi}\iota_{\xi}F_{\omega_0}} \\
       	& \{L_c, P^{\phi}_{\xi}\}  =  L_{\mathcal{L}_{\xi}^{\omega_0}c} &
       	& \{H^{\phi}_{\lambda},H^{\phi}_{\lambda}\}  \approx 0 \\
       	&  \{L_c, R_{\tau}\} =  -R_{p_{\mathcal{S}}[c, \tau]} & 
       	& \{P^{\phi}_{\xi},R_{\tau}\}  = R_{p_{\mathcal{S}}\mathcal{L}_{\xi}^{\omega_0}\tau}\\
       	&\{ R_{\tau}, H^{\phi}_{\lambda} \}   \approx G_{\lambda \tau}&
 		& \{R_{\tau},R_{\tau}\}  \approx F_{\tau \tau}
        \end{aligned}\\
	&\{L_c,  H^{\phi}_{\lambda}\}
        = - P^{\phi}_{X^{(a)}} + L_{X^{(a)}(\omega - \omega_0)_a} - H^{\phi}_{X^{(n)}}\\        
        &\{P^{\phi}_{\xi},H^{\phi}_{\lambda}\} =  P^{\phi}_{Y^{(a)}} -L_{ Y^{(a)} (\omega - \omega_0)_a} +H^{\phi}_{ Y^{(n)}},
    \end{align*}
	with $X=[c,\lambda e_n]$ and $Y=\mathcal{L}_\xi^{\omega_0}(\lambda e_n)$ and where the superscripts $(a)$ and $(n)$ describe their components with respect to $e_a,e_n$. Furthermore, $F_{\tau\tau}$ and $G_{\lambda\tau}$ are functionals of $e,\omega,\tau$ and $\lambda$ which are not proportional to any other constraint.\footnote{They are properly defined in \cite{CCT21} (proof of Theorem $30$).}
\end{theorem}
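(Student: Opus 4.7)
The plan is to combine the scalar-sector analysis of \cite{CCR2022} (treating the non-degenerate boundary) with the degenerate-boundary treatment of \cite{CCT21} (for pure PC). The structural key is that $L_c$ and $R_\tau$ are independent of the matter fields, while $P^\phi_\xi$ and $H^\phi_\lambda$ differ from their pure-gravity analogues only by terms polynomial in $\phi$, $\Pi$ and $e$. Each bracket can then be written as the sum of its pure-gravity counterpart (known from \cite{CCT21}) and scalar corrections, and the task is to show that these corrections either vanish, are proportional to the constraints, or reassemble into the $\phi$-dressed expressions stated in the theorem.

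First I would compute the Hamiltonian vector fields with respect to
\begin{align*}
\varpi = \int_\Sigma e\,\delta e\,\delta[\omega] + \tfrac{1}{6}\delta(e^3[\Pi])\,\delta\phi,
\end{align*}
ensuring that the flows preserve the representative-fixing equations \eqref{eq:uniquerep_free} and \eqref{eq:uniquerep_scalar}. Since $L_c$ and $R_\tau$ have vanishing variations in $\phi$ and $\Pi$, their Hamiltonian vector fields have no $\partial_\phi,\partial_\Pi$ components. It follows that brackets among $L_c$ and $R_\tau$ reduce verbatim to those of \cite{CCT21}, while $\{L_c,P^\phi_\xi\}$ and $\{P^\phi_\xi,P^\phi_\xi\}$ follow as in \cite{CCR2022}. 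For $\{P^\phi_\xi,R_\tau\}$ one must additionally check that the variation of the scalar term $\tfrac{1}{3!}\iota_\xi(e^3\Pi)d\phi$ along the gravitational flow of $R_\tau$ contributes only through pieces that either cancel using the on-shell relation $d\phi=(e,\Pi)$ from \cref{uniquerep_scalar} or are absorbed into $R_{p_\mathcal{S}\mathcal{L}_\xi^{\omega_0}\tau}$.

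The main obstacle lies in the brackets involving $H^\phi_\lambda$, where the scalar correction $\tfrac{1}{2}e^2\Pi d\phi+\tfrac{1}{2\cdot 3!}e^3(\Pi,\Pi)$ contributes nontrivially to the $\partial_e$, $\partial_\phi$ and $\partial_\Pi$ components of the Hamiltonian vector field. For $\{L_c,H^\phi_\lambda\}$ and $\{P^\phi_\xi,H^\phi_\lambda\}$, I would proceed as in \cite{CCR2022}: Cartan-type identities for $\iota_\xi$, the generalized Lie bracket of \cref{generalizedlie}, and the relation $d\phi=(e,\Pi)$ allow the scalar cross-terms to reassemble into $P^\phi$, $L$, and $H^\phi$ with the arguments $X=[c,\lambda e_n]$ and $Y=\mathcal{L}_\xi^{\omega_0}(\lambda e_n)$. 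For $\{H^\phi_\lambda,H^\phi_\lambda\}$ the pure-gravity contribution is $\approx 0$ by \cite{CCT21}, and antisymmetry combined with $d\phi=(e,\Pi)$ should force the scalar cross-terms to be proportional to the constraints.

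Finally, the $\approx$-brackets $\{R_\tau,H^\phi_\lambda\}$ and $\{R_\tau,R_\tau\}$ reduce modulo constraints to the functionals $G_{\lambda\tau}$ and $F_{\tau\tau}$, whose minimality (i.e.\ their not being proportional to any other constraint) is inherited from \cite{CCT21}. The essential argument is that any new scalar contributions must carry factors of $\Pi$ or $d\phi$, whereas $G_{\lambda\tau}$ and $F_{\tau\tau}$ are purely gravitational; hence the scalar sector cannot cancel the gravitational obstructions, and the second-class character of the system persists after coupling to a scalar field.
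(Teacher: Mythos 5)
Your overall strategy coincides with the paper's: split $P^{\phi}_\xi = P_\xi + p^{\phi}_\xi$ and $H^{\phi}_\lambda = H_\lambda + h^{\phi}_\lambda$, use linearity of the bracket, import the pure-gravity brackets from \cite{CCT21} and the non-degenerate scalar brackets from \cite{CCR2022}, and observe that $L_c$ and $R_\tau$ carry no matter dependence. That part is fine, and your remark that $\{R_\tau,p^{\phi}_\xi\}$ causes no trouble is correct (in the paper this is immediate once one works with $\rho=\tfrac{1}{3!}e^3\Pi$, so that $p^\phi_\xi$ has Hamiltonian vector field only along $\rho$ and $\phi$ while $\mathbb{R}$ has components only along $e$ and $\omega$; the pairing through $\varpi$ then vanishes identically, with no need to invoke $d\phi=(e,\Pi)$ there).

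The genuine gap is in $\{R_\tau,H^{\phi}_\lambda\}$. The theorem asserts this bracket is $\approx G_{\lambda\tau}$ with \emph{no additional term}, which means you must prove $\{R_\tau,h^{\phi}_\lambda\}=0$ (or at least $\approx 0$); this is the one place where the scalar coupling could, a priori, behave like the Yang--Mills coupling, where the analogous bracket produces a genuinely new obstruction $K^{A}_{\lambda\tau}$. Your argument — that scalar contributions carry factors of $\Pi$ or $d\phi$ and therefore ``cannot cancel the gravitational obstructions'' — addresses only why $G_{\lambda\tau}$ is not cancelled; it says nothing about why no new non-constraint functional is \emph{added}, and applied verbatim to the SU$(n)$ case it would wrongly predict the absence of $K^{A}_{\lambda\tau}$. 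The actual verification requires the explicit Hamiltonian vector field $\mathbb{h}^{\phi}_\omega = \lambda e_n\bigl(\Pi d\phi+\tfrac{e}{4}(\Pi,\Pi)\bigr)-\tfrac{\lambda}{2}e\Pi(\Pi,e_n)$ paired against $e\mathbb{R}_e=[\tau,e]$: the last two terms die because $e[\tau,e]=0$ (from $e\tau=0$ in the definition of $\mathcal S$), and the surviving term $-\int_\Sigma\lambda e_n\Pi\, d\phi\,[\tau,e]$ is shown to vanish by moving the bracket onto $e_n$ via Leibniz, substituting $d\phi=(e,\Pi)$ from the representative-fixing condition \eqref{eq:uniquerep_scalar}, and finally invoking \cref{diagtetr} ($e_n[\tau,e]=0$ for a diagonalized degenerate vielbein, legitimate here since no derivatives act on $e$ in the integrand). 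Without this chain of identities the claimed equality $\{R_\tau,H^{\phi}_\lambda\}\approx G_{\lambda\tau}$ is unsupported. A secondary omission: the Hamiltonian vector field of $R_\tau$ along $\omega$ contains the term $g(\tau,\omega,e)$ coming from the $e$-dependence of $\tau\in\mathcal S$, which you do not account for; it matters for correctly reproducing $F_{\tau\tau}$ and $G_{\lambda\tau}$ even in the purely gravitational sector.
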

\begin{proof}
    First, we introduce the following notation\footnote{With $P_\xi$ and $H_\lambda$ of \cref{constraints_gravity-deg}.}
    \begin{align}\label{split_int_scalar}
        &P^{\phi}_\xi = P_\xi+p^{\phi}_\xi& \,\, &H^{\phi}_\lambda =H_\lambda+h^{\phi}_\lambda,
    \end{align}
in order to simplify the computations.

In accordance with the results from \cite{CCT21} and \cite{CCR2022}, we possess knowledge of the some of the brackets as follows
\begin{align*}
    &\begin{aligned}
        	&\{L_c,L_c\} = -\frac{1}{2}L_{[c,c]} & \qquad\qquad
       	& \{L_c,P^{\phi}_\xi\}   = L_{\mathcal{L}_\xi^{\omega_0}c} \\[2pt]
            &\{P^{\phi}_\xi,P^{\phi}_\xi\} = \frac{1}{2}P^{\phi}_{[\xi,\xi]}-\frac{1}{2}L_{\iota_\xi\iota_\xi F_{\omega_0}}& \qquad\qquad
            & \{L_c,R_\tau\} = -R_{p_\mathcal{S}[c,\tau]}\\[2pt]
            &\{P_\xi,R_\tau\}=R_{p_\mathcal{S}\mathcal{L}_\xi^{\omega_0}\tau}& \qquad\qquad
            &\{ R_{\tau}, H_{\lambda} \}   \approx G_{\lambda \tau} \\[2pt]
            &\{R_{\tau},R_{\tau}\}  \approx F_{\tau \tau}& \qquad\qquad
            &\{H^\phi_\lambda,H_\lambda^{\phi}\}  \approx 0\\[2pt]
        \end{aligned}\\
	   &\{L_c,H^{\phi}_\lambda\}=-P^{\phi}_{X^{(a)}}+L_{X^{(a)}(\omega-\omega_0)_a}-H^\phi_{X^{(n)}}\\[2pt] 
        &\{P_\xi^{\phi},H^{\phi}_\lambda\}=P^{\phi}_{Y^{(a)}}-L_{Y^{(a)}(\omega-\omega_0)_a}+H^{\phi}_{Y^{(n)}}
    \end{align*}
with $F$ and $G$ non-identically-vanishing functional of $\tau$ and $\lambda$ defined in \cite{CCT21} (Theorem $30$) and $X=[c,\lambda e_n]\in\Omega_\Sigma^{0,1}$ divided into a tangential component $X^{(a)}=[c,\lambda e_n]^{(a)}$ and a normal component $X^{(n)}=[c,\lambda e_n]^{(n)}$. We are therefore left to compute the brackets $\{R_\tau,h^{\phi}_\lambda\}$ and $\{R_\tau,p_\xi^{\phi}\}$. Also, we can recall the known results from \cite{CCT21} and \cite{CCT21} for what concerns all Hamiltonian vector fields. In particular, for $p_\xi^{\phi}$, we have
	\begin{align}
		\mathbb{p}_e^{\phi} & = 0 & \mathbb{p}_\omega^{\phi} & = 0 \\
		\mathbb{p}_\rho^{\phi} & = -\mathcal{L}_\xi^{\omega_0}\rho & \mathbb{p}_\phi^{\phi} & = -\xi(\phi),
	\end{align}
whereas, for $h^{\phi}_\lambda$, we have
	\begin{align}
		\mathbb{h}^{\phi}_e & = 0 & \mathbb{h}^{\phi}_\omega& = \lambda e_n\big(\Pi d\phi+\frac{e}{4}(\Pi,\Pi)\big)-\frac{\lambda}{2}e\Pi(\Pi,e_n) \\
		\mathbb{h}^{\phi}_\rho & = \frac{1}{2}d_\omega(\lambda e_ne^2\Pi) & \mathbb{h}^{\phi}_\phi & = -\lambda(e_n,\Pi)\label{h_phi},
	\end{align}
 where we have defined a new field $\rho\coloneqq\frac{1}{3!}e^3\Pi\in\Omega_\Sigma^{3,4}$.
 
Next, it is helpful to write explicitly the variation\footnote{We compute the Hamiltonian vector fields in the following manner. Let $\mathbb X$ be the Hamiltonian vector field of the functional $F$ for the symplectic form $\varpi$, then it holds $\iota_{\mathbb X}\varpi-\delta F=0$, where $\delta F$ is the functional derivative of $F$. } of $R_\tau$, which reads\footnote{Since $\tau$ is defined on $\mathcal S$ and the latter is defined making use of $e$, it follows that $\tau$ has a non-trivial variation along $e$.}
	\begin{align}
		\delta R_\tau & = \int_\Sigma \delta\tau d_\omega e-\tau[\delta\omega,e]+\tau d_\omega\delta e \\
		& = \int_\Sigma (g(\tau,\omega,e)+d_\omega\tau)\delta e+[\tau,e]\delta\omega,
	\end{align}
	where we have introduced the formal expression $g=g(\tau,e,\omega)$ which encodes the dependence of $\tau$ on $e$ (see \cite{CCT21} Theorem $30$ for further details). It follows that the Hamiltonian vector fields are
	\begin{align}
		e\mathbb{R}_e & = [\tau,e] & e\mathbb{R}_\omega & = g(\tau,\omega,e)+d_\omega\tau \\
		\mathbb{R}_\rho & = 0 & \mathbb{R}_\phi & = 0.
	\end{align}
Now that we possess all Hamiltonian vector fields, we are ready to compute the Poisson brackets of the remaining constraints. First, we notice that
\begin{align}
    \{R_\tau, p_\xi^{\phi}\}=0
\end{align}
since $p_\xi^{\phi}$ has trivial Hamiltonian vector fields along $e$ or $\omega$. Then, we compute
\begin{align}
    \{R_\tau, h_\lambda^{\phi}\} = \int_\Sigma -\lambda e_n\Pi d\phi[\tau,e]-\frac{e}{4}\lambda e_n(\Pi,\Pi)[\tau,e]+\frac{\lambda}{2}e\Pi(\Pi,e_n)[\tau,e].
\end{align}
Here, the last two terms are zero thanks to $e[\tau,e]=0$ (following from $e\tau=0$ in the definition of $\mathcal S$). For the term bracket, we have
	\begin{align}
		\int_\Sigma -\lambda e_n\Pi d\phi[\tau,e] & = \int_\Sigma -\lambda\Pi d\phi\tau[e_n,e]\\[4pt]
        &=\int_\Sigma -\lambda\Pi d\phi\tau(e_n,e) \\[4pt]
        &\hspace{-0.12cm}\overset{\eqref{eq:uniquerep_scalar}}{=}\int_\Sigma -\lambda\Pi (e,\Pi)\tau(e_n,e) \\[4pt]
        &\hspace{-0.12cm}=\int_\Sigma -\lambda\Pi (e,\Pi)e_n[\tau, e] \\[4pt]
        &=0,
	\end{align}
where we implemented the Leibniz identity for the squared brackets, the definition of $\tau\in\mathcal S$ and \cref{diagtetr}, thanks to the fact that there are no derivatives in the integral\footnote{Roughly speaking, we can “diagonalize" the vielbein.}.

Finally, in order to complete the proof, we can simply exploit the linearity of the Poisson brackets and recall the definition of the split introduced in \cref{split_int_scalar} together with the known results mentioned above.
\end{proof}

\subsection{Yang--Mills field}
In this section, we will examine the case of an SU$(n)$-gauge-field\footnote{All the considerations below work with a general Lie algebra $\mathfrak g$.}, namely a principal connection $A$ of a principal SU$(n)$-bundle over $M$ denoted with $P$ (see \cite{T2019} Section $5$). It follows that the space of gauge fields is locally modelled on $\Omega^1(M,\mathfrak{su}(n))$, via the pull-backs along the sections of $G$. In the Standard Model of particle physics, this kind of field is responsible for the mediation of a variety of interactions, in particular, the Electroweak and the Strong interaction. Moreover, similarly to what we did in the previous section, we associate to the gauge field\footnote{Note that we refer to both $A$ and its pull-back as the gauge field.} $A$ an independent field $B\in\Gamma\big(\textstyle{\bigwedge^2}\mathcal{V}\otimes \mathfrak{su}(n)\big)$.

Hence, the Yang--Mills--Palatini--Cartan theory is defined by the following space of fields
\begin{align}
    \mathcal F^{A}=\tilde\Omega^{1,1}\times \mathcal A(P)\times \mathcal A(G)\times\Gamma\big(\textstyle{\bigwedge^2}\mathcal{V}\otimes \mathfrak{su}(n)\big)\ni (e,\omega,A,B),
    \end{align}
    and the action functional
    \begin{align}\label{eq:actionPCYM}
    \mathcal S^{A}=\mathcal S_{PC}+\int_M \frac{1}{4}e^{2}\mathrm{Tr}(BF_A)+\frac{1}{46!}e^4\mathrm{Tr}(B,B),
    \end{align}
where $\Omega^2(M,\mathfrak{su}(n))\ni F_A=d A+\frac12[A,A]$ is the field strength, $(\,\,,\,)$ is the canonical pairing in $\bigwedge^2\mathcal V$ and $\mathrm{Tr}\colon\mathfrak{su}(n)\to\mathbb R$ is the trace over the algebra.

The Euler-Lagrange equations are as follows
    \begin{align}
	d_\omega e&=0\\
	e(F_\omega+\mathrm{Tr}(BF_A))+\frac{e^{3}}{6}(\Lambda+\frac{1}{2}\mathrm{Tr}(B,B))&=0\\
	e^{2}\big(F_A+\frac{1}{2}(e^2,B)\big)&=0\\
	d_A(e^{2}B)&=0,
    \end{align}
whereas the Noether $1$-form becomes
\begin{align}
    \tilde{\alpha} = \int_{\Sigma}\frac{e^{2}}{2}\delta\omega+\frac{e^{2}}{2}\mathrm{Tr}(B\delta A).
\end{align}
It follows that the pre-symplectic form of the theory is
\begin{align}
    \tilde{\varpi} = \delta\tilde{\alpha} = \int_\Sigma e\delta e\delta\omega+\mathrm{Tr}(eB\delta e\delta A)+\frac{1}{2}\mathrm{Tr}(e^2\delta B\delta A).
\end{align}
    This is a $2$-form over the space of pre-boundary fields obtained as the pull-back of bulk fields along $i\colon\Sigma \to M$ and denoted in this case as $\tilde{\mathcal{F}}^{A}_\Sigma$. Notice that, also in this case, we refer to boundary fields with the same notation of bulk fields.
\begin{theorem}\label{mastertheorem_YM}
The geometric phase space for the Yang--Mills--Palatini--Cartan theory is the symplectic manifold $(\mathcal F^{A}_\Sigma,\varpi)$ given by the following equivalence relations on the space of pre-boundary fields $\tilde{\mathcal F}^{A}_\Sigma$
\begin{align}\label{eqcl_YM}
    \omega^\prime\sim\omega\quad&\Longleftrightarrow\quad\omega^\prime-\omega \in\mathrm{Ker}W_{1}^{\Sigma, (1,2)}\\[4pt]
    B^\prime\sim B\quad&\Longleftrightarrow\quad B^\prime-B \in\mathrm{Ker}W_{2}^{\Sigma, (0,2*)},
\end{align}
where $2*$ indicates that the $\bigwedge^2\mathcal V$-algebra is tensored with $\mathfrak{su}(n)$, and the symplectic form
    \begin{align}
       \varpi = \int_\Sigma e\delta e\delta[\omega] + \frac{1}{2}\mathrm{Tr}(\delta(e^2[B])\delta A).
    \end{align}
We refer to these equivalence classes as $\mathcal A(i^* P)_{red}$ and $\Gamma\big(\textstyle{\bigwedge^2}i^*\mathcal{V}\otimes \mathfrak{su}(n)\big)_{red}$.
\end{theorem}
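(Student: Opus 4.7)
The plan is to adapt the strategy already used for Theorem~\ref{mastertheorem1} and Theorem~\ref{mastertheorem_scalar} to the Yang--Mills coupling: compute the kernel of the pre-symplectic form $\tilde\varpi$ directly and show that the resulting distribution on $\tilde{\mathcal F}^A_\Sigma$ is precisely the one encoded in the stated equivalence relations. Concretely, I would take a vertical tangent vector $X=(X_e,X_\omega,X_A,X_B)$ and contract it with $\tilde\varpi$; expanding and collecting the coefficients of the independent test variations $\delta e,\delta\omega,\delta A,\delta B$ yields the system
\begin{align*}
    eX_e &= 0, & e^2X_A &= 0, \\
    eBX_e + \tfrac{1}{2}e^2X_B &= 0, & eX_\omega + \mathrm{Tr}(eBX_A) &= 0,
\end{align*}
which $X$ must satisfy to lie in $\ker\tilde\varpi$.

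Next I would solve this system. The first equation forces $X_e=0$ by injectivity of $W_1^{\Sigma,(1,1)}$, which holds even on a null boundary since the vielbein is pointwise injective as a map $T\Sigma\hookrightarrow i^*\mathcal V$. The second equation forces $X_A=0$ provided $e^2\wedge$ is injective on $\mathfrak{su}(n)$-valued $1$-forms; I would establish this via a local-basis argument, showing that the bivectors $e(\partial_i)\wedge e(\partial_j)$ remain linearly independent in $\bigwedge^2\mathcal V$ irrespective of the degeneracy of $i^*g$. With $X_e=X_A=0$, the remaining two equations collapse to $e^2X_B=0$ and $eX_\omega=0$, i.e., $X_B\in\mathrm{Ker}W_2^{\Sigma,(0,2*)}$ and $X_\omega\in\mathrm{Ker}W_1^{\Sigma,(1,2)}$, which is exactly the kernel distribution encoded in the theorem's equivalence relations. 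The symplectic quotient is then defined as usual via \cref{def:geometricphasespace}.

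Finally I would verify that $\varpi$ descends to the quotient. The gravitational term $\int_\Sigma e\delta e\delta[\omega]$ is representative-independent by the same Leibniz argument as in \cref{mastertheorem1}: a shift $\omega\mapsto\omega+\kappa$ with $e\wedge\kappa=0$ gives $\delta(e\kappa)=0$, which forces the correction $\int_\Sigma e\delta e\delta\kappa$ to vanish. The Yang--Mills term $\tfrac12\mathrm{Tr}(\delta(e^2[B])\delta A)$ is manifestly well-defined because it sees $B$ only through $e^2 B$, which is constant along $\mathrm{Ker}W_2^{\Sigma,(0,2*)}$ by definition. The hard part will be making the injectivity of $e^2\wedge$ on $\mathfrak{su}(n)$-valued $1$-forms watertight: in the non-degenerate setting of~\cite{CCR2022} this is automatic from the non-degeneracy of $i^*g$, whereas on a null boundary one must argue purely from pointwise injectivity of the vielbein, checking that the degenerate direction on $\Sigma$ does not collapse any of the relevant bivectors. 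Modulo that point, the argument is a direct adaptation of the scalar case.
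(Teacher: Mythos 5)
Your proposal is correct and follows essentially the same route as the paper, which simply defers this kernel computation to \cite{CCR2022}: contract a tangent vector with $\tilde\varpi$, solve the resulting triangular system ($eX_e=0$ and $e^2X_A=0$ first, then $e^2X_B=0$ and $eX_\omega=0$), and quotient by the resulting distribution. You also correctly isolate the only point where the null boundary could interfere---the injectivity of $e\wedge\cdot$ on $(1,1)$-forms and of $e^2\wedge\cdot$ on $\mathfrak{su}(n)$-valued $1$-forms---and rightly observe that both follow from the pointwise injectivity of $e\colon T\Sigma\to i^*\mathcal V$ alone (the bivectors $e(\partial_i)\wedge e(\partial_j)$ stay linearly independent in $\bigwedge^2 V$), independently of the degeneracy of $i^*g$.
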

\begin{proof}
    See \cite{CCR2022}.
\end{proof}
\begin{remark}
In the context of the Yang--Mills--Palatini--Cartan theory, we can indeed establish unique representatives for these equivalence classes. Subsequently, we can proceed to formulate the constraints in a manner analogous to the approach we previously employed in the preceding section. The representative for $[\omega]\in\mathcal A(i^* P)_{red}$ is already uniquely fixed thanks to equivalent considerations to the ones articulated in the previous sections. Therefore, the structural and the degeneracy constraints for the Yang--Mills--Palatini--Cartan theory read
     \begin{equation}\label{tau_strcontr_YM}
        \begin{cases}  e_n (d_\omega e- p_{\mathcal{T}}d_\omega e) \in \Ima W_{1}^{\Sigma,(1,1)}\\[6pt]
        p_{\mathcal T}d_\omega e=0.
        \end{cases}
    \end{equation}
\end{remark}
We are therefore left with the problem of the representative for $[B]\in\Gamma\big(\textstyle{\bigwedge^2}i^*\mathcal{V}\otimes \mathfrak{su}(n)\big)_{red}$, which is determined by the following lemma.
\begin{lemma}\label{uniquerep_YM}
Let $i^*g$ be degenerate. Then, given $A\in\mathcal A(i^*G)$, $B\in\Gamma\big(\textstyle{\bigwedge^2}i^*\mathcal{V}\otimes \mathfrak{su}(n)\big)$ and $e_n\in\Omega_\Sigma^{0,1}$ as in \cref{strconstr_free_deg}, the conditions
    \begin{equation}
        \begin{cases}\label{eq:uniquerep_YM}
            F_A+\frac{1}{2}(e^2,B)=0\\[6pt]
            p_{\Omega^{0,1*}_{e}\wedge W}B=0,
        \end{cases}
    \end{equation}
with $\Omega^{i,j*}_{e}\coloneqq\Omega^i\big(\Sigma, \bigwedge^je(T\Sigma)\otimes \mathfrak{su}(n)\big)$ where $W=e(\mathrm{Ker}(i^*g))$, uniquely define a representative of the equivalence class $[B]\in\Gamma\big(\textstyle{\bigwedge^2}i^*\mathcal{V}\otimes \mathfrak{su}(n)\big)_{red}$.    
\end{lemma}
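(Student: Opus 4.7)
The plan is to adapt the strategy used in the proof of \cref{uniquerep_scalar} to the Yang--Mills coupling. The unknown field $\Pi\in\Omega^{0,1}_\Sigma$ is replaced by $B\in\Gamma(\bigwedge^{2}i^*\mathcal V\otimes\mathfrak{su}(n))$, and the role of the scalar equation $d\phi-(e,\Pi)=0$ is played by the 2-form condition $F_A+\frac{1}{2}(e^2,B)=0$.

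I first fix a local basis $\{E_1,E_2,E_3,e_n\}$ of $i^*\mathcal V$ exactly as in the scalar proof, with $E_a=e(v_a)$ and $v_3\in\mathrm{Ker}(i^*g)$ spanning the degenerate direction, so that $W=\mathrm{span}(E_3)$. Expanding $B=\sum_{a<b\in\{1,2,3\}}B^{ab}E_a\wedge E_b+\sum_{a=1}^{3}B^{an}E_a\wedge e_n$ in the induced basis of $\bigwedge^{2}i^*\mathcal V$, a direct wedge computation using $e^2=2\sum_{\mu<\nu\le 3}\theta^\mu\wedge\theta^\nu\otimes E_\mu\wedge E_\nu$ shows that $e^2\wedge B$ vanishes if and only if all three coefficients $B^{an}$ vanish. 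Hence $\mathrm{Ker}\,W_{2}^{\Sigma,(0,2*)}=\bigwedge^{2}e(T\Sigma)\otimes\mathfrak{su}(n)$, the class $[B]$ is labelled by the transverse triple $(B^{1n},B^{2n},B^{3n})$, and the representative freedom inside $[B]$ consists precisely of the three tangential coefficients $(B^{12},B^{13},B^{23})$.

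Next, I analyze the first condition $F_A+\frac{1}{2}(e^2,B)=0$ component-wise in the basis $\{\theta^\mu\wedge\theta^\nu\}_{\mu<\nu\le 3}$ of $\Omega^2(\Sigma)$. Using the canonical pairing on $\bigwedge^{2}\mathcal V$, the identities $\eta(E_3,E_\mu)=0$ for $\mu\in\{1,2,3\}$ (from degeneracy of $i^*g$) and $\eta(E_3,e_n)\neq 0$ (from non-degeneracy of $\eta$), and assuming without loss of generality that $e_n$ is orthogonal to $E_1,E_2$ in the spirit of the scalar proof, a finite-dimensional matrix computation shows that the three components of $(e^2,B)$ involve respectively $B^{12}$, $B^{1n}$, and $B^{2n}$. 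Thus this condition fixes the single tangential coefficient $B^{12}$ in terms of $F_A$, while simultaneously imposing two consistency relations on the equivalence-class invariants $B^{1n},B^{2n}$; the remaining coefficients $B^{13},B^{23},B^{3n}$ do not appear.

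Finally, the second condition $p_{\Omega^{0,1*}_e\wedge W}B=0$ projects $B$ onto the subspace $e(T\Sigma)\wedge W\otimes\mathfrak{su}(n)=\mathrm{span}(E_1\wedge E_3,\,E_2\wedge E_3)\otimes\mathfrak{su}(n)$ (since $E_3\wedge E_3=0$), and therefore fixes the remaining two tangential coefficients $B^{13}$ and $B^{23}$---possibly in terms of $B^{3n}$ and the tangential components of $e_n$, by expressions mirroring the scalar identity $\pi^n e_n^3+\pi^3=0$. Combining the two conditions determines all of $B^{12},B^{13},B^{23}$ uniquely, proving uniqueness of the representative; existence follows because each condition is solvable within its own set of tangential coefficients. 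The main obstacle is the tensor-index bookkeeping of $(e^2,B)$ in the non-orthogonal basis $\{E_a,e_n\}$---in particular, verifying that among the three tangential coefficients only $B^{12}$ enters the first condition---but this is a finite-dimensional linear-algebra computation closely parallel to the one carried out in the proof of \cref{uniquerep_scalar}.
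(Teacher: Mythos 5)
Your proposal follows essentially the same route as the paper's proof: the same decomposition $B=b^{an}e_ae_n+\tfrac12 b^{ab}e_ae_b$, the same identification of $\mathrm{Ker}\,W_2^{\Sigma,(0,2*)}$ with the purely tangential part so that $[B]$ is labelled by the $b^{an}$, and the same division of labour in which $F_A+\tfrac12(e^2,B)=0$ fixes $b^{12}$ while $p_{\Omega^{0,1*}_e\wedge W}B=0$ fixes $b^{13},b^{23}$. The only minor imprecisions are that $B^{12}$ is determined in terms of $F_A$ \emph{and} the invariants $b^{cn}$ (as in the paper's formula $b^{cd}=-(g^{ac}g^{bd}F_{ab}+g^{bd}g_{bn}b^{cn})$), and that the paper does not need your simplifying assumption $e_n\perp E_1,E_2$; neither affects the argument.
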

\begin{proof}
We can decompose an element\footnote{We can consider the basis for $e(T\Sigma)$ given by the vielbein. See the proof of \cref{uniquerep_scalar} for more details.} $B\in\Gamma\big(\textstyle{\bigwedge^2}i^*\mathcal{V}\otimes \mathfrak{su}(n)\big)$ as\footnote{Apart from the wedge product, in order to lighten the notation, we also omit the tensor product.} 
\begin{align}
    B=b^{an}e_ae_n+\frac12 b^{ab}e_ae_b,
\end{align}
with $b^{an},b^{ab}\in\Gamma(\mathfrak{su}(n))$ and $a,b=1,2,3$. We notice that $b=b^{ab} e_ae_b\in\mathrm{Ker}(W_2^{\Sigma,(0,2*)})$, since $e^2e_ae_b=0$ for all $a,b=1,2,3$. This directly implies that the components $b^{an}$ are already uniquely determined by the equivalence class $[B]$ and vice versa. Now, as we did in the proof of \cref{uniquerep_scalar}, we observe that $\mathrm{dim}(\mathrm{Ker}(i^*g))=1$ and $e$ is injective and, therefore, we have that $W\subset e(T\Sigma)$ is a 1-dimensional subspace. Hence, for any open neighbourhood of $e(T\Sigma)$, without loss of generality, we can take as a basis of $e(T\Sigma)$ the one given by the $\{e_1,e_2,e_3\}$ such that $e_3$ spans $W$. Then, since a basis of $\Omega^{0,1*}_{e}\wedge W$ is given by $\{e_1e_3, e_2e_3\}\otimes\mathfrak{su}(n)$, we have that, similarly to the scalar case, we first notice that the field equations imply the condition $\iota_X F_A=0$, with $X\in\mathrm{Ker}(i^*g)$. Moreover, the condition $p_{\Omega^{0,1*}_{e}\wedge W}B=0$ implies that
\begin{align}\label{PWF=0}
    2b^{[1n}e^{3]}_n+b^{13}=2b^{[2n}e^{3]}_n+b^{23}=0,
\end{align}
where the square brackets in the indices denote the anti-symmetrization.

Next, consider the condition $\iota_X F_A=0$. Then, we can write
	\begin{align}
		F_A = \frac12 F_{ab}e^a_i e^b_j dx^i dx^j=\frac12F_{12}dx^1dx^2.
	\end{align}
Furthermore, similarly to the preceding case, we can write
	\begin{align}
		&2F_A+(e^2,B)=\\
        &=F_{ij}dx^idx^j+(\frac12e_i^ae_j^bdx^idx^je_ae_b,b^{cd}e_ce_d+b^{cn}e_ce_n) \\[2pt]
        &=F_{ab}e^a_ie^b_jdx^idx^j+b^{cd}e_i^ae_j^bg_{ac}g_{bd}dx^idx^j+b^{cn}e_i^ae_j^bg_{ac}g_{bn}dx^idx^j \\[2.5pt]
		&=e_i^ae_j^b(F_{ab}+b^{cd}g_{ac}g_{bd}+b^{cn}g_{ac}g_{bn})dx^idx^j = 0.
	\end{align}
We observe that, since the restricted inner product is non-degenerate, we have
	\begin{align}
		F_{ab}+b^{cd}g_{ac}g_{bd}+b^{cn}g_{ac}g_{bn}=0 
    \end{align}
and, given $a,b,c,d\neq3$, we can use the inverse metric to write
    \begin{align}
        b^{cd} = -(g^{ac}g^{bd}F_{ab}+g^{bd}g_{bn}b^{cn}).
	\end{align}
This result together with \cref{PWF=0} fixes uniquely the elements $b^{ab}$ in terms of $b^{an}$ (with $a,b=1,2,3$). The completion of the proof follows from analogous considerations to the ones of the scalar case in the previous section.
\end{proof}
We are now able to give the definition of the constraints of the theory.
\begin{definition}\label{constraints_YMPC}
Let $c\in\Omega_\Sigma^{0,2}[1]$, $\mu\in C^\infty(\Sigma,\mathfrak{g})[1]$, $\xi\in\mathfrak{X}(\Sigma)[1]$, $\lambda\in C^\infty(\Sigma)[1]$ and $\tau\in\mathcal{S}[1]$. Moreover, let $\rho= e^2B\in\Omega_\Sigma^{2,4*}$. Then, we define the following functionals
    \begin{align}
		L_c & = \int_\Sigma ced_\omega e \\[3pt]
        M_\mu &= \int_\Sigma \mathrm{Tr}(\mu d_A\rho) \\[3pt]
		P^{A}_\xi & = \int_\Sigma \frac{1}{2}\iota_\xi e^2F_\omega+\iota_\xi(\omega-\omega_0)ed_\omega e+\frac{1}{2}\mathrm{Tr}(\iota_\xi\rho F_A)\\
        & \phantom{= \int_\Sigma}+\mathrm{Tr}\big(\iota_\xi(A-A_0)d_A\rho\big)\\
		H^{A}_\lambda & = \int_\Sigma \lambda e_n\Big(eF_\omega+\frac{\Lambda}{3!}e^3+e\mathrm{Tr}(BF_A)+\frac{1}{2\cdot3!}e^3\mathrm{Tr}(B,B)\Big)\\[3pt]
		R_\tau & = \int_\Sigma \tau d_\omega e.
	\end{align}
We refer to these as the constraints of the Yang--Mills-Palatini--Cartan theory.
\end{definition}
\begin{theorem}\label{poiss_brack_YM}
    Let $i^*g$ be degenerate. Then the Poisson brackets of the constraints of \cref{constraints_YMPC} read
        \begin{align*}
    &\begin{aligned}
        	&\{L_c, L_c\}  = - \frac{1}{2} L_{[c,c]} & \qquad\qquad
       	& \{M_\mu, M_\mu\}  = - \frac{1}{2} M_{[\mu,\mu]} \\
       	& \{L_c, P^{A}_{\xi}\}  =  L_{\mathcal{L}_{\xi}^{\omega_0}c} &
       	& \{H^{A}_{\lambda},H^{A}_{\lambda}\}  \approx 0 \\
            & \{L_c, M_\mu\}  =  0 &
       	& \{P_\xi,M_\mu\} = M_{\mathcal{L}_\xi^{A_0}\mu} \\        
            & \{H^{A}_\lambda, M_\mu\}  =  0 &
       	& \{R_\tau, M_\mu\}  =  0  \\        
       	&  \{L_c, R_{\tau}\} =  -R_{p_{\mathcal{S}}[c, \tau]} & 
       	& \{P^{A}_{\xi},R_{\tau}\}  = R_{p_{\mathcal{S}}\mathcal{L}_{\xi}^{\omega_0}\tau}\\
       	&\{ R_{\tau}, H^{A}_{\lambda} \}   \approx G_{\lambda \tau}+K_{\lambda\tau}^{A}  &
 		& \{R_{\tau},R_{\tau}\}  \approx F_{\tau \tau}
        \end{aligned}\\
        &\{P^{A}_{\xi}, P^{A}_{\xi}\}  =  \frac{1}{2}P^{A}_{[\xi, \xi]}- \frac{1}{2}L_{\iota_{\xi}\iota_{\xi}F_{\omega_0}}-\frac12M_{\iota_\xi\iota_\xi F_{\omega_0}}\\
	   &\{L_c,  H^{A}_{\lambda}\}
        = - P^{A}_{X^{(a)}} + L_{X^{(a)}(\omega - \omega_0)_a} - H^{A}_{X^{(n)}} \\        
        &\{P^{A}_{\xi},H^{A}_{\lambda}\} =  P^{A}_{Y^{(a)}} -L_{ Y^{(a)} (\omega - \omega_0)_a} +H^{A}_{ Y^{(n)}},
    \end{align*}
	with $X=[c,\lambda e_n]$ and $Y=\mathcal{L}_\xi^{\omega_0}(\lambda e_n)$ and where the superscripts $(a)$ and $(n)$ describe their components with respect to $e_a,e_n$. Furthermore, $F_{\tau\tau}$, $G_{\lambda\tau}$ and $K_{\lambda\tau}^{A}$ are functional of $e,\omega,A,B,\tau$ and $\lambda$ defined in the proof\footnote{$F$ and $G$ are properly defined in \cite{CCT21} (proof of Theorem $30$).} which are not proportional to any other constraint.
\end{theorem}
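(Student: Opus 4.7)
The plan is to follow the template of Theorem \ref{poiss_brack_scalar}, splitting the interacting constraints as $P^A_\xi = P_\xi + p^A_\xi$ and $H^A_\lambda = H_\lambda + h^A_\lambda$, where $p^A_\xi$ and $h^A_\lambda$ collect the Yang--Mills contributions built from $A$, $B$ and $\rho = e^2 B$. By bilinearity, each bracket splits into a pure-gravity piece already computed in \cite{CCT21} and cross-terms involving the new pieces. The Yang--Mills brackets not involving $R_\tau$ are already established in the non-degenerate case in \cite{CCR2022} and carry over unchanged after the representative-fixing of Lemma \ref{uniquerep_YM}, so the genuinely new work concerns the brackets of $R_\tau$ and $M_\mu$ against the remaining constraints.

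First I would compute the Hamiltonian vector fields with respect to $\varpi = \int_\Sigma e\,\delta e\,\delta[\omega] + \tfrac12 \mathrm{Tr}(\delta(e^2[B])\,\delta A)$. For $R_\tau$, the components along $e$ and $\omega$ agree with the pure-gravity ones, encoding the $e$-dependence of $\tau\in\mathcal S$ through the formal expression $g(\tau,\omega,e)$, while the components along $A$ and $\rho$ vanish. For $M_\mu=\int_\Sigma \mathrm{Tr}(\mu\, d_A\rho)$ only the $A$ and $\rho$ components are non-zero, giving $\mathbb{M}_A=d_A\mu$ and $\mathbb{M}_\rho=[\mu,\rho]$. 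It follows immediately that $\{L_c,M_\mu\}=\{H^A_\lambda,M_\mu\}=\{R_\tau,M_\mu\}=0$, that $\{M_\mu,M_\mu\}=-\tfrac12 M_{[\mu,\mu]}$ reproduces the standard Gauss-law structure, and that $\{P^A_\xi,M_\mu\}=M_{\mathcal L^{A_0}_\xi \mu}$ follows from the $\iota_\xi(A-A_0)$ term in $p^A_\xi$ pairing against $d_A\mu$.

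Next I would address the brackets involving $R_\tau$. The bracket $\{R_\tau,p^A_\xi\}$ vanishes since $p^A_\xi$ has trivial Hamiltonian components along $e$ and $\omega$, so $\{R_\tau,P^A_\xi\}=\{R_\tau,P_\xi\}=R_{p_\mathcal{S}\mathcal L^{\omega_0}_\xi \tau}$ as in \cite{CCT21}. The remaining new computation is $\{R_\tau,H^A_\lambda\}=\{R_\tau,H_\lambda\}+\{R_\tau,h^A_\lambda\}$: the first summand yields $G_{\lambda\tau}$ from \cite{CCT21}, and the second produces the new functional $K^A_{\lambda\tau}$. To evaluate it, I would read off $\mathbb{h}^A_e$ and $\mathbb{h}^A_\omega$ from $\delta h^A_\lambda$ (so in particular terms of the form $\lambda e_n \mathrm{Tr}(BF_A)$, $\lambda e_n e\,\mathrm{Tr}(B,B)$) and compute $\int_\Sigma \mathbb{R}_e\,\mathbb{h}^A_e + \mathbb{R}_\omega\,\mathbb{h}^A_\omega$, simplifying with $e\tau=0$, $e[\tau,e]=0$, the fixing conditions $F_A+\tfrac12(e^2,B)=0$ and $p_{\Omega^{0,1*}_e\wedge W}B=0$, and the diagonalization trick $[e_n,e]=(e_n,e)$ already used in the scalar case. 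What does not cancel defines $K^A_{\lambda\tau}$, and one verifies it is not proportional to any other constraint.

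The remaining mixed brackets $\{L_c,H^A_\lambda\}$, $\{P^A_\xi,H^A_\lambda\}$, $\{H^A_\lambda,H^A_\lambda\}$ and $\{P^A_\xi,P^A_\xi\}$ are then obtained in direct analogy with the scalar treatment: the pure-gravity parts produce the terms listed, while the $h^A_\lambda$ and $p^A_\xi$ contributions reassemble into $H^A_{X^{(n)}}$, $H^A_{Y^{(n)}}$ and the extra $-\tfrac12 M_{\iota_\xi\iota_\xi F_{\omega_0}}$ in $\{P^A_\xi,P^A_\xi\}$ once the $\iota_\xi(A-A_0)d_A\rho$ term is reorganized via the Gauss law. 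I expect the main obstacle to be precisely the explicit evaluation of $\{R_\tau,h^A_\lambda\}$: unlike in the scalar case, the $B$-dependent pieces carry two Minkowski indices, so the identity $e[\tau,e]=0$ no longer suffices to kill every offending term, and showing that what survives is not proportional to any constraint (keeping $R_\tau$ second class in the Yang--Mills theory as well) will require the full strength of both fixing conditions in Lemma \ref{uniquerep_YM} together with the vielbein basis adapted to $W=e(\mathrm{Ker}(i^*g))$.
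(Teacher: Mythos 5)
Your proposal follows essentially the same route as the paper's proof: the same splits $P^A_\xi = P_\xi + p^A_\xi$ and $H^A_\lambda = H_\lambda + h^A_\lambda$, the same Hamiltonian vector fields for $R_\tau$ and $M_\mu$, the same vanishing arguments for the brackets of $R_\tau$ with $p^A_\xi$ and $M_\mu$, and the same strategy of using $e[\tau,e]=0$ together with the representative-fixing condition $F_A+\tfrac12(e^2,B)=0$ to isolate the surviving terms of $\{R_\tau,h^A_\lambda\}$ that define $K^A_{\lambda\tau}$ on the constraint surface. The only slight overstatement is calling $\{H^A_\lambda,M_\mu\}=0$ ``immediate'' from the vector fields (both constraints have nontrivial components along $A$ and $\rho$, so a short cancellation is still needed), but the paper itself simply imports that bracket from the earlier references, so nothing essential is missing.
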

\begin{proof}
Similarly to the proof of \cref{poiss_brack_scalar}, we will introduce now a split in some of the constraints. In this case, we have
    \begin{align}\label{split_int_YM}
        &P^{A}_\xi = P_\xi+p^{A}_\xi& \,\, &H^{A}_\lambda = H_\lambda+h^{A}_\lambda.
    \end{align}
Moreover, from \cite{CCT21} and \cite{CCR2022}, we have knowledge of the following brackets
    \begin{align*}
        &\begin{aligned}
        	&\{L_c,L_c\} = -\frac{1}{2}L_{[c,c]} & \qquad\qquad
       	& \{L_c,P^{A}_\xi\}   = L_{\mathcal{L}_\xi^{\omega_0}c} \\[2pt]
            &\{P^{A}_\xi,P^{A}_\xi\} = \frac{1}{2}P^{A}_{[\xi,\xi]}-\frac{1}{2}L_{\iota_\xi\iota_\xi F_{\omega_0}}& \qquad\qquad
            & \{L_c,R_\tau\} = -R_{p_\mathcal{S}[c,\tau]}\\[2pt]
            &\{P_\xi,R_\tau\}=R_{p_\mathcal{S}\mathcal{L}_\xi^{\omega_0}\tau}& \qquad\qquad
            &\{ R_{\tau}, H_{\lambda} \}   \approx  G_{\lambda \tau} \\[2pt]
            &\{R_{\tau},R_{\tau}\}  \approx F_{\tau \tau}& \qquad\qquad
            &\{H^A_{\lambda},H^A_{\lambda}\}  \approx 0\\[2pt]
            &\{M_\mu, M_\mu\}  = - \frac{1}{2} M_{[\mu,\mu]}& \qquad\qquad
            &\{L_c, M_\mu\}  =  0\\[2pt]
            & \{P^{A}_\xi,M_\mu\} = M_{\mathcal{L}_\xi^{A_0}\mu}& \qquad\qquad
            &\{M_\mu,H^{A}_\lambda\}=0\\[2pt]
        \end{aligned}\\
	   &\{L_c,H^{A}_\lambda\}=-P^{A}_{X^{(a)}}+L_{X^{(a)}(\omega-\omega_0)_a}-         H^{A}_{X^{(n)}}\\[2pt]
        &\{P_\xi,H_\lambda^{A}\}=P^{A}_{Y^{(a)}}-L_{Y^{(a)}(\omega-\omega_0)_a}+H^{A}_{Y^{(n)}},
    \end{align*}
with $F$ and $G$ non-identically-vanishing functional of $\tau$ and $\lambda$ defined in \cite{CCT21} (Theorem $30$), $X=[c,\lambda e_n]$ and $Y=\mathcal{L}_\xi^{\omega_0}(\lambda e_n)$. We are thus left with computing the remaining brackets.

Equivalently to the scalar case, the Hamiltonian vector fields for $R_\tau$ are given by
	\begin{align}
		e\mathbb{R}_e & = [\tau,e] & e\mathbb{R}_\omega & = g(\tau,\omega,e)+d_\omega\tau \\
		\mathbb{R}_A & = 0 & \mathbb{R}_\rho & = 0,
	\end{align}
since in does not possess any variation along the gauge fields. We consider now the variation
    \begin{align}
		\delta M_\mu & = \int_\Sigma \mathrm{Tr}(\mu\delta(d_A\rho)) = \int_\Sigma \mathrm{Tr}(-\mu([\delta A,\rho]+d_A(\delta\rho)) \\
		& = \int_\Sigma \mathrm{Tr}\big([\mu,\rho]\delta A+d_A\mu\,\delta\rho\big),
	\end{align}
	and therefore we obtain the following Hamiltonian vector fields
	\begin{align}
		\mathbb{M}_e & = 0 & \mathbb{M}_\omega & = 0 \\
		\mathbb{M}_\rho & = [\mu,\rho] & \mathbb{M}_A & = d_A\mu.
	\end{align}
From \cite{CCR2022}, for $p^{A}_\xi$, we have
	\begin{align}
		\mathbb{p}_e^{A} & = 0 & \mathbb{p}_\omega^{A} & = 0 \\
		\mathbb{p}_\rho^{A} & = -\mathcal{L}_\xi^{A_0}\rho & \mathbb{p}_A^{A} & = -\mathcal{L}_\xi^{A_0}(A-A_0)-\iota_\xi F_{A_0},
	\end{align}
whereas, for $h^{A}_\lambda$, the Hamiltonian vector fields read
        \begin{align}
            & \mathbb{h}_e^{A} = 0&
            &\mathbb{h}_\rho^{A} = d_A(\lambda e_neB)\\
            &\mathbb{h}_A^{A} = \lambda(B,e_ne)&
            &e\mathbb{h}_\omega^{A} = \mathrm{Tr}\big(\lambda e_nBF_A+\lambda e_n\frac{e^2}{4} (B,B)-\lambda eB(B,e_ne)\big).
        \end{align}
Now, we are left with computing the Poisson brackets of the constraints for $\{R_\tau,h_\lambda^{A}\}$, $\{R_\tau,p_\xi^{A}\}$ and $\{R_\tau,M_\mu\}$. We start with noticing that
\begin{align}
    \{R_\tau,p^{A}_\xi\}=\{R_\tau,M_\mu\}=0
\end{align}
since both $p_\xi^{A}$ and $M_\mu$ have vanishing Hamiltonian vector fields along $e$ and $\omega$. Then, we are left with computing
    \begin{align}
		\{R_\tau,h_\lambda^{A}\} &= \int_\Sigma \mathrm{Tr}\Big(\lambda e_nBF_AW_1^{-1}[\tau,e]+\lambda e_n\frac{e}{4}(B,B)[\tau,e]\\
        &\phantom{= \int_\Sigma}-\lambda B(B,e_ne)[\tau,e]\Big),
	\end{align}
where the second term is zero because of $e[\tau,e]=0$ and the first and third terms in general do not vanish. In fact, we have
	\begin{align}
		&\{R_\tau,h_\lambda^{A}\} =\\[4pt]
        &=\int_\Sigma \mathrm{Tr}\Big(\lambda e_nBF_AW_1^{-1}[\tau,e]-\lambda B(B,e_ne)[\tau,e]\Big) \\[4pt]
		&  = \int_\Sigma \mathrm{Tr}\Big(\frac{\lambda e_n}{2}B(B,e^2)-\lambda B(B,e_ne)e\Big)W_1^{-1}[\tau,e] \\[4pt]
		&  = \int_\Sigma \mathrm{Tr}\bigg(\lambda B\Big(\frac{e_n}{2}(B,e^2)-(B,e_ne)e\Big)\bigg)W_1^{-1}[\tau,e]	 \\[4pt]
		& \approx\colon K_{\lambda\tau}^{A},
	\end{align}
where $W_1^{-1}\colon\Omega_\Sigma^{2,2}\to\Omega_\Sigma^{1,1}$ indicates the inverse of the map $W_1^{\Sigma,(1,1)}$ and the symbol $\approx\colon$ means that we are defining the quantity $K_{\lambda\tau}^{A}$ on the constraint submanifold. Then, thanks to Corollary $12$ of \cite{CCT21}, we can write the explicit form of $K_{\lambda\tau}^{A}$ by means of the independent components $\mathcal X$ and $\mathcal Y$ of $\tau$, defined in Proposition $8$ of\cite{CCT21}. Hence, we define $ K_{\lambda\tau}^{A}$ as
\begin{align}\label{dec_KA}
    K_{\lambda\tau}^{A}&=\int_\Sigma\mathrm{Tr}\bigg(\lambda\Big( \frac{1}{2}\big(\sum_{\mu=1}^{2} \mathcal{Y}_{\mu} \mathcal C_\mu^\mu-\sum_{\mu_1\neq\mu_2=1}^{2} \mathcal{X}_{\mu_1}^{\mu_2}\mathcal C^{\mu_1}_{\mu_2}\big)\\
    &\phantom{=\int_\Sigma}-\big(\sum_{\mu=1}^{2} \mathcal{Y}_{\mu} \mathcal D_\mu^\mu-\sum_{\mu_1\neq\mu_2=1}^{2} \mathcal{X}_{\mu_1}^{\mu_2}\mathcal D^{\mu_1}_{\mu_2}\big)\Big)\bigg),
\end{align}
where $\mathcal C_\sigma^\rho\coloneqq (B^{\rho3}-B^{\rho4})(B,e^2)_{3\sigma}$ and $\mathcal D_\sigma^\rho\coloneqq (B^{\rho3}-B^{\rho4})(B,e_ne)_{\sigma}$.

Therefore, thanks to the linearity of the Poisson brackets together with the known results, this completes the proof.
\end{proof}
\subsection{Spinor field} The concept of a spinor field is central in mathematical physics. The idea of a spinor field is funded on the definition a particular subalgebra of the tensor algebra over a vector space, called the \textit{Clifford algebra}.
In the following, we will recall the basic and fundamental results about the structure of these algebras in order to be able to write the Palatini--Cartan theory coupled with a Dirac spinor.
\begin{definition}\label{def_cliffalg}
    Let, $V$ be a vector space over $\mathbb K=\mathbb R, \mathbb C$ and $g\colon V\times V\to \mathbb K$ be a symmetric bilinear form.\footnote{We call this space a \textit{quadratic vector space}.} Moreover, let $I_g$ be the two sided ideal in the tensor graded algebra $T(V)$ of $V$ generated by
    \begin{align}
        \{v\otimes v+g(v,v)1, v\in V\},
    \end{align}
    where $1\in T(V)$ is the unit element. Then, we define the Clifford algebra $\mathrm{Cl}(V,g)$ as the filtered algebra given by the quotient
    \begin{align}
        \mathrm{Cl}(V,g)\coloneqq \frac{T(V)}{I_g}.
    \end{align}
\end{definition}
\begin{remark}
    The general definition of a Clifford algebra is given by means of a universal property in the category of unital associative algebras. One can recover \cref{def_cliffalg} by building a functor between the category of vector spaces endowed with a symmetric bilinear form and the category of unital associative algebras. Then, the universal property guarantees that morphisms extend uniquely to Clifford algebras homomorphisms.
\end{remark}
In the following, we will state some results which are well-known facts in the literature. They will serve as a basis in order to build the theory of spin coframes, which can be regarded as a sort of generalization of the vielbein and the coframe formalism. We refer to \cite{wernli2019}, \cite{fatibene2018} and references therein for the proofs of these results as well as more details.
\begin{definition}
    Let $V$ be a quadratic vector space on $\mathbb R$ and let $(p,q)$ be the signature of $g$. Moreover, let $\mathrm{Cl}^{+}(V,g)\coloneqq \mathrm{Cl}^0\oplus \mathrm{Cl}^2\oplus \mathrm{Cl}^4\oplus...$ be the subalgebra defined by the even grading. We define the group $\mathrm{Pin}_{p,q}\subset \mathrm{Cl}(V,g)$ as the subgroup of the group of units in $\mathrm{Cl}(V,g)$ generated by $v\in V$ such that $|g(v,v)|=1$.\\
    Then, we defined the group $\mathrm{Spin}_{p,q}$ as the subgroup of $\mathrm{Pin}_{p,q}$ given by
    \begin{align}
        \mathrm{Spin}_{p,q}\coloneqq \mathrm{Pin}_{p,q}\cap \mathrm{Cl}^{+}(V,g).
    \end{align}
\end{definition}
\begin{proposition}\label{spin_un_cov}
    Let $V$ be a quadratic vector space on $\mathbb R$ and let $(p,q)$ be the signature of $g$. Moreover, let $\rho\colon\mathrm{Spin}_{p,q}\to \mathrm{GL}\big(\mathfrak{spin}_{p,q}\big)$ be the adjoint representation. Then, we have the following:
    \begin{enumerate}
        \item[--] $\mathfrak{spin}_{p,q}\subset\mathrm{Cl}(V,g)$;\\\vspace{-0.25cm}
        \item[--] The map $\rho$ acts as $\mathrm{SO}(p,q)$ on $V$\footnote{Here, we regard $V$ as a first grade subspace of the Clifford algebra.} (or, for its complexification, as $\mathrm{SO}(n)\times \mathrm{U}(1)$ with $n=p+q$);\\\vspace{-0.25cm}
        \item[--] The map $\rho$ defines a covering map\footnote{By abuse of notation, we denote the covering map and the adjoint representation in the same manner.} $\rho\colon\mathrm{Spin}_{p,q}\to\mathrm{SO}(p,q)$.
    \end{enumerate}
    Furthermore, the group $\mathrm{Spin}_{p,q}$ is simply connected and it is the universal cover of $\mathrm{SO}(p,q)$. Therefore, in particular, $\mathrm{Spin}_{3,1}\cong\mathrm{SL}(2,\mathbb C)$.
\end{proposition}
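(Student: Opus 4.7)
The four items are classical, and the plan is to reduce each to a direct computation inside the Clifford algebra, then invoke a standard surjectivity/kernel argument for the covering.

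For item (i), the plan is to observe that $\mathrm{Spin}_{p,q}$ is by construction a closed subgroup of the group of units $\mathrm{Cl}(V,g)^\times$, hence a Lie subgroup, and its tangent space at the identity sits inside the ambient algebra $\mathrm{Cl}(V,g)$. Explicitly, I would differentiate one-parameter subgroups of the form $t\mapsto \exp(t\,u_1u_2)$ with $g(u_i,u_i)=\pm 1$ and $g(u_1,u_2)=0$, identifying $\mathfrak{spin}_{p,q}$ with the bivector subspace $\bigwedge^2 V\subset \mathrm{Cl}^+(V,g)$. For item (ii), I would set up the twisted adjoint action. For a unit vector $v\in V$ one has $v^{-1}=\pm v$ from the defining relation $v\otimes v=-g(v,v)\,1$, and a direct Clifford computation shows that the map $w\mapsto -vwv^{-1}$ on $V$ is the reflection through the hyperplane $v^\perp$, which preserves $g$. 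Since any element of $\mathrm{Spin}_{p,q}$ is a product of an even number of unit vectors, the signs collapse and the ordinary adjoint $\rho(g)w=gwg^{-1}$ preserves both $V$ and $g$, i.e.\ lies in $\mathrm{O}(p,q)$; the determinant is $(+1)^{\text{even}}=1$, landing in $\mathrm{SO}(p,q)$. The complexified statement follows by observing that after complexification the extra one-dimensional center is picked up by a $\mathrm{U}(1)$ factor.

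For item (iii), surjectivity of $\rho\colon \mathrm{Spin}_{p,q}\to \mathrm{SO}(p,q)$ (onto the connected component) is the Cartan--Dieudonn\'e theorem: every element of $\mathrm{SO}(p,q)_0$ is a product of an even number of hyperplane reflections, and each such reflection lifts, by the computation of item (ii), to a unit vector in $\mathrm{Pin}_{p,q}$, whose even-length products lie in $\mathrm{Spin}_{p,q}$. For the kernel, if $\rho(g)=\mathrm{id}$, then $g$ commutes with every $v\in V$ and hence with every element of $\mathrm{Cl}(V,g)$, so $g$ lies in the center. A bookkeeping argument using the $\mathbb{Z}/2$-grading and the Clifford relations forces $g\in \{\pm 1\}$, giving a two-sheeted covering. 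Since the kernel is discrete and $\rho$ is a Lie group homomorphism, $\rho$ is a covering map.

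For the universal-cover claim and the identification $\mathrm{Spin}_{3,1}\cong \mathrm{SL}(2,\mathbb{C})$, the plan is first to show $\mathrm{Spin}_{p,q}$ is connected (the product of two unit vectors with the same norm squared connects $\pm 1$ via a continuous path, hence the full group is connected), and then to invoke the fact that $\pi_1(\mathrm{SO}(p,q)_0)=\mathbb{Z}/2$ in the relevant signatures so that the connected double cover coincides with the universal cover. For the concrete $3{+}1$ case, I would use the standard algebra isomorphism $\mathrm{Cl}^+(3,1)\cong M_2(\mathbb{C})$, realized via the Pauli matrices, and check that under this isomorphism the defining conditions of $\mathrm{Spin}_{3,1}$ translate to $\det=1$ in $M_2(\mathbb{C})$, yielding $\mathrm{Spin}_{3,1}\cong \mathrm{SL}(2,\mathbb{C})$; simple connectedness then follows from that of $\mathrm{SL}(2,\mathbb{C})$. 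The main obstacle is the sign bookkeeping between the twisted and untwisted adjoint actions, and the careful handling of the non-compact signature when invoking Cartan--Dieudonn\'e and computing the fundamental group; in practice I would refer to \cite{wernli2019, fatibene2018} for those steps.
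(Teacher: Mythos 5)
The paper does not actually prove this proposition: it is stated as a collection of well-known facts and the reader is referred to \cite{wernli2019} and \cite{fatibene2018} for the proofs. Your outline therefore supplies an argument where the paper supplies a citation, and the route you take --- identifying $\mathfrak{spin}_{p,q}$ with the bivectors via one-parameter subgroups $\exp(tu_1u_2)$, showing that conjugation by a unit vector is (up to sign) the reflection in $v^\perp$, deducing that the adjoint action of even products lands in $\mathrm{SO}(p,q)$, and getting surjectivity from Cartan--Dieudonn\'e and the kernel $\{\pm 1\}$ from the centre computation in $\mathrm{Cl}^+(V,g)$ --- is the standard textbook proof and is sound for items (i)--(iii). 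The identification $\mathrm{Cl}^+(3,1)\cong M_2(\mathbb{C})$ and hence $\mathrm{Spin}_{3,1}\cong\mathrm{SL}(2,\mathbb{C})$ is also correct.

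The one genuine gap is in the final step. Your argument that ``the product of two unit vectors \dots connects $\pm 1$ via a continuous path, hence the full group is connected'' is a non sequitur in indefinite signature: the path only shows that $+1$ and $-1$ lie in the same component, i.e.\ that the fibre of $\rho$ over the identity is contained in one component. Since $\mathrm{SO}(p,q)$ itself has two connected components when $p,q\geq 1$, so does $\mathrm{Spin}_{p,q}$ as defined here (even products of unit vectors of norm $\pm 1$), and only the identity component $\mathrm{Spin}^+_{p,q}$ is the connected double cover of $\mathrm{SO}^+(p,q)$. Likewise, simple connectedness fails for general $(p,q)$ (e.g.\ $\mathrm{Spin}^+(2,2)\cong\mathrm{SL}(2,\mathbb{R})\times\mathrm{SL}(2,\mathbb{R})$ is not simply connected), so your hedge ``in the relevant signatures'' is doing essential work: the claim holds for definite and for Lorentzian signature, where $\pi_1(\mathrm{SO}^+(n,1))\cong\pi_1(\mathrm{SO}(n))\cong\mathbb{Z}/2$. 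To be fair, the proposition as stated in the paper is itself loose on exactly this point (it implicitly means the identity component, as is common in the physics literature), and the only case used in the paper is $(3,1)$, where your argument via $\mathrm{SL}(2,\mathbb{C})$ does go through. You should either restrict the connectedness and universal-cover claims to the identity component in Lorentzian signature, or state explicitly that $\mathrm{Spin}_{p,q}$ here denotes the connected component of the identity.
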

\begin{definition}\label{spinbeindef}
	 Let $\hat P$ be a principal $\mathrm{Spin}_{p,q}$-bundle on $M$ and $LM$ the frame bundle. Then, we define the \emph{spin map} $E\colon \hat P\to LM$ as the principal bundle morphism such that the following diagram commutes
\begin{center}
\begin{tikzcd}
		\hat{P} \arrow[rdd, "\hat{p}", bend right] \arrow[rr, "E"] \arrow[rd, "\rho"] &                                  & LM \arrow[ldd, "\pi"', bend left] \\
		& P \arrow[d, "p"] \arrow[ru, "\tilde e"] &                                   \\
		& M                                &                                  
	\end{tikzcd}
	\end{center}
where $\rho\colon \hat P\to P$ denotes the bundle morphism induced by the covering map of \cref{spin_un_cov} and $\tilde e$ the vielbein of \cref{coframedef}.
\end{definition}
    The following result will be a particular example of the broader spectrum of the classification of Clifford algebras. In a nutshell, they exhibit a $2$-periodicity in the complex case and a $8$-periodicity in the real case.
\begin{theorem}\label{cliff_isom}
   Let $V$ be a $4$-dimensional quadratic vector space on $\mathbb K$ and, in particular, if $\mathbb K=\mathbb R$, let $(p,q)=(3,1)$. Furthermore, let $\mathrm{M}_{4\times 4}(\mathbb K)_{Cl}$ denote the algebra of $4\times4$ matrices on $\mathbb K$ endowed with the Clifford structure. Then, we have the following isomorphism
   \begin{align}
       \lambda\colon\mathrm{Cl}(V,g)\to \mathrm{M}_{4\times 4}(\mathbb K)_{Cl}.
   \end{align}
\end{theorem}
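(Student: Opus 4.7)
The plan is to exhibit $\lambda$ explicitly via a choice of generators (the Dirac $\gamma$-matrices), then use the universal property of the Clifford algebra and a dimension-count argument to upgrade the resulting algebra homomorphism into an isomorphism.

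First I would fix an orthonormal basis $\{v_1,v_2,v_3,v_4\}$ of $V$ with $g(v_a,v_b)=\eta_{ab}$, where $\eta=\mathrm{diag}(1,1,1,-1)$ in the real case $(p,q)=(3,1)$, and $\eta=\mathrm{diag}(1,1,1,1)$ in the complex case (any non-degenerate $g$ on a $4$-dimensional complex space is equivalent to this after a linear change of basis). I would then exhibit an explicit quadruple of matrices $\gamma_1,\gamma_2,\gamma_3,\gamma_4\in\mathrm{M}_{4\times 4}(\mathbb K)$ satisfying
\begin{equation*}
\gamma_a\gamma_b+\gamma_b\gamma_a=-2\eta_{ab}\,\mathbf{1},
\end{equation*}
for instance the standard Dirac representation built from Pauli matrices (for the complex case, or after complexification), and its real analogue for $(3,1)$ where such a real representation is well known to exist (this is the only real signature in dimension $4$ for which $\mathrm{M}_{4\times 4}(\mathbb R)$ is the target, which matches the statement).

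Next I would invoke the universal property of $\mathrm{Cl}(V,g)$: the linear map $v_a\mapsto \gamma_a$ extends by universality to a unital algebra homomorphism $\lambda\colon\mathrm{Cl}(V,g)\to \mathrm{M}_{4\times 4}(\mathbb K)$, because the generators satisfy exactly the Clifford relation built into the ideal $I_g$ of \cref{def_cliffalg} (note the sign convention $v\otimes v+g(v,v)\mathbf{1}\in I_g$ forces $\gamma_a^2=-\eta_{aa}\mathbf{1}$, which matches the choice above).

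For bijectivity I would use a dimension argument. From the basis $\{v_a\}$ one checks that the $16$ ordered products $v_{i_1}\cdots v_{i_k}$ with $1\le i_1<\cdots<i_k\le 4$ (including the empty product $=1$) span $\mathrm{Cl}(V,g)$, and a standard filtration argument shows they are linearly independent, so $\dim_{\mathbb K}\mathrm{Cl}(V,g)=16=\dim_{\mathbb K}\mathrm{M}_{4\times 4}(\mathbb K)$. Hence it suffices to show $\lambda$ is injective. I would do this by observing that $\mathrm{M}_{4\times 4}(\mathbb K)$ is simple, so $\ker\lambda$ is a proper two-sided ideal in $\mathrm{Cl}(V,g)$; and that $\mathrm{Cl}(V,g)$ is itself simple in even dimension (a classical fact, provable by showing that the corresponding $16$ products $\gamma_{i_1}\cdots\gamma_{i_k}$ are linearly independent in $\mathrm{M}_{4\times 4}(\mathbb K)$, for instance via traces $\mathrm{tr}(\gamma_{i_k}\cdots\gamma_{i_1}\gamma_{j_1}\cdots\gamma_{j_l})$). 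Equivalently, and more concretely, I would simply verify directly that the images $\lambda(v_{i_1}\cdots v_{i_k})$ are linearly independent over $\mathbb K$, which implies $\ker\lambda=0$ and, by the dimension equality, that $\lambda$ is the desired isomorphism.

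The main obstacle is the last step: producing an efficient proof of linear independence of the $16$ image matrices (or equivalently, of simplicity of $\mathrm{Cl}(V,g)$ in this signature). In the complex case this is immediate from the well-known matrix algebra isomorphisms coming from the $2$-periodicity of complex Clifford algebras, while in the real signature $(3,1)$ one must be careful that $\mathrm{M}_{4\times 4}(\mathbb R)$ — and not $\mathrm{M}_{2\times 2}(\mathbb H)$ — is the correct target, which is why the statement singles out $(p,q)=(3,1)$. Once linear independence is established, the proof concludes by comparing dimensions.
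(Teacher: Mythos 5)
The paper itself offers no proof of \cref{cliff_isom}: the statement is quoted as a special case of the classification of Clifford algebras ($2$-periodicity over $\mathbb C$, $8$-periodicity over $\mathbb R$) and referred to the literature. Your strategy --- exhibit $\gamma$-matrices, extend to an algebra map by the universal property, then combine $\dim_{\mathbb K}\mathrm{Cl}(V,g)=16=\dim_{\mathbb K}\mathrm M_{4\times 4}(\mathbb K)$ with linear independence of the sixteen products (equivalently, simplicity) --- is the standard self-contained argument, and it is correct as it stands for $\mathbb K=\mathbb C$.

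There is, however, a genuine gap in the real case, located exactly at the step you dismiss as ``well known to exist''. With the relation of \cref{def_cliffalg}, $v^2=-g(v,v)$, and $\eta=\mathrm{diag}(1,1,1,-1)$, the orthonormal generators square to $(-1,-1,-1,+1)$; the real Clifford algebra on four anticommuting generators with these squares is $\mathrm M_{2\times 2}(\mathbb H)$, not $\mathrm M_{4\times 4}(\mathbb R)$. Consequently no real $4\times 4$ matrices with $\gamma_a\gamma_b+\gamma_b\gamma_a=-2\eta_{ab}$ can exist: if they did, the universal property would produce a unital homomorphism $\mathrm M_{2\times 2}(\mathbb H)\to\mathrm M_{4\times 4}(\mathbb R)$, which is injective by simplicity and hence, by equality of dimensions, an isomorphism of two non-isomorphic algebras (the unique simple real module of $\mathrm M_{2\times 2}(\mathbb H)$ is $\mathbb H^2\cong\mathbb R^8$, so it has no $4$-dimensional real representation). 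The real (Majorana-type) generators you invoke exist for the opposite overall sign, i.e.\ for generators squaring to $(+1,+1,+1,-1)$ --- equivalently $\{\gamma_a,\gamma_b\}=+2\eta_{ab}$ with the mostly-plus $\eta$, or $-2\eta_{ab}$ with mostly-minus signature. So your construction of the real representation fails as written and goes through only after flipping either the sign in the Clifford relation or the signature; this tension is arguably inherited from the statement itself, and is immaterial for the remainder of the paper, which only uses the complexified algebra and the $\gamma$-representation on $\mathbb C^4$. The remaining ingredients of your argument (universal property, the dimension count $2^4=16$, and linear independence of the sixteen image products via traces or simplicity) are sound.
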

\begin{remark}
    If we consider the complexification of the algebra $\mathfrak{spin}_{3,1}^{\mathbb C}$, as a consequence of \cref{cliff_isom}, the adjoint representation $\rho\colon\mathrm{Spin}_{3,1}\to \mathrm{GL}\big(\mathfrak{spin}_{3,1}^{\mathbb C}\big)$ can be regarded as acting on $\mathrm{M}_{4\times 4}(\mathbb C)_{Cl}$, since $\mathfrak{spin}_{p,q}\subset\mathrm{Cl}(V,g)$. Moreover, we know by \cref{spin_un_cov} that $\rho$ acts as $\mathrm{SO}(3,1)$ on $V$. Hence, this statement takes the form
    \begin{align}
        \rho_S (\gamma^a)=S\gamma^a S^{-1} =\Lambda^a_b \gamma^b,
    \end{align}
    where  $S\in\lambda\big(\mathrm{Spin}_{3,1}\big)$ and $\Lambda\in\mathrm{SO}(3,1)$ is the matrix associated to $S$ under the covering map with $a,b=1,2,3,4$. In other words, the complexified algebra of the spin group, where the adjoint representation acts, can be expressed in terms of $\gamma$-matrices, which can be also labeled according to a basis of $V$, i.e. $\gamma=\gamma^av_a\in V\otimes \mathrm{M}_{4\times 4}(\mathbb C)_{Cl}$, such that the Clifford relation reads
    \begin{align}
        \{\gamma^a,\gamma^b\}=-2\eta^{ab}1_{4\times4},
    \end{align}
    where the brackets denote the anti-commutators.\\
    Furthermore, if we denote with $\displaystyle{f\colon \mathrm{SO}(3,1)\to \mathrm{Aut}(V)}$ the fundamental representation of $\mathrm{SO}(3,1)$, by composition with the adjoint representation of $\mathrm{Spin}_{3,1}$, we can construct the Minkowski bundle as the associated vector bundle to $\hat P$ under the composition, i.e. 
    \begin{align}
    \mathcal V\coloneqq \hat P\times_{f\circ \rho} V.    
    \end{align}
    \end{remark}
    Note that the isomorphism of \cref{cliff_isom} defines a representation of the complexified group $\mathrm{Spin}_{3,1}^{\mathbb C}$ on $\mathbb C^4$. This representation is called the $\gamma$-representation and it corresponds to the representation $(\frac12,0)\oplus(0,\frac12)$ of $\mathrm{SL}(2,\mathbb C)$ (thanks to the group isomorphism $\mathrm{Spin}_{3,1}\cong\mathrm{SL}(2,\mathbb C)$). This fact allows to have the following definition.
    \begin{definition}
        Let $\gamma\colon\mathrm{Spin}_{3,1}^{\mathbb C}\to \mathrm{Aut}(\mathbb C^{4})$ be the  $\gamma$-representation of the spin group. Then, we define the \emph{spinor bundle} as the associated vector bundle to $\hat P$ under $\gamma$, namely
        \begin{align}
            S\coloneqq \hat P\times_{\gamma}\mathbb C^4.
        \end{align}
        We define a \emph{spinor field}\footnote{In our case, we will only deal with Dirac spinors. Therefore, the term “spinor" refers uniquely to a Dirac one. In a more general setting, we must slightly generalize our definition in order to include other spin structures.} as a section of the odd-bundle $\Pi S$, where $\Pi$ indicates the parity reversal operation\footnote{Parity inversion is fundamental since we want spinors to be Grassmannian/odd quantities.}.
    \end{definition}
    \begin{remark}\label{rem_gamma_matr1}
        Notice that, in this context, we can regard the $\gamma$-matrices as elements $\gamma\in\Gamma\big(\mathcal V\otimes \mathrm{End}(\Pi S)\big)$. Note also that, by construction, the parity of a spinor field $\psi\in\Gamma(\Pi S)$ is given by $|\psi|=1$.
    \end{remark}
\begin{proposition}\label{iso_spin_wedge2}
    Given a real vector space $V$ and the isomorphism $\mathfrak{so}(3,1)\cong\bigwedge^2 V$, we have the following algebra isomorphism
    \begin{align}
        \mathrm{d}\rho\colon \mathfrak{spin}_{3,1}\to\bigwedge^2 V,
    \end{align}
    which is given by
    \begin{align}
        \mathrm{d}\rho^{-1}(v\wedge w)=-\frac14[\tilde v,\tilde w],
    \end{align}
    where $v,w\in V$, $\tilde v,\tilde w\in\mathfrak{spin}_{3,1}$ and $\rho\colon\mathrm{Spin}_{3,1}\to \mathrm{GL}\big(\mathfrak{spin}_{3,1}\big)$ is the adjoint representation.
\end{proposition}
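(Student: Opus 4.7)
The plan is to leverage the fact that $\mathrm{Spin}_{3,1}$ is a double cover of $\mathrm{SO}(3,1)$ realized by the adjoint action on $V$ regarded as the grade-one subspace of the Clifford algebra. Differentiating the covering map $\rho$ of \cref{spin_un_cov} at the identity yields the Lie algebra isomorphism $\mathrm{d}\rho\colon\mathfrak{spin}_{3,1}\to\mathfrak{so}(3,1)$, and composing with the canonical identification $\mathfrak{so}(3,1)\cong\bigwedge^2 V$ of \cref{prop:isowedge} produces the map whose inverse we wish to describe explicitly.

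First I would identify $\mathfrak{spin}_{3,1}$ inside $\mathrm{Cl}(V,g)$: since $\mathrm{Spin}_{3,1}\subset\mathrm{Cl}^+(V,g)$ and $\dim\mathfrak{so}(3,1)=6$, the Lie algebra $\mathfrak{spin}_{3,1}$ is spanned by grade-two elements, equivalently by the commutators $\tfrac{1}{4}[\tilde v,\tilde w]$ with $v,w\in V$, the Lie bracket being inherited from the associative product in $\mathrm{Cl}(V,g)$. A dimension count confirms that this linear span has dimension six, matching that of $\mathfrak{so}(3,1)$.

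Next I would compute the action of $\xi=-\tfrac{1}{4}[\tilde v,\tilde w]$ on an arbitrary $u\in V$ through the infinitesimal adjoint representation, namely $\mathrm{ad}_\xi u = \xi u - u\xi$. Using the derivation identity $[ab,c]=a\{b,c\}-\{a,c\}b$ together with the Clifford relation $\{\tilde v,\tilde w\}=-2g(v,w)$ coming from \cref{def_cliffalg}, a short manipulation gives
\begin{equation*}
\mathrm{ad}_\xi u \;=\; g(w,u)\,\tilde v - g(v,u)\,\tilde w,
\end{equation*}
which under the identification of the grade-one part of $\mathrm{Cl}(V,g)$ with $V$ coincides with the natural action on $u$ of $v\wedge w\in\bigwedge^2 V$ viewed as an element of $\mathfrak{so}(3,1)$ through $\bigwedge^2 V\cong\mathfrak{so}(3,1)$.

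Since the elementary bivectors $v\wedge w$ span $\bigwedge^2 V$ and the assignment $(v,w)\mapsto -\tfrac{1}{4}[\tilde v,\tilde w]$ is bilinear and antisymmetric, this identifies the claimed preimage under $\mathrm{d}\rho$. Bijectivity of $\mathrm{d}\rho$ is automatic from the fact that $\rho$ is a covering map (hence a local diffeomorphism at the identity), and the Lie-algebra-homomorphism property is inherited from differentiating the group homomorphism $\rho$. The main obstacle in this argument is purely bookkeeping: the precise signs and the factor $-\tfrac{1}{4}$ depend sensitively on the convention chosen for the Clifford relation and for the identification $\bigwedge^2 V\cong\mathfrak{so}(3,1)$, so both conventions must be fixed once and for all before carrying out the manipulation.
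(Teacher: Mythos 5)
Your argument is correct and is the standard one: the paper itself does not prove this proposition but defers to the cited references (\cite{wernli2019}, \cite{fatibene2018}), and your direct Clifford-algebra computation is exactly the argument those sources give. Your sign bookkeeping is consistent with the paper's conventions — the ideal of \cref{def_cliffalg} gives $\{\tilde v,\tilde w\}=-2g(v,w)$, and the identity $[\tilde v\tilde w,u]=\tilde v\{\tilde w,u\}-\{\tilde v,u\}\tilde w$ yields $\mathrm{ad}_{-\frac14[\tilde v,\tilde w]}u=g(w,u)\tilde v-g(v,u)\tilde w$, which matches the action of $v\wedge w$ on $u$ under the contraction convention of \cref{generalizedlie}, so the factor $-\tfrac14$ is right as stated.
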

If we consider the complexified Lie algebra $\mathfrak{spin}_{3,1}^{\mathbb C}$ and the isomorphism of \cref{iso_spin_wedge2}, we can build a covariant derivative for spinor fields in terms of local connections in $\Omega^{1,2}$. Explicitly, it reads
\begin{align}
    d_\omega\psi=d\psi+[\omega, \psi]=d\psi-\frac14\omega^{ab}\gamma_a\gamma_b\psi.
\end{align}
We define the covariant derivative for the conjugate of $\psi$ such that $d_\omega \overline{\psi}=\overline{d_\omega \psi}$. Therefore, we have
\begin{align}
	d_\omega \overline{\psi}=d\overline{\psi}+ [\omega,\overline{\psi}]= d\overline{\psi} - \frac{1}{4}\omega^{ab}\overline{\psi}\gamma_a \gamma_b.
\end{align}

By \cref{rem_gamma_matr1}, we can extend the definition of the covariant derivative also to the $\gamma$-matrices. It follows the upcoming lemma.
\begin{lemma}
    Let $\gamma\in\Gamma\big(\mathcal V\otimes \mathrm{End}(\Pi S)\big)$. Then, it holds
    \begin{align}
        d_\omega \gamma=0.
    \end{align}
\end{lemma}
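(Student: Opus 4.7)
The plan is to read $d_\omega \gamma = 0$ off the equivariance that defines the $\gamma$-matrices in the first place: they intertwine the fundamental action of $\mathrm{Spin}_{3,1}$ on $V$ (pushed down through the covering $\rho$) with the adjoint action on $\mathrm{End}(\Pi S)$ induced by the $\gamma$-representation, and one only has to verify this at the Lie-algebra level.

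First I would spell out the covariant derivative on a section of $\mathcal{V} \otimes \mathrm{End}(\Pi S)$. In a local trivialization, write $\gamma = \gamma^a v_a$ where the $\gamma^a \in \mathrm{M}_{4\times 4}(\mathbb{C})_{Cl}$ are the fixed Clifford matrices. Because these matrices are position-independent, $d\gamma = 0$ and only the two algebraic connection contributions survive,
\begin{align*}
(d_\omega \gamma)^a = \omega^a{}_b \, \gamma^b + [\omega_\psi, \gamma^a],
\end{align*}
where $\omega_\psi = -\frac{1}{4}\omega^{bc}\gamma_b \gamma_c$ is the image of $\omega$ in $\mathfrak{spin}_{3,1}^{\mathbb{C}}$ under the isomorphism of \cref{iso_spin_wedge2}, acting by commutator on $\mathrm{End}(\Pi S)$. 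The first summand is the fundamental action on the $V$-index and the second is the adjoint action inherited from the $\gamma$-representation on $\Pi S$.

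Next I would carry out a short Clifford calculation: starting from $\{\gamma^a,\gamma^b\} = -2\eta^{ab}$ and sliding $\gamma^a$ past $\gamma_b \gamma_c$ twice yields $[\gamma_b\gamma_c, \gamma^a] = 2\delta^a{}_b\gamma_c - 2\delta^a{}_c\gamma_b$. Substituting into $[\omega_\psi,\gamma^a]$ and using the antisymmetry of $\omega^{bc}$ collapses the two contributions into $-\omega^a{}_b \gamma^b$, which exactly cancels the fundamental-representation term. Conceptually, this is just the infinitesimal version of the equivariance $S \gamma^a S^{-1} = \Lambda(S)^a{}_b \gamma^b$ stated in the remark after \cref{cliff_isom}, differentiated at the identity; that remark therefore provides a coordinate-free alternative.

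The main obstacle is pure bookkeeping: keeping sign conventions for $\omega_\psi$, for the Clifford relation (note the sign $-2\eta^{ab}$ on the right), and for the raising and lowering of Lorentz indices consistent across the two summands, so that the cancellation is manifest rather than absorbed into a stray factor of $-1$.
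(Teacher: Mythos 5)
Your proof is correct: the paper itself only cites \cite{CCR2022} for this lemma, and the argument there is exactly the one you give, namely that the induced connection on $\mathcal V\otimes\mathrm{End}(\Pi S)$ splits into the fundamental action $\omega^a{}_b\gamma^b$ on the $V$-index plus the commutator with $\omega_\psi=-\tfrac14\omega^{bc}\gamma_b\gamma_c$, and the Clifford identity $[\gamma_b\gamma_c,\gamma^a]=2\delta^a{}_b\gamma_c-2\delta^a{}_c\gamma_b$ makes the two terms cancel. Your sign bookkeeping checks out against the paper's conventions ($\{\gamma^a,\gamma^b\}=-2\eta^{ab}$ and $d_\omega\psi=d\psi-\tfrac14\omega^{ab}\gamma_a\gamma_b\psi$), and the remark that this is the derivative at the identity of the equivariance $S\gamma^aS^{-1}=\Lambda^a{}_b\gamma^b$ is the right conceptual summary.
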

\begin{proof}
    See \cite{CCR2022}.
\end{proof}
The space of fields of the Spinor--Palatini--Cartan theory is given by\footnote{Where $\bar S$ is simply given by the conjugate representation $\bar\gamma$.}
    \begin{align}
    \mathcal F^{\psi}=\tilde\Omega^{1,1}\times \mathcal A(P)\times \Gamma(\Pi S)\times\Gamma(\Pi \bar S)\ni (e,\omega,\psi,\overline\psi),
    \end{align}
whereas the action functional reads
    \begin{align}\label{eq:actionPCT_spinor}
    \mathcal S^{\psi}=\mathcal S_{PC}+\int_{M}\frac{i}{12}e^3( \overline{\psi} \gamma d_\omega \psi - d_\omega \overline{ \psi}\gamma \psi ).
    \end{align}
    It follows that the field equations are the following Euler-Lagrange equations for the action $\mathcal S^{\psi}$
    \begin{align}
	eF_\omega + \frac{i}{4}e^2( \overline{\psi}\gamma d_\omega\psi - d_\omega \overline{\psi}\gamma \psi)&=0 \\
	ed_\omega e +  \frac{i}{6}(\overline{\psi}\gamma[e^3,\psi] - [e^3,\overline\psi]\gamma\psi)&=0 \\
	\frac{e^{3}}{6}\gamma d_\omega \psi - \frac{1}{12}d_\omega e^3\gamma \psi   &=0 \\
	\frac{e^{3}}{6}d_\omega \overline{\psi}\gamma +  \frac{1}{12}d_\omega e^3\overline{\psi}\gamma  &=0,
\end{align}
where we define, for $X\in \Gamma(\mathcal V)$ and $\alpha\in \Omega^{r,k}_\Sigma$, the contraction
\begin{align}
    \iota_X \alpha\coloneqq\frac{\eta_{ab}}{(k-1)!} X^a \alpha^{b {i_2}\cdots {i_k}} 	v_{i_2}\wedge \cdots \wedge v_{i_k}
\end{align}
and consequently, for $\chi \in \Omega^{i,j}_\Sigma$, the brackets
\begin{align}
\begin{cases}
    [\chi,\psi]&\coloneqq\frac{1}{4(j-1)}\iota_{\gamma} \iota_{\gamma} \chi \psi\\[5pt]
    [\chi,\overline\psi]&\coloneqq-\frac{(-1)^{|\chi||\psi|}}{4(j-1)}\overline\psi\iota_{\gamma} \iota_{\gamma} \chi,
\end{cases}
\end{align}
where $|\chi|$ is the parity of $\chi$ and $|\psi|$ the parity of $\psi$.

Similar to the preceding sections, the space of pre-boundary fields $\tilde{\mathcal F}^{\psi}_\Sigma$, as defined in \cref{PC_preboundary} for the Palatini--Cartan theory, can be established by pulling back the fields to the boundary $\Sigma$. Furthermore, we will keep denoting the fields on the boundary in the same way as those in the bulk.

As outlined in \cite{CCR2022}, we can now define the geometric phase space of the theory through a reduction using the kernel of the pre-symplectic form.
\begin{theorem}\label{mastertheorem_spinor}
The geometric phase space for the Spinor--Palatini--Cartan theory is the symplectic manifold $(\mathcal F^{\psi}_\Sigma,\varpi)$ given by the following equivalence relations on the space of pre-boundary fields $\tilde{\mathcal F}^{\psi}_\Sigma$
\begin{align}
\omega^\prime\sim\omega\quad&\Longleftrightarrow\quad\omega^\prime-\omega \in\mathrm{Ker}W_{1}^{\Sigma, (1,2)}
\end{align}
and the symplectic form
    \begin{align}
       \varpi = \int_{\Sigma} e\delta e \delta \omega + i\frac{e^2}{4}\left( \overline{\psi}\gamma \delta \psi - \delta \overline{\psi}	\gamma \psi \right)\delta e + i \frac{e^3}{ 3!}  \delta \overline{\psi} \gamma \delta \psi.
    \end{align}
We denote this equivalence class as $\mathcal A(i^* P)_{red}$.
\end{theorem}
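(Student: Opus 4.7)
The plan is to carry out the symplectic reduction as in the proofs of \cref{mastertheorem1}, \cref{mastertheorem_scalar}, and \cref{mastertheorem_YM}. First I would compute the pre-symplectic form explicitly: variation of $\mathcal S^\psi$ produces the Noether $1$-form
\begin{align*}
\tilde\alpha = \int_\Sigma \frac{e^2}{2}\,\delta\omega + \frac{i\,e^3}{3!}\bigl(\bar\psi\gamma\,\delta\psi - \delta\bar\psi\,\gamma\psi\bigr),
\end{align*}
obtained by collecting the boundary pieces from integration by parts and using $d_\omega\gamma = 0$ to move covariant derivatives; applying $\delta$ to $\tilde\alpha$ recovers the expression of $\tilde\varpi$ claimed in the theorem, now read on $\tilde{\mathcal F}^\psi_\Sigma$.

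The key step is the identification of $\mathrm{Ker}(\tilde\varpi)$. I would parametrise a generic tangent vector as $X = (X_e, X_\omega, X_\psi, X_{\bar\psi})$ and impose $\tilde\varpi(X,Y) = 0$ for all $Y$, working one channel at a time. Testing against a pure $\delta\omega$-variation gives $e X_e = 0$, hence $X_e = 0$ by the injectivity of $W_1^{\Sigma,(1,1)}$ on $\Omega_\Sigma^{1,1}$. With $X_e = 0$, the $\delta\psi$- and $\delta\bar\psi$-channels decouple from the geometric sector and yield
\begin{align*}
e^3\,\gamma\,X_\psi = 0, \qquad e^3\,X_{\bar\psi}\,\gamma = 0,
\end{align*}
from which $X_\psi = X_{\bar\psi} = 0$: even on a null boundary the vielbein map $T\Sigma \to i^*\mathcal V$ remains injective, so $e^3$ is a nonvanishing top $(3,3)$-form on $\Sigma$, and the $\gamma$-matrices are pointwise invertible by \cref{cliff_isom}. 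Finally, the $\delta e$-channel reduces, after substituting all of the above, to $e X_\omega = 0$, i.e. $X_\omega \in \mathrm{Ker}W_1^{\Sigma,(1,2)}$.

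The main obstacle is the Grassmann sign and ordering book-keeping in the $\delta e$-channel, where the mixed term $\tfrac{i e^2}{4}(\bar\psi\gamma\delta\psi - \delta\bar\psi\gamma\psi)\delta e$ \emph{a priori} couples $X_\psi, X_{\bar\psi}$ with $Y_e$ and could obstruct the clean isolation of the condition on $X_\omega$. The argument closes only because the implications must be carried out in the correct order: the bosonic constraint $X_e = 0$ is extracted first, then the spinor channels force $X_\psi = X_{\bar\psi} = 0$, and only then does the $\delta e$-channel collapse to a pure condition on $X_\omega$. Having identified $\mathrm{Ker}(\tilde\varpi) = \{(0, X_\omega, 0, 0) : X_\omega \in \mathrm{Ker}W_1^{\Sigma,(1,2)}\}$, \cref{def:geometricphasespace} directly yields the stated equivalence relation on $\omega$ and a well-defined nondegenerate symplectic form $\varpi$ on the quotient $\mathcal F^\psi_\Sigma$ whose expression is the one claimed.
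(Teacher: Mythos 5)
Your overall strategy (compute $\tilde\varpi$, contract with a tangent vector, and read off the kernel channel by channel in the order $\delta\omega\to\delta\psi,\delta\bar\psi\to\delta e$) is the right one, and the $\delta\omega$- and $\delta e$-channels are handled correctly. But there is a genuine gap exactly where the null boundary matters: the claim that $e^3\gamma\,X_\psi=0$ forces $X_\psi=0$ because ``$e^3$ is a nonvanishing top form and the $\gamma$-matrices are pointwise invertible'' is a non sequitur. The object acting on $X_\psi$ is not ``a nonzero scalar times an invertible matrix'': after applying the trace $\bigwedge^4 i^*\mathcal V\to\mathbb R$, the equation reads
\begin{align*}
\varepsilon_{abcd}\,e_1^a e_2^b e_3^c\,\gamma^d\,X_\psi \;=\; n_d\gamma^d\,X_\psi \;=\;0,
\end{align*}
a single Clifford contraction with the internal conormal $n_d=\varepsilon_{abcd}e_1^ae_2^be_3^c$. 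Each $\gamma^a$ is invertible, but the linear combination $n_d\gamma^d$ satisfies $(n_d\gamma^d)^2=-\eta(n,n)\,1_{4\times4}$, so it is invertible if and only if $n$ is non-null. On a null boundary $n$ spans $e(T\Sigma)^{\perp}\subset e(T\Sigma)$ and is null, hence $n_d\gamma^d$ is nilpotent with a rank-two kernel, and the implication $X_\psi=0$ (and likewise $X_{\bar\psi}=0$) fails. This is precisely the point at which the degenerate case departs from the non-degenerate one of \cite{CCR2022}, to which the paper defers for this theorem; your argument, as written, proves the statement only for space- or time-like boundaries and silently imports the non-degenerate conclusion into the null setting. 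To close the gap you would have to analyze $\mathrm{Ker}(n_d\gamma^d)$ explicitly and either show that the residual spinor directions are excluded by some other channel (they are not: the $\delta e$-channel can always be solved for $X_\omega$ by surjectivity of $W_1^{\Sigma,(1,2)}$) or argue how the quotient is nevertheless parametrized as claimed.

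A secondary, purely numerical point: your Noether $1$-form carries the coefficient $\tfrac{i}{3!}$ on the spinor term, whereas the boundary term produced by integrating \eqref{eq:actionPCT_spinor} by parts is $\tfrac{i}{2\cdot 3!}e^3(\bar\psi\gamma\delta\psi-\delta\bar\psi\gamma\psi)$; with your normalization $\delta\tilde\alpha$ would produce $\tfrac{i}{2}e^2(\cdots)\delta e$ and $\tfrac{i}{3}e^3\delta\bar\psi\gamma\delta\psi$ rather than the coefficients $\tfrac{i}{4}$ and $\tfrac{i}{3!}$ appearing in the statement.
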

\begin{proof}
    See \cite{CCR2022}.
\end{proof}
\begin{remark}
    Likewise the preceding cases, we notice that the equivalence class of $\omega$, defining the geometric phase space, remains equal to the Palatini--Cartan theory. In fact, similarly to the previous couplings, this can be seen as a consequence of the fact that the symplectic form does not have any other piece along $\omega$, but the one equal to the Palatini--Cartan case.
\end{remark}
\begin{remark}
    In the case at hand, the field equations see a substantial difference. Namely, the Levi-Civita (or torsion-free) condition $e d_\omega e=0$ no longer holds. Indeed, the Lagrangian of the theory couples the connection with the spinor. Therefore, the structural and the degeneracy constraints take the form
\begin{equation}\label{str_constr_spin}
        \begin{cases}  e_n (\alpha_\psi- p_{\mathcal{T}}\alpha_\psi) \in \Ima W_{1}^{\Sigma,(1,1)}\\[6pt]
        p_{\mathcal T}\alpha_\psi=0.
        \end{cases}
    \end{equation}
with
\begin{align}
    \alpha_\psi\coloneqq d_\omega e+\frac{i}{4}(\overline\psi\gamma[e^2,\psi]-[e^2,\overline\psi]\gamma\psi).
\end{align}
\end{remark}
The following proposition will ensure that, although the form of $\alpha_\psi$ is sensibly different from the preceding cases, the form of the functional $R^\psi_\tau$ will coincide with the one of the Palatini--Cartan theory.
\begin{proposition}\label{trivial_Rpsi}
    Let $\tau\in\mathcal S$. Then, we have the following identity
    \begin{align}
        \tau(\overline{\psi} \gamma [e^2,\psi]-[e^2,\overline{\psi}]\gamma \psi)=0.
    \end{align}
\end{proposition}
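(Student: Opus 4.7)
The plan is a direct computation exploiting the defining conditions of $\mathcal{S}$ together with a four-dimensional Clifford algebra identity. First I would use the bracket definitions to write $[e^{2},\psi]=\tfrac{1}{4}\iota_{\gamma}\iota_{\gamma}e^{2}\cdot\psi$ and $[e^{2},\overline{\psi}]=-\tfrac{1}{4}\overline{\psi}\,\iota_{\gamma}\iota_{\gamma}e^{2}$, the sign on the second coming from the $(-1)^{|\chi||\psi|}$ factor with $\chi=e^{2}$. Since a short calculation gives $\iota_{\gamma}\iota_{\gamma}e^{2}\propto \sum_{a,b}e^{a}\wedge e^{b}\,[\gamma_{a},\gamma_{b}]$, the opposite signs in the two terms promote the $\gamma^{c}$-commutator to an anticommutator, producing
\begin{equation*}
\overline{\psi}\gamma\,[e^{2},\psi]-[e^{2},\overline{\psi}]\gamma\psi = \tfrac{1}{4}\sum_{a,b,c}e^{a}\wedge e^{b}\,\overline{\psi}\{\gamma^{c},[\gamma_{a},\gamma_{b}]\}\psi\;v_{c}\in\Omega^{2,1}_{\Sigma}.
\end{equation*}

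The second step is to apply the four-dimensional Clifford identity $\{\gamma^{c},[\gamma^{a},\gamma^{b}]\}=4\gamma^{[cab]}$, which shows that the relevant bilinear is a totally antisymmetric triple gamma product, and then to Hodge-dualise it: in 4D, $\gamma^{[cab]}$ is proportional to $\epsilon^{cabd}\gamma_{d}\gamma^{5}$. The integrand thereby takes the form of an epsilon-dualised wedge of $e^{a}\wedge e^{b}\,v_{c}$ contracted with the spinor axial current $\overline{\psi}\gamma^{d}\gamma^{5}\psi$, which is the structural hallmark of a four-dimensional expression.

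Finally I would plug in the characterisation of $\mathcal{S}$. Working in a basis $\{v_{1},v_{2},v_{3},v_{n}\}$ of $i^{*}\mathcal{V}$ adapted to the null direction, as in the proofs of \cref{uniquerep_scalar,uniquerep_YM}, the two conditions $e\wedge\tau=0$ and $[\tilde e,\tau]=0$ pin down the allowed components of $\tau$. Substituting this into $\tau\wedge(\text{integrand})$ and using the total antisymmetry of the gamma triple, the surviving terms should collect into an expression of the form $(\tau\wedge e)\wedge\chi$, which vanishes identically because $\tau\in\mathrm{Ker}\,W_{1}^{\Sigma,(1,3)}$.

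The main obstacle will lie precisely in this last step: unlike in the scalar and Yang--Mills cases of \cref{poiss_brack_scalar,poiss_brack_YM}, where the analogous matter correction vanished almost immediately from $e\tau=0$ and the Leibniz rule for the bracket, the spinor integrand is not manifestly in $\mathrm{Im}\,W_{1}^{\Sigma,(1,1)}$. Extracting the hidden $e$-factor requires the four-dimensional Hodge duality $\gamma^{[abc]}\propto\epsilon^{abcd}\gamma_{d}\gamma^{5}$ to be combined carefully with the null-adapted component analysis of $\tau$; this is the spinor-specific structural input, absent in the other couplings, that ultimately makes $R^{\psi}_{\tau}$ agree with its purely gravitational counterpart.
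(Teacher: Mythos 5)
Your opening steps are sound: the relative sign between the two terms does turn the $\gamma$-commutator coming from $\iota_\gamma\iota_\gamma e^2$ into an anticommutator, and $\{\gamma^c,[\gamma^a,\gamma^b]\}$ is indeed proportional to the totally antisymmetric triple product, hence (in $4$D) to $\epsilon^{cabd}\gamma_d\gamma^5$. But the proof stops exactly where the real work begins, and the mechanism you propose for the final cancellation is not the right one. After the trace pairs the three internal indices of $\tau$ with the $\mathcal V$-index of the current term, you are left with a double-epsilon contraction $\varepsilon_{pqrc}\,\epsilon_{ab}{}^{c}{}_{d}\,\tau_\mu^{pqr}(e\wedge e)^{ab}_{\nu\rho}\,\overline\psi\gamma^d\gamma^5\psi$, which expands into products of Kronecker deltas and therefore produces \emph{internal-index contractions} of $\tau$ with $e$ (objects of $\varrho$-type, i.e.\ $[e,\tau]$-type), not a wedge $\tau\wedge e$. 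So the vanishing cannot follow "identically because $\tau\in\mathrm{Ker}\,W_1^{\Sigma,(1,3)}$": the condition $e\wedge\tau=0$ alone does not kill these terms, and the second defining condition of $\mathcal S$, namely $\tilde\varrho\,\tau=[\tilde e,\tau]=0$, must enter. In practice one needs the structural consequence of both conditions, Proposition \ref{cortaue_n}, which writes $\tau=e_n\beta$ with $\beta\in\mathrm{Ker}\,W_1^{\Sigma,(1,2)}$ (or equivalently the explicit component relations of \cref{diagtetr}); without that input your adapted-basis bookkeeping has no reason to close.

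The paper's proof avoids the entire $\gamma$-matrix computation. It first replaces $\tau$ by $e_n\beta$ via \cref{cortaue_n}, then applies the switch identity of \cref{lemma_switch_bracket} twice to move $e^2$ out of the Clifford bracket and $\beta$ into it, arriving at $e\beta\,(\overline\psi\gamma[e_ne,\psi]-[e_ne,\overline\psi]\gamma\psi)$, which vanishes simply because $e\beta=0$. If you want to salvage your component route, you would have to (i) prove or import the decomposition $\tau=e_n\beta$, and (ii) actually carry out the epsilon contractions in the null-adapted basis — precisely the step you flag as "the main obstacle" and leave undone. As it stands the proposal is a plausible plan with a genuine gap at its decisive step.
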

\begin{proof}
    The proof comes by applying twice \cref{lemma_switch_bracket}. Therefore, by means of \cref{cortaue_n}, we have
    \begin{align}
        \hspace{-1cm}\tau(\overline{\psi} \gamma [e^2,\psi]-[e^2,\overline{\psi}]\gamma \psi)&=e_n\beta(\overline{\psi} \gamma [e^2,\psi]-[e^2,\overline{\psi}]\gamma \psi)\\[3pt]
        &=e_ne^2(\overline{\psi} \gamma [\beta,\psi]-[\beta,\overline{\psi}]\gamma \psi)\\[3pt]
        &=e\beta(\overline{\psi} \gamma [e_ne,\psi]-[e_ne,\overline{\psi}]\gamma \psi)\\[3pt]
        &=0,
    \end{align}
    since $\beta\in\mathrm{Ker}W_1^{\Sigma, (1,2)}$.
\end{proof}
We are now able to properly give the constraints of the theory.
\begin{definition}\label{constraints_spinor}
Let $c\in\Omega_\Sigma^{0,2}[1]$, $\xi\in\mathfrak{X}(\Sigma)[1]$, $\lambda\in C^\infty(\Sigma)[1]$ and $\tau\in\mathcal{S}[1]$. Then, we define the following functionals
    \begin{align}
		L^\psi_c & = \int_\Sigma ced_\omega e- i \frac{e^3}{2\cdot 3!} \left( [c,\overline{\psi}]\gamma \psi - \overline{\psi} \gamma [c,\psi] \right) \\[3pt]
		P^\psi_\xi & = \int_\Sigma \frac{1}{2}\iota_\xi(e^2)F_\omega+\iota_\xi(\omega-\omega_0)ed_\omega e- i \frac{e^3}{2\cdot 3!} \left( \overline{\psi} \gamma \mathrm{L}_{\xi}^{\omega_0}(\psi) - \mathrm{L}_\xi^{\omega_0}(\overline{\psi})\gamma \psi  \right) \\[3pt]
		H^\psi_\lambda & = \int_\Sigma \lambda e_n\Big(eF_\omega+\frac{\Lambda}{3!}e^3+i\frac{e^2}{4}\left( \overline{\psi}\gamma d_\omega \psi - d_\omega\overline{\psi}\gamma \psi\right)\Big)\\[3pt]
		R^\psi_\tau &= \int_\Sigma \tau d_\omega e.
	\end{align}
We refer to these as the constraints of the Spinor--Palatini--Cartan (degenerate) theory.
\end{definition}
\begin{theorem}\label{poiss_brack_spinor}
    Let $i^*g$ be degenerate. Then, the Poisson brackets of the constraints of \cref{constraints_spinor} read
        \begin{align*}
    &\begin{aligned}
        	&\{L^\psi_c, L^\psi_c\}  = - \frac{1}{2} L_{[c,c]} & \qquad\qquad
       	& \{P^\psi_{\xi}, P^\psi_{\xi}\}  =  \frac{1}{2}P^\psi_{[\xi, \xi]}- \frac{1}{2}L_{\iota_{\xi}\iota_{\xi}F_{\omega_0}} \\
       	& \{L^\psi_c, P^\psi_{\xi}\}  =  L^\psi_{\mathcal{L}_{\xi}^{\omega_0}c} &
       	& \{H^\psi_{\lambda},H^\psi_{\lambda}\}  \approx 0 \\
       	&  \{L^\psi_c, R^\psi_{\tau}\} =  -R^\psi_{p_{\mathcal{S}}[c, \tau]} & 
       	& \{R_{\tau}^\psi,P^\psi_{\xi}\}  = R^\psi_{p_{\mathcal{S}}\mathcal{L}_{\xi}^{\omega_0}\tau}\\
       	&\{ R^\psi_{\tau}, H^\psi_{\lambda} \}   \approx G_{\lambda\tau}+K_{\lambda\tau}^\psi &
 		& \{R^\psi_{\tau},R^\psi_{\tau}\}  \approx F_{\tau \tau}
        \end{aligned}\\
	&\{L^\psi_c,  H^\psi_{\lambda}\}
        = - P^\psi_{X^{(a)}} + L^\psi_{X^{(a)}(\omega - \omega_0)_a} - H^\psi_{X^{(n)}}\\        
        &\{P^\psi_{\xi},H^\psi_{\lambda}\} =  P^\psi_{Y^{(a)}} -L^\psi_{ Y^{(a)} (\omega - \omega_0)_a} +H^\psi_{ Y^{(n)}},
    \end{align*}
	with $X=[c,\lambda e_n]$, $Y=\mathcal{L}_\xi^{\omega_0}(\lambda e_n)$ and where the superscripts $(a)$ and $(n)$ describe their components with respect to $e_a,e_n$. Furthermore, $F_{\tau\tau}$, $G_{\lambda\tau}$ and $K_{\lambda\tau}^\psi$ are functionals of $e,\omega,\psi, \overline\psi,\tau$ and $\lambda$ defined in the proof which are not proportional to any other constraint.
\end{theorem}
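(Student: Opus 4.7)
The plan is to mimic the strategy used in the proofs of \cref{poiss_brack_scalar,poiss_brack_YM}. First, split the interacting constraints into a pure Palatini--Cartan piece and a spinor-coupling remainder:
\begin{align*}
L^\psi_c = L_c + \ell^\psi_c, \qquad P^\psi_\xi = P_\xi + p^\psi_\xi, \qquad H^\psi_\lambda = H_\lambda + h^\psi_\lambda,
\end{align*}
while observing that $R^\psi_\tau = R_\tau$ as a functional (there is no explicit spinor piece in its definition). By bilinearity of the Poisson brackets, every bracket decomposes into a pure PC part plus cross-terms and a pure spinor-coupling part. The pure PC brackets $\{L_c,L_c\}, \{L_c,P_\xi\}, \{P_\xi,P_\xi\}, \{L_c,R_\tau\}, \{P_\xi,R_\tau\}, \{R_\tau,R_\tau\}$ and $\{R_\tau,H_\lambda\}$ are taken from \cite{CCT21}, and the brackets not involving $R$ but including spinor splits ($\{L^\psi_c,L^\psi_c\}, \{L^\psi_c,P^\psi_\xi\}, \{P^\psi_\xi,P^\psi_\xi\}, \{H^\psi_\lambda,H^\psi_\lambda\}, \{L^\psi_c,H^\psi_\lambda\}, \{P^\psi_\xi,H^\psi_\lambda\}$) are imported from \cite{CCR2022}. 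Thus the only new work is the computation of the brackets that pair $R^\psi_\tau$ with the spinor-coupling pieces $\ell^\psi_c$, $p^\psi_\xi$, and $h^\psi_\lambda$.

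The second step is to compute the Hamiltonian vector fields of $R^\psi_\tau$ for the spinor-extended presymplectic form of \cref{mastertheorem_spinor}. Since $R^\psi_\tau$ is independent of $\psi,\overline\psi$, the variation reads $\delta R^\psi_\tau = \int_\Sigma (g(\tau,\omega,e)+d_\omega\tau)\delta e + [\tau,e]\delta\omega$ exactly as in \cref{poiss_brack_scalar}. Solving $\iota_{\mathbb{R}}\varpi = \delta R^\psi_\tau$, the equations from the $\delta\psi$ and $\delta\overline\psi$ components force $\mathbb{R}_\psi$ and $\mathbb{R}_{\overline\psi}$ to absorb any $e\mathbb{R}_e$ contribution coming from the cross-term $i\frac{e^2}{4}(\overline\psi\gamma\delta\psi-\delta\overline\psi\gamma\psi)\delta e$; however, when these components act on $\ell^\psi_c$ and $p^\psi_\xi$ (which carry the spinorial dependence), \cref{trivial_Rpsi} guarantees that any resulting contraction against $\tau$ vanishes, because the spinor sector of these constraints is built precisely from the combination $\overline\psi\gamma[e^2,\cdot,\psi]-[e^2,\overline\psi]\gamma\psi$ that $\tau$ kills. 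In particular, one expects $\{L^\psi_c,R^\psi_\tau\} = -R^\psi_{p_\mathcal{S}[c,\tau]}$ and $\{R^\psi_\tau,P^\psi_\xi\} = R^\psi_{p_\mathcal{S}\mathcal L^{\omega_0}_\xi\tau}$, while $\{R^\psi_\tau,R^\psi_\tau\} \approx F_{\tau\tau}$ is unchanged.

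The genuinely new contribution arises in $\{R^\psi_\tau,H^\psi_\lambda\}$. The PC piece yields $G_{\lambda\tau}$ as in \cite{CCT21}; the spinor piece is
\begin{align*}
\{R^\psi_\tau,h^\psi_\lambda\} = \int_\Sigma \mathbb{R}_\omega \cdot \mathbb{h}^\psi_\omega + \mathbb{R}_e\cdot \mathbb{h}^\psi_e + (\text{spinor couplings}),
\end{align*}
where the Hamiltonian vector field of $h^\psi_\lambda$ along $\omega$ carries the spinor bilinear $\lambda e_n(\overline\psi\gamma d_\omega\psi - d_\omega\overline\psi\gamma\psi)$. Following the same manipulations used in the Yang--Mills case, one pairs this with $e\mathbb{R}_e = [\tau,e]$, inverts $W_1^{\Sigma,(1,1)}$, and expresses the result through the independent components $\mathcal X,\mathcal Y$ of $\tau$ from Proposition 8 of \cite{CCT21}, thereby defining $K^\psi_{\lambda\tau}$ on the constraint locus. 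The final step is to verify, as in the Yang--Mills case, that $K^\psi_{\lambda\tau}$ is not proportional to any of the constraints $L^\psi_c, P^\psi_\xi, H^\psi_\lambda, R^\psi_\tau$.

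The main obstacle is the careful bookkeeping of the Clifford brackets $[\cdot,\psi]$ and $[\cdot,\overline\psi]$ inside the Hamiltonian vector fields: one must repeatedly exploit \cref{trivial_Rpsi} and the analogous identity for $e_n$-dependent forms to reduce spinor-coupled contractions against $\tau$ to zero, while preserving the nontrivial contributions that give rise to $K^\psi_{\lambda\tau}$. The anti-commutative nature of spinor fields also requires careful sign tracking when commuting the Hamiltonian vector fields through the symplectic form. Once these identities are applied systematically, the bracket structure collapses to the claimed form, and linearity together with the previously known brackets completes the proof.
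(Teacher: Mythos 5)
Your proposal is correct and follows essentially the same route as the paper: split off the spinor pieces, import the known brackets from \cite{CCT21} and \cite{CCR2022}, observe that $R^\psi_\tau=R_\tau$ and that its Hamiltonian vector fields are unchanged, and isolate the only genuinely new term, namely the pairing of $e\mathbb{R}_e=[\tau,e]$ with the spinor bilinear in $\mathbb{h}^\psi_\omega$, which yields $K^\psi_{\lambda\tau}$ via $W_1^{-1}$ and the components $\mathcal X,\mathcal Y$ of $\tau$. The one small imprecision is that $\mathbb{R}_\psi$ and $\mathbb{R}_{\overline\psi}$ do not ``absorb'' a cross-term but vanish outright, since $e^2\mathbb{R}_e=e[\tau,e]=0$; the remaining spinor-coupled contractions are killed by \cref{lemma_switch_bracket} (the identity underlying \cref{trivial_Rpsi}), exactly as you anticipate.
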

\begin{proof}
First, we notice that the contraction of the symplectic form with a vector field $\mathbb X\in\mathfrak{X}(\mathcal F_\Sigma^\psi)$ is given by
    \begin{align}\label{contract_symp_spin}
        \iota_{\mathbb{X}}\varpi&=\int_\Sigma e \mathbb{X}_e \delta \omega +\left[ e\mathbb{X}_\omega +\ \frac{i}{4}e^2(  \overline{\psi} \gamma \mathbb{X}_\psi -  \mathbb{X}_{\overline{\psi}}\gamma \psi )\right]\delta e \\
        &\phantom{=\int_\Sigma} + i\delta \overline{\psi} \left( -\frac{e^2}{4}\gamma \psi \mathbb{X}_e + \frac{e^3}{3!}\gamma \mathbb{X}_\psi\right) + i \left( \frac{e^2}{4}\overline{\psi}\gamma \mathbb{X}_e + \frac{e^3}{3!}\mathbb{X}_{\overline{\psi}}\gamma \right)\delta \psi.
    \end{align}
    Then, we start giving the Hamiltonian vector fields of the constraints. For $L^\psi_c$ and $P^\psi_\xi$, from \cite{CCR2022}, we have
    \begin{align}
        &\mathbb{L}^\psi_e = [c,e] &\mathbb{L}^\psi_\psi=[c,\psi] \\  &\mathbb{L}^\psi_\omega = d_{\omega} c &\mathbb{L}^\psi_{\overline{\psi}}=[c,\overline{\psi}]\\
			&\mathbb{P}^\psi_e = - \mathcal{L}_{\xi}^{\omega_0} e &\mathbb{P}^\psi_\psi=-\mathcal{L}_\xi^{\omega_0}(\psi)   \\  &\mathbb{P}^\psi_\omega = - \mathcal{L}_{\xi}^{\omega_0} (\omega-\omega_0) - \iota_ {\xi}F_{\omega_0}   &\mathbb{P}^\psi_{\overline{\psi}}=-\mathcal{L}_{\xi}^{\omega_0}(\overline{\psi}).
    \end{align}
    Whereas, for $H^{\psi}_\lambda$, we have
    \begin{align}
        &\mathbb{H}^{\psi}_e = d_\omega (\lambda e_n ) + \lambda \sigma + \frac{i}{4}\lambda\overline{\psi}\left(\iota_{\gamma}\iota_{\gamma} e_ne \gamma - \gamma \iota_{\gamma}\iota_{\gamma} e_ne   \right)\psi\\[4pt]
	        &e \mathbb{H}^{\psi}_\omega = \lambda e_n \left( F_\omega + \frac{\Lambda}{2}e^2 \right) - i \frac{\lambda e_n}{4} e (\overline{\psi}\gamma d_\omega \psi - d_\omega\overline{\psi} \gamma \psi)\\[4pt]
	         &\frac{e^3}{3!}\gamma \mathbb{H}^{\psi}_{\psi}= \frac{\lambda e_n}{2} e^2 \gamma d_\omega \psi - \frac{\lambda e_n}{4} ed_\omega e \gamma \psi\\
            &\phantom{=\frac{e^3}{3!}\gamma \mathbb{H}^{\psi}_{\psi}}+ \frac{i}{64} \lambda e \left[\overline{\psi}\left( \iota_{\gamma}\iota_{\gamma}(e_n e^2)\gamma - \gamma \iota_{\gamma} \iota_{\gamma}(e_n e^2) \right)  \psi\right]\gamma\psi\\[4pt]
	          &\frac{e^3}{3!}\mathbb{H}^{\psi}_{\overline{\psi}} \gamma = \frac{\lambda e_n}{2} e^2  d_\omega \overline{\psi} \gamma + \frac{\lambda e_n}{4} ed_\omega e \overline{\psi}  \gamma\\
            &\phantom{=\frac{e^3}{3!}\mathbb{H}^{\psi}_{\overline{\psi}} \gamma }- \frac{i}{64} \lambda e \overline{\psi}\gamma\left[\overline{\psi}\left( \iota_{\gamma}\iota_{\gamma}(e_n e^2)\gamma - \gamma \iota_{\gamma} \iota_{\gamma}(e_n e^2) \right)  \psi\right],
    \end{align}
    where $\sigma\in\Omega^{1,1}_\Sigma$. Lastly, the Hamiltonian vector fields of $R_\tau^\psi$, are given by
     \begin{align}
            e\mathbb{R}^\psi _e  &=[\tau,e] \\
             e\mathbb{R}^\psi_\omega  &= \frac{\delta\tau}{\delta e} d_\omega e+d_\omega\tau\\
            \mathbb{R}^\psi_{\psi}&=\mathbb R^\psi_{\overline\psi}=0,
     \end{align}
     since they coincide with the ones of the Palatini--Cartan theory of \cref{constraints_gravity-deg}. Notice that, instead of using the function $g=g(\tau, e,\omega)$, we preferred expressing the variation of $\tau$ with respect to $e$ by means of the functional derivative $\displaystyle{\frac{\delta\tau}{\delta e}}$. However, we have the relation
\begin{align}
    g(\tau, e,\omega)=\frac{\delta\tau}{\delta e}d_\omega e.
\end{align}
    Now, we are ready to compute the Poisson brackets of the constraints. From \cite{CCR2022}, we have already knowledge of the following Poisson brackets
    \begin{align}
        & \{P^\psi_{\xi}, P^\psi_{\xi}\}  =  \frac{1}{2}P^\psi_{[\xi, \xi]}- \frac{1}{2}L^\psi_{\iota_{\xi}\iota_{\xi}F_{\omega_0}} &
        &\{H^{\psi}_\lambda,H^{\psi}_\lambda\}=0\\
		&\{L^\psi_c, P^\psi_{\xi}\}  =  L^\psi_{\mathcal{L}_{\xi}^{\omega_0}c} &
        &\{L^\psi_c, L^\psi_c\} = - \frac{1}{2}L^\psi_{[c,c]}\\
        & \{L^\psi_c,  H^{\psi}_{\lambda}\}  = - P^\psi_{X^{(a)}} + L^\psi_{X^{(a)}(\omega - \omega_0)_a} - H^{\psi}_{X^{(n)}}&
        &\\
		& \{P^\psi_{\xi},H^{\psi}_{\lambda}\}  =  P^\psi_{Y^{(a)}} -L^\psi_{ Y^{(a)} (\omega - \omega_0)_a} +  H^{\psi}_{ Y^{(n)}},
    \end{align}
    with $X= [c, \lambda e_n ]$ and $Y = \mathcal{L}_{\xi}^{\omega_0} (\lambda e_n)$ as above. Therefore, we are left with computing the remaining constraints. First, we notice that
    \begin{align}
        \{R_\tau^\psi,L^\psi_c\}=\{R_\tau,L_c\}=-R_{p_{\mathcal S}[c,\tau]}=-R^\psi_{p_{\mathcal S}[c,\tau]}.
    \end{align}
Similarly, we can also compute the bracket
\begin{align}
    \{R_\tau^\psi,P^\psi_\xi\}=\{R_\tau,P_\xi\}=R_{p_{\mathcal S}\mathcal L_\xi^{\omega_0}\tau}=R^\psi_{p_{\mathcal S}\mathcal L_\xi^{\omega_0}\tau}.
\end{align}
Now, we move on to compute the brackets $\{R^\psi_\tau,R^\psi_\tau\}$ and $\{R^\psi_\tau,H^\psi_\lambda\}$. The first bracket is simply given by
\begin{align}
    \{R^\psi_\tau,R^\psi_\tau\}=\{R_\tau,R_\tau\}\approx F_{\tau\tau}
\end{align}
with $F_{\tau\tau}$ defined in Theorem $30$ of \cite{CCT21} and which is in general non-vanishing on the constraint submanifold. Whereas, for the second one, we obtain

\begin{align}
   \hspace{-1cm}\{R^\psi_\tau,H^{\psi}_\lambda\}&=\int_\Sigma\Big(e_n \frac{\delta\beta}{\delta e}d_\omega e+d_\omega(e_n\beta)\Big)\Big(d_\omega(\lambda e_n)+\lambda\sigma\\
   &\phantom{=\int_\Sigma}-i\lambda(\overline\psi\gamma[e_ne,\psi]-[e_ne,\overline\psi]\gamma\psi)\Big)\\
    &\phantom{=\int_\Sigma}+W_1^{-1}[e_n\beta,e]\big(\lambda e_n(F_\omega+\frac{\Lambda}{2}e^2)\\
    &\phantom{=\int_\Sigma}-\frac{i}{4}\lambda e_n e(\overline\psi\gamma d_\omega\psi-d_\omega\overline\psi\gamma\psi)\big)\\
    &\approx\int_\Sigma-i\lambda\beta d_\omega e_n(\overline\psi\gamma[e_ne,\psi]-[e_ne,\overline\psi]\gamma\psi)\\
    &\phantom{\approx\int_\Sigma}-\frac{i}{4}[e_n\beta,e]\lambda e_n(\overline\psi\gamma d_\omega\psi-d_\omega\overline\psi\gamma\psi)\\
    &\phantom{\approx}+G_{\lambda\tau},
\end{align}
where, in the last passage, we used \cref{lemma_switch_bracket} and the fact that $e_n^2=0$. Moreover, the quantity $G_{\lambda\tau}$ and the map $W_1^{-1}$ are defined respectively in Theorem $30$ of \cite{CCT21} and in the proof of \cref{poiss_brack_YM}. Now, we can notice that, thanks to \cref{lemma_switch_bracket}, we can write
\begin{align}
    &\lambda\beta d_\omega e_n(\overline\psi\gamma[e_ne,\psi]-[e_ne,\overline\psi]\gamma\psi)=\\[3pt]
    &=\lambda e_ned_\omega e_n(\overline\psi\gamma[\beta,\psi]-[\beta,\overline\psi]\gamma\psi)\\[3pt]
    &= \lambda e\beta(\overline\psi\gamma[e_nd_\omega e_n,\psi]-[e_nd_\omega e_n,\overline\psi]\gamma\psi)\\[3pt]
    &=0,
\end{align}
obtaining
\begin{align}
    \{R^\psi_\tau,H^{\psi}_\lambda\}&\approx G_{\lambda\tau}-\int_\Sigma\frac{i}{4}[e_n\beta,e]\lambda e_n(\overline\psi\gamma d_\omega\psi-d_\omega\overline\psi\gamma\psi).
\end{align}
Finally, we can write the integral as
\begin{align}
    \{R^\psi_\tau,H^{\psi}_\lambda\}&\approx G_{\lambda\tau}-\int_\Sigma\frac{i}{4}\lambda\tau[e_n,\hat e](\overline\psi\gamma d_\omega\psi-d_\omega\overline\psi\gamma\psi),
\end{align}
where we implemented again \cref{cortaue_n} and also the relation\footnote{It simply comes from the definition of $\mathcal S$.}
\begin{align}\label{condition_entaue}
    e_n[\tau,e]=\tau[e_n,\hat e]
\end{align}
with $\hat e$ defined as $\hat e\coloneqq e-\tilde e$ (see \cref{maptilderho}). More specifically, using the definition of the independent components of $\tau$, as we did in the proof of \cref{poiss_brack_YM}, we have
\begin{align}
     \{R^\psi_\tau,H^{\psi}_\lambda\}\approx G_{\lambda\tau}+K^\psi_{\lambda\tau},
\end{align}
with
\begin{align}
    K^\psi_{\lambda\tau}\coloneqq -\int_\Sigma i\lambda\Big(\sum_{\mu=1}^{2} \mathcal{Y}_{\mu} \big(\hat g_nd_\omega J_\psi\big)_{\mu }^{3\mu}+\sum_{\mu_1\neq\mu_2=1}^{2} \mathcal{X}_{\mu_1}^{\mu_2}\big(\hat g_nd_\omega J_\psi\big)_{3\mu_2 }^{\mu_1}\Big),
\end{align}
where $\hat g_n\coloneqq[e_n,\hat e]\in\Omega^{1,0}_\Sigma$ and $d_\omega J_\psi\coloneqq d_\omega(\overline\psi\gamma\psi)\in\Omega^{1,1}_\Sigma$.

This final result completes the proof.
\end{proof}

\section{First and second class constraints}\label{sectio_firstsecond}
In \cite{CCT21}, a study of first- and second-class constrains has been presented. In the following section, we will recall the main results and adapt them to the present analysis.

\begin{definition}\label{def_firstsecond}
   Consider a symplectic manifold $\mathcal{F}$ and a set of smooth maps $\phi_i \in C^{\infty}(\mathcal{F})$ defined on it. Let $C_{ij}=\{\phi_i, \phi_j\}$ represent the matrix of Poisson brackets associated with these maps. The count of second-class maps in the set corresponds to the rank of the matrix $C_{ij}$ evaluated at the zero locus defined by the $\phi_i$s\footnote{We assume the rank to be constant on the zero locus.}. In particular, if $C_{ij} \approx 0$, we categorize all the maps as first-class.
\end{definition}

\begin{proposition}\label{prop:numberofsecondclassconstraints}
    Let $\mathcal{F}$ be a symplectic manifold and let $\psi_i,\phi_j \in C^{\infty}(\mathcal{F})$, where $i=1\dots n$ and $j=1\dots m$. Moreover, denote with $C_{jj'}, B_{ij}$ and $ D_{i i'}$ the matrices representing, respectively, the Poisson brackets  $\{\phi_j,\phi_{j'}\}$, $\{\psi_i,\phi_j\}$ and $\{\psi_i,\psi_{i'}\}$, with $i,i'=1\dots n$ and $j,j'=1\dots m$. Then, if $D$ is invertible and {$C= -B^T D^{-1}B$}, the number of second-class constraints is $n$, i.e. the rank of the matrix $D$.   
\end{proposition}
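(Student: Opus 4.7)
The plan is to analyze the full matrix of Poisson brackets associated with the complete set of constraints $\{\psi_i,\phi_j\}_{i=1,\dots,n;\,j=1,\dots,m}$ and compute its rank via a Schur-complement argument. By \cref{def_firstsecond}, the number of second-class constraints equals precisely this rank (evaluated on the zero locus), so the entire problem reduces to a linear algebra computation on a block matrix.

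First, I would assemble the total Poisson bracket matrix in the block form
\begin{equation*}
\mathcal{M} = \begin{pmatrix} D & B \\ -B^T & C \end{pmatrix},
\end{equation*}
where the $(\psi,\psi)$-block is $D$, the $(\psi,\phi)$-block is $B$, the $(\phi,\psi)$-block is $-B^T$ by antisymmetry of the Poisson bracket, and the $(\phi,\phi)$-block is $C$. Since $D$ is assumed invertible, the standard block-LDU factorization applies:
\begin{equation*}
\mathcal{M} = \begin{pmatrix} \mathbb{1}_n & 0 \\ -B^T D^{-1} & \mathbb{1}_m \end{pmatrix}
\begin{pmatrix} D & 0 \\ 0 & C + B^T D^{-1} B \end{pmatrix}
\begin{pmatrix} \mathbb{1}_n & D^{-1} B \\ 0 & \mathbb{1}_m \end{pmatrix}.
\end{equation*}
Because the outer matrices are unimodular, the rank of $\mathcal{M}$ equals the rank of the middle block-diagonal matrix, i.e.\ $\mathrm{rank}(D) + \mathrm{rank}(C + B^T D^{-1} B)$.

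Substituting the hypothesis $C = -B^T D^{-1} B$ makes the Schur complement $C + B^T D^{-1} B$ vanish identically, so $\mathrm{rank}(\mathcal{M}) = \mathrm{rank}(D) = n$. Invoking \cref{def_firstsecond}, which identifies the number of second-class constraints with the rank of the full bracket matrix on the constraint surface, yields the claim. There is no real obstacle here beyond the bookkeeping: one only has to be careful that the antisymmetry of the Poisson bracket places $-B^T$ (and not $B^T$) in the lower-left block, so that the correct Schur complement features the sign needed to cancel against the hypothesis on $C$. Everything else is a one-line consequence of block Gaussian elimination.
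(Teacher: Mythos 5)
Your proof is correct and is essentially the same block-Gaussian-elimination argument that the paper defers to via its citation of \cite{CCT21}: one assembles the full antisymmetric bracket matrix, uses invertibility of $D$ to factor out the Schur complement $C+B^TD^{-1}B$, and observes that the hypothesis makes it vanish, so the rank of the total matrix on the constraint locus equals $\mathrm{rank}(D)=n$. The sign bookkeeping for the $-B^T$ block is handled correctly, and the appeal to \cref{def_firstsecond} to convert the rank into the count of second-class constraints is exactly what is needed.
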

\begin{proof}
    See \cite{CCT21}.
\end{proof}

\begin{theorem}\label{theorem_firstsecond}
    Let the symbol $\bullet$ be such that $\bullet=\phi, A, \psi$. Then, the constraints $L^\bullet_c$, $P^\bullet_{\xi}$, $H^\bullet_{\lambda}$ and $R^\bullet_{\tau}$ do not form a first class system. In particular, $R_{\tau}$ is a second class constraint.
\end{theorem}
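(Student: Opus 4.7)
The plan is to invoke directly the bracket tables established in Theorems~\ref{poiss_brack_scalar}, \ref{poiss_brack_YM}, and \ref{poiss_brack_spinor}, together with the operational criterion of Definition~\ref{def_firstsecond}. Recall that a collection of constraints $\{\phi_i\}$ fails to be first class precisely when the Poisson bracket matrix $C_{ij}=\{\phi_i,\phi_j\}$ has non-zero rank on the common zero locus. It therefore suffices, for each of the three couplings $\bullet\in\{\phi,A,\psi\}$, to exhibit one entry of this matrix which does not vanish weakly and which cannot be absorbed into the remaining constraints.

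The natural candidate is the row/column associated to $R^{\bullet}_\tau$. From the three theorems one reads off
\begin{align*}
\{R^{\bullet}_\tau,R^{\bullet}_\tau\}&\approx F_{\tau\tau},\\
\{R^{\bullet}_\tau,H^{\bullet}_\lambda\}&\approx G_{\lambda\tau}+K^{\bullet}_{\lambda\tau},
\end{align*}
where $K^{\phi}_{\lambda\tau}\equiv 0$ while $K^{A}_{\lambda\tau}$ and $K^{\psi}_{\lambda\tau}$ are given explicitly in the proofs of Theorems~\ref{poiss_brack_YM} and \ref{poiss_brack_spinor}. The crucial input, taken as established earlier (see the statements of Theorems~\ref{poiss_brack_scalar}, \ref{poiss_brack_YM}, \ref{poiss_brack_spinor} and Theorem~30 of \cite{CCT21}), is that $F_{\tau\tau}$, $G_{\lambda\tau}$ and the correction terms $K^{\bullet}_{\lambda\tau}$ are non-identically-vanishing functionals of the boundary fields which are not proportional to any of $L^{\bullet}_c,P^{\bullet}_\xi,H^{\bullet}_\lambda,R^{\bullet}_\tau$. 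Consequently, neither bracket lies in the ideal generated by the constraints, so both survive on the zero locus.

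I would then conclude in two short steps. First, apply Definition~\ref{def_firstsecond}: the rank of $C_{ij}$ on the zero locus is at least one, since the $(R^{\bullet}_\tau,H^{\bullet}_\lambda)$ block alone is non-zero. Hence the family $\{L^{\bullet}_c,P^{\bullet}_\xi,H^{\bullet}_\lambda,R^{\bullet}_\tau\}$ is not a first-class system in any of the three couplings. Second, to single out $R^{\bullet}_\tau$ as the second-class constraint, observe that all Poisson brackets of $L^{\bullet}_c$ and $P^{\bullet}_\xi$ among themselves and with $H^{\bullet}_\lambda$, as tabulated, close on the constraints (up to weakly-vanishing terms), so the non-closure originates entirely from the $R^{\bullet}_\tau$ row/column; by Proposition~\ref{prop:numberofsecondclassconstraints}, taking $\psi_i=R^{\bullet}_\tau$ and $\phi_j$ the remaining constraints, the invertibility on the zero locus comes from $\{R^{\bullet}_\tau,R^{\bullet}_\tau\}\approx F_{\tau\tau}\not\approx 0$, so exactly one second-class constraint is produced, namely $R^{\bullet}_\tau$.

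The main obstacle is not the structural argument, which is essentially an immediate consequence of the bracket tables, but the \emph{independence} claim for $F_{\tau\tau}$, $G_{\lambda\tau}$, and the new functionals $K^{A}_{\lambda\tau}$, $K^{\psi}_{\lambda\tau}$ that appear due to the matter couplings. For the scalar case this reduces to the analysis in \cite{CCT21}; for the Yang--Mills and spinor cases the explicit decompositions in \eqref{dec_KA} and in the proof of Theorem~\ref{poiss_brack_spinor} show that the extra pieces $K^{\bullet}_{\lambda\tau}$ involve the independent components $\mathcal X,\mathcal Y$ of $\tau$ multiplied by expressions in $B,\psi,\overline\psi$ that are generically non-degenerate and do not match the structure of any of the remaining constraints. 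Thus nothing in the matter sector can cancel the non-closure arising from $R^{\bullet}_\tau$, and the theorem follows for all three cases simultaneously.
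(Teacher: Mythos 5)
Your argument is correct and is essentially the proof the paper intends: the paper's own proof simply defers to \cite{CCT21}, and what you have written is the natural reading of the bracket tables of \cref{poiss_brack_scalar,poiss_brack_YM,poiss_brack_spinor} against \cref{def_firstsecond}. Since $\{R^{\bullet}_{\tau},R^{\bullet}_{\tau}\}\approx F_{\tau\tau}$ and $\{R^{\bullet}_{\tau},H^{\bullet}_{\lambda}\}\approx G_{\lambda\tau}+K^{\bullet}_{\lambda\tau}$ do not vanish weakly and are not proportional to the constraints, the matrix $C_{ij}$ has non-zero rank on the zero locus, so the system is not first class, and $R^{\bullet}_{\tau}$ itself is second class; this is all the theorem asserts. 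One caveat on your final step: when you invoke \cref{prop:numberofsecondclassconstraints} to conclude that \emph{exactly one} second-class constraint is produced, you identify invertibility of $D$ with $F_{\tau\tau}\not\approx 0$ and leave the hypothesis $C=-B^{T}D^{-1}B$ unverified. Non-vanishing of $F_{\tau\tau}$ does not by itself give non-degeneracy of the pairing $(\tau,\tau')\mapsto F_{\tau\tau'}$, and since $\{R^{\bullet}_{\tau},H^{\bullet}_{\lambda}\}\not\approx 0$ while $\{H^{\bullet}_{\lambda},H^{\bullet}_{\lambda}\}\approx 0$, the compatibility condition is a genuine check (carried out in \cite{CCT21}), not a formality. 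This affects only the exact count, which the statement does not require, so the proof of the theorem as stated stands.
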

\begin{proof}
    The proof follows verbatim the one of \cite{CCT21}.
\end{proof}

We can now determine the degrees of freedom of the reduced phase space. Let $r$ denote the number of degrees of freedom in the reduced phase space, $p$ the number of degrees of freedom in the geometric phase space, $f$ the number of first-class constraints, and $s$ the number of second-class constraints. The relationship among them is given by\footnote{The proof of this formula is contained in \cite{HT}.}
\begin{align}
r = p - 2f - s.
\end{align}
For all the possible couplings, it follows that we obtain the same result of the Palatini--Cartan theory, i.e.,
\begin{align}
    r = 2.
\end{align}
\begin{remark}
    We notice that in the non-degenerate case we would obtain $r=4$. This reflects the existence of the constraint $R^\bullet_\tau$, which has been proven giving rise to a second-class system. We recall that such a constraint was implied by the geometry of the theory. In particular, together with some additional condition, it ensured the possibility of uniquely fixing a representative of the equivalence class of $\omega$. In fact, on a null-boundary, the space $\mathcal T$ defined in \cref{fund_spaces} is non-trivial.\\
    In physics, it is well-known that GR carries four local degrees of freedom.\footnote{Notice that sometimes the literature reports only two degrees of freedom. This is simply a consequence of considering the dimension of the phase space or just the one of the base manifold.} However, the constraint analysis of the degenerate theory sheds light of the fact that these local degrees of freedom, in the case of manifolds with a null-boundary, are reduced to only two. This fact has important implications regarding the study of black-holes, since the event horizon is a null-hypersurface.
\end{remark}

\section*{Appendix}
\begin{lemma}\label{strconstr_free}
Let $e_n\in\Omega_\Sigma^{0,1}$ be as in \cref{strconstr_free_deg} and $\alpha\in\Omega_\Sigma^{2,1}$. Then, we have
$$\alpha = 0$$
if and only if
	\begin{equation}
        \begin{cases}  \alpha\in\mathrm{Ker}W_1^{\Sigma, (2,1)}\\[3pt]
        e_n \alpha \in \Ima W_{1}^{\Sigma,(1,1)}.
        \end{cases}
    \end{equation}
\end{lemma}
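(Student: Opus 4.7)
The forward implication $\alpha=0\Rightarrow$ both conditions is immediate since $W_1^{\Sigma,(1,1)}(0)=0$, so the content lies in the converse. My plan is a purely local computation in a frame adapted to the hypothesis on $e_n$. I choose coordinates $x^1,x^2,x^3$ on a chart of $\Sigma$ with $v_i=\partial/\partial x^i$ and set $e_i:=e(v_i)$, so that the vielbein reads $e=dx^i\otimes e_i$ and $\{e_1,e_2,e_3,e_n\}$ is the prescribed basis of $i^*\mathcal V$; using the basis $\omega^i:=\tfrac12\epsilon^{ijk}\,dx^j\wedge dx^k$ of $\Omega^2(\Sigma)$, I decompose $\alpha=\alpha^{ai}\,\omega^i\otimes e_a$ with $a\in\{1,2,3,n\}$ and $i\in\{1,2,3\}$.

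The first step is to characterize $\mathrm{Ker}\,W_1^{\Sigma,(2,1)}$ in this decomposition. Using $dx^j\wedge\omega^i=\delta^{ji}\,d^3x$, one finds $e\wedge\alpha=\alpha^{aj}\,d^3x\otimes e_j\wedge e_a$. Splitting the internal factor into the parts along $e_j\wedge e_n$ (with $j$ tangential) and $e_i\wedge e_j$ (with $i,j\in\{1,2,3\}$, $i<j$), the equation $e\wedge\alpha=0$ is equivalent to (i) $\alpha^{nj}=0$ for $j=1,2,3$, and (ii) $\alpha^{ij}=\alpha^{ji}$ for $i,j\in\{1,2,3\}$. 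The second step is to exploit the structural constraint: fix $\beta=\beta^{al}\,dx^l\otimes e_a\in\Omega^{1,1}_\Sigma$ such that $e\wedge\beta=e_n\wedge\alpha$. By (i), $e_n\wedge\alpha=-\alpha^{ij}\,\omega^j\otimes e_i\wedge e_n$, while $e\wedge\beta=\beta^{al}\,dx^k\wedge dx^l\otimes e_k\wedge e_a$. Matching the component along $e_k\wedge e_n$ for each $k\in\{1,2,3\}$ and using $dx^k\wedge dx^l=\epsilon^{klm}\,\omega^m$, one obtains $\alpha^{kj}=\epsilon^{kjl}\,\beta^{nl}$, which is antisymmetric in $k,j$. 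Combined with (ii), this forces $\alpha^{ij}=0$ for all $i,j\in\{1,2,3\}$, and together with (i) one concludes $\alpha=0$.

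The main obstacle is essentially the bookkeeping of two independent wedge products (on $T^*\Sigma$ and on $i^*\mathcal V$). Once that is organized, the structural point is crisp: the kernel condition forces a \emph{symmetric} pattern on the tangential internal-spatial indices of $\alpha$, while matching against $e\wedge\beta$ produces (via the antisymmetry of the spatial wedge) an \emph{antisymmetric} pattern on the same indices, and their intersection is trivial. The argument is manifestly independent of the (non-canonical) choice of the complement $e_n$, as $\alpha=0$ is a frame-free statement and can therefore be verified in any adapted basis.
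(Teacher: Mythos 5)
Your proof is correct. The forward direction is trivial as you say, and your converse argument is sound: in an adapted frame $\{e_1,e_2,e_3,e_n\}$ the condition $e\wedge\alpha=0$ kills the $e_n$-components $\alpha^{nj}$ and forces the tangential block $\alpha^{ij}$ to be \emph{symmetric}, while writing $e_n\wedge\alpha=e\wedge\beta$ and comparing the $e_k\wedge e_n$ components yields $\alpha^{km}=\epsilon^{kml}\beta^{nl}$, which is \emph{antisymmetric}; the intersection of these two conditions is zero. I checked the index bookkeeping ($dx^j\wedge\omega^i=\delta^{ji}d^3x$, the absence of tangential-tangential components in $e_n\wedge\alpha$, and the fact that only the $a=n$ part of $\beta$ contributes to the matched components) and it all holds; note also that your argument uses only that $\{e_1,e_2,e_3,e_n\}$ is a basis, never non-degeneracy of $i^*g$, so it covers the null case needed here. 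The paper itself gives no proof of this lemma and simply refers to \cite{CCS2020}; your computation is a self-contained version of essentially that standard argument (an adapted-frame component analysis, consistent with the dimension counts $\dim\mathrm{Ker}W_1^{\Sigma,(2,1)}=6$ and $\dim\mathrm{Im}W_1^{\Sigma,(1,1)}=12$ quoted later in the appendix), so nothing further is required.
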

\begin{proof}
    See \cite{CCS2020}.
\end{proof}
\begin{corollary}\label{lemma_map_en_0}
Let $e_n\in\Omega_\Sigma^{0,1}$ be as in \cref{strconstr_free_deg} and $\gamma\in\Omega_\Sigma^{2,2}$. Then, we have the unique decomposition
\begin{align}
    \gamma=e\sigma+e_n\alpha,
\end{align}
with $\sigma\in\Omega_\Sigma^{1,1}$ and $\alpha\in\mathrm{Ker}W_1^{\Sigma, (2,1)}$.
\end{corollary}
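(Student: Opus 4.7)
The plan is to recast the statement as the bijectivity of the linear map
\[
\Phi\colon \Omega_\Sigma^{1,1} \oplus \mathrm{Ker}\,W_1^{\Sigma,(2,1)} \longrightarrow \Omega_\Sigma^{2,2}, \qquad (\sigma,\alpha) \longmapsto e\sigma + e_n\alpha,
\]
and to prove it by separately establishing injectivity (uniqueness) and surjectivity (existence), the first by a direct invocation of \cref{strconstr_free} and the second by a pointwise dimension count.

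For injectivity, I would start from $e\sigma + e_n\alpha = 0$ with $\alpha\in\mathrm{Ker}\,W_1^{\Sigma,(2,1)}$. Rewriting as $e_n\alpha = -e\sigma$ puts $e_n\alpha$ in $\Ima\,W_1^{\Sigma,(1,1)}$, so the two hypotheses of \cref{strconstr_free} are met and the lemma forces $\alpha=0$. The identity then collapses to $e\sigma=0$, and I would close the argument with a short basis check showing that $W_1^{\Sigma,(1,1)}$ is injective: expanding $\sigma = \sigma^a\otimes e_a + \sigma^n\otimes e_n$ in the basis $\{e(v_1),e(v_2),e(v_3),e_n\}$ of $i^*\mathcal V$, the vanishing of the $e_b\wedge e_n$-coefficients of $e\sigma$ forces $\sigma^n=0$, and the vanishing of the remaining $e_a\wedge e_b$-coefficients then forces $\sigma^a=0$.

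For surjectivity, a pointwise dimension count suffices: the fiber of $\Omega_\Sigma^{2,2}$ has dimension $\binom{3}{2}\binom{4}{2}=18$, the fiber of $\Omega_\Sigma^{1,1}$ has dimension $3\cdot 4=12$, and $\mathrm{Ker}\,W_1^{\Sigma,(2,1)}$ has fiber dimension $6$. The last identification comes from a description of the kernel in the adapted basis: an $\alpha$ with $e\wedge\alpha=0$ must have no $e_n$-component (isolated from the $e_a\wedge e_b$-part of $e\wedge\alpha$ in $\Omega_\Sigma^{3,2}$), and its remaining coefficients assemble, after Hodge-dualizing on $\Sigma$, into a symmetric $3\times 3$ matrix, contributing $6$ free parameters. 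Since $\Phi$ is an injective linear map between equidimensional fibers, it is a bijection.

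I expect the main obstacle to be the dimension count for $\mathrm{Ker}\,W_1^{\Sigma,(2,1)}$: this is where the computation is slightly fiddly, and it is the step that genuinely relies on $\{e(v_1),e(v_2),e(v_3),e_n\}$ being a basis rather than just a spanning set, since the symmetry of the surviving coefficients is only visible after passing to the Hodge dual. Once this structural fact is in place, both injectivity (via \cref{strconstr_free}) and surjectivity reduce to elementary linear algebra.
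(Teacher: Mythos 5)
Your proposal is correct and follows essentially the same route as the paper: both arguments combine \cref{strconstr_free} (trivial intersection of $\Ima W_1^{\Sigma,(1,1)}$ with $e_n\cdot\mathrm{Ker}W_1^{\Sigma,(2,1)}$, hence injectivity of the combined map) with the dimension count $12+6=18$ for existence. The only difference is that the paper cites \cite{Can2021} for the injectivity of $W_1^{\Sigma,(1,1)}$ and the surjectivity of $W_1^{\Sigma,(2,1)}$ (whence $\dim\mathrm{Ker}W_1^{\Sigma,(2,1)}=12-6=6$), whereas you verify these auxiliary facts directly in an adapted basis; your basis computations are consistent with the cited results.
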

\begin{proof}
    We define the map
    \begin{align}
        W_1^{n,\Sigma,(i,j)}\colon\Omega_\Sigma^{i,j}&\to\Omega_\Sigma^{i,j+1}\\
        \kappa&\mapsto e_n\kappa.
    \end{align}
From \cref{strconstr_free}, we know that the map $W_1^{n,\Sigma,(2,1)}|_{\mathrm{Ker}W_1^{\Sigma, (2,1)}}$ is injective\footnote{It is easy to see by setting $e_n\alpha=0$.}, whereas, the proof of the injectivity of $W_1^{\Sigma,(1,1)}$ is given in \cite{Can2021}. Moreover, \cref{strconstr_free} basically states that the intersection $\mathrm{Im}W_1^{\Sigma,(1,1)}\cap \mathrm{Im}W_1^{n,\Sigma,(2,1)}|_{\mathrm{Ker}W_1^{\Sigma, (2,1)}}$ is trivial. We then have
\begin{align}
    \mathrm{dim}(\mathrm{Im}W_1^{\Sigma, (1,1)})=\mathrm{dim}(\Omega_\Sigma^{1,1})=12
\end{align}
and
\begin{align}
    \mathrm{dim}(\mathrm{Im}W_1^{n,\Sigma,(2,1)}|_{\mathrm{Ker}W_1^{\Sigma, (2,1)}})=\mathrm{dim}(\mathrm{Ker}W_1^{\Sigma, (2,1)})=6,
\end{align}
since we know from \cite{Can2021} that $W_1^{\Sigma, (2,1)}$ is surjective. Given that
\begin{align}
    \mathrm{dim}(\Omega_\Sigma^{2,2})=18,
\end{align}
it follows the statement.
\end{proof}
\begin{lemma}\label{e_n_inj}
Let $e_n\in\Omega_\Sigma^{0,1}$ be as in \cref{strconstr_free_deg} and $v\in\Omega_\Sigma^{1,2}$. Then, we have
    \begin{align}
    v=0
\end{align}
if and only if
\begin{align}
    \begin{cases}
        v&\in\mathrm{Ker}W_1^{\Sigma, (1,2)}\\[3pt]
        e_n v&\in \mathrm{Im}W_1^{\Sigma, (0,2)}.
    \end{cases}
\end{align}
\begin{proof}
    This statement is the precise analogous of \cref{strconstr_free} and the proof follows verbatim upon the substitution $W_1^{\Sigma, (1,1)}\to W_1^{\Sigma, (0,2)}$. 
\end{proof}
\end{lemma}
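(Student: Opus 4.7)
The statement is the exact analog of \cref{strconstr_free}, with $(2,1)$-forms replaced by $(1,2)$-forms and the codomain of the structural map shifted from $W_1^{\Sigma,(1,1)}$ to $W_1^{\Sigma,(0,2)}$. My plan is therefore to transcribe the proof of \cref{strconstr_free}, after verifying that the algebraic inputs it relies on still hold in the new bidegree. The ``only if'' direction is immediate, since $v=0$ trivially lies in $\mathrm{Ker}W_1^{\Sigma,(1,2)}$ and $e_n v = 0 \in \mathrm{Im}W_1^{\Sigma,(0,2)}$.

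For the ``if'' direction, I would fix the basis $\{e(v_1), e(v_2), e(v_3), e_n\}$ of $i^*\mathcal V$ as in \cref{strconstr_free_deg} and expand $v\in\Omega_\Sigma^{1,2}$ in components $v^{ab}_\mu$ with $\mu\in\{1,2,3\}$ a form index on $\Sigma$ and $ab$ antisymmetric internal indices in $\{1,2,3,n\}$. The condition $v\in\mathrm{Ker}W_1^{\Sigma,(1,2)}$ imposes a first set of linear relations by antisymmetrization in the spacetime index against $e$. Writing the second condition as $e_n\wedge v = e\wedge \mu$ for some $\mu\in\Omega_\Sigma^{0,2}$ and comparing coefficients in the above basis yields a second set of relations. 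Together these systems force every $v^{ab}_\mu$ to vanish, which is precisely the content of the argument executed for $(2,1)$-forms in \cref{strconstr_free}.

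Alternatively, one can mimic the dimension-counting argument in the proof of \cref{lemma_map_en_0}: both maps $W_1^{n,\Sigma,(1,2)}\colon v\mapsto e_n\wedge v$ restricted to $\mathrm{Ker}W_1^{\Sigma,(1,2)}$ and $W_1^{\Sigma,(0,2)}$ have codomain $\Omega_\Sigma^{1,3}$. One checks injectivity of each on its domain and verifies that their images intersect trivially, so a $v$ satisfying the two hypotheses lies in $\mathrm{Ker}W_1^{n,\Sigma,(1,2)}|_{\mathrm{Ker}W_1^{\Sigma,(1,2)}}=\{0\}$.

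The main obstacle is verifying the two required injectivity facts in the new bidegree: first, that $W_1^{\Sigma,(0,2)}$ is injective (playing the role that injectivity of $W_1^{\Sigma,(1,1)}$ played in the original proof, and following from the analogous wedge-algebra argument in \cite{Can2021}), and second, that $e_n\wedge\cdot$ is injective on $\mathrm{Ker}W_1^{\Sigma,(1,2)}$, which uses that $\{e(v_1),e(v_2),e(v_3),e_n\}$ is a basis of $i^*\mathcal V$. Once these are in place, the bookkeeping transcribes verbatim from \cref{strconstr_free}, and the bidegree shift $(2,1)\mapsto(1,2)$ introduces no genuinely new difficulty.
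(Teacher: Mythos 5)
Your proposal is correct and follows the same route as the paper, whose proof is literally the one-line observation that the argument for \cref{strconstr_free} transfers verbatim under the bidegree substitution $W_1^{\Sigma,(1,1)}\to W_1^{\Sigma,(0,2)}$; you usefully make explicit the two injectivity inputs (injectivity of $W_1^{\Sigma,(0,2)}$ and of $e_n\wedge\cdot$ on $\mathrm{Ker}W_1^{\Sigma,(1,2)}$) that this transfer silently requires. The only caveat is that your ``alternative'' dimension-counting route is not really independent: establishing that the two images intersect trivially is the substance of the lemma itself (it is what \cref{lemma_map_en} later deduces \emph{from} this lemma), so that sketch should not be mistaken for a second proof.
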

\begin{corollary}\label{lemma_map_en}
Let $e_n\in\Omega_\Sigma^{0,1}$ be as in \cref{strconstr_free_deg} and $\theta\in\Omega_\Sigma^{1,3}$. Then, we have the unique decomposition
\begin{align}
    \theta=ec+e_n\beta,
\end{align}
with $c\in\Omega_\Sigma^{0,2}$ and $\beta\in\mathrm{Ker}W_1^{\Sigma, (1,2)}$.
\end{corollary}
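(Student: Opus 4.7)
The plan is to mirror, one wedge-degree higher, the argument used for Corollary~\ref{lemma_map_en_0}. Recalling the auxiliary map
\begin{align*}
W_1^{n,\Sigma,(i,j)}\colon \Omega_\Sigma^{i,j}\to\Omega_\Sigma^{i,j+1},\qquad \kappa\mapsto e_n\wedge\kappa,
\end{align*}
introduced in the proof of that corollary, the claim is equivalent to the direct-sum decomposition
\begin{align*}
\Omega_\Sigma^{1,3}=\Ima W_1^{\Sigma,(0,2)}\;\oplus\;\Ima\bigl(W_1^{n,\Sigma,(1,2)}|_{\mathrm{Ker}W_1^{\Sigma,(1,2)}}\bigr),
\end{align*}
from which existence and uniqueness of $c\in\Omega_\Sigma^{0,2}$ and $\beta\in\mathrm{Ker}W_1^{\Sigma,(1,2)}$ with $\theta=ec+e_n\beta$ will follow immediately.

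I would then establish three elementary linear-algebraic facts. First, the restriction of $W_1^{n,\Sigma,(1,2)}$ to $\mathrm{Ker}W_1^{\Sigma,(1,2)}$ is injective: if $v\in\mathrm{Ker}W_1^{\Sigma,(1,2)}$ satisfies $e_n v=0$, then trivially $e_n v\in\Ima W_1^{\Sigma,(0,2)}$, so Lemma~\ref{e_n_inj} forces $v=0$. Second, $W_1^{\Sigma,(0,2)}$ is itself injective, being the one-wedge-higher analogue of the injectivity of $W_1^{\Sigma,(1,1)}$ used in Corollary~\ref{lemma_map_en_0}; this can be cited directly from \cite{Can2021}. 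Third, the two images in the proposed decomposition meet only at zero: an identity $ec=e_n v$ with $v\in\mathrm{Ker}W_1^{\Sigma,(1,2)}$ places $e_n v\in\Ima W_1^{\Sigma,(0,2)}$, so Lemma~\ref{e_n_inj} again gives $v=0$, and the injectivity of $W_1^{\Sigma,(0,2)}$ then gives $c=0$.

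To close, I would run a dimension count on the three-dimensional boundary $\Sigma$ with $N=4$: one has $\dim\Omega_\Sigma^{1,3}=3\cdot 4=12$, $\dim\Omega_\Sigma^{0,2}=6$ (so the image of $W_1^{\Sigma,(0,2)}$ has dimension $6$), and, using surjectivity of $W_1^{\Sigma,(1,2)}\colon\Omega_\Sigma^{1,2}\to\Omega_\Sigma^{2,3}$ from \cite{Can2021} together with $\dim\Omega_\Sigma^{1,2}=18$ and $\dim\Omega_\Sigma^{2,3}=12$, one gets $\dim\mathrm{Ker}W_1^{\Sigma,(1,2)}=6$. By the injectivity above, the second summand also has dimension $6$, so the two subspaces sum in dimension to $12=\dim\Omega_\Sigma^{1,3}$, which combined with the trivial intersection forces the direct-sum decomposition.

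The only real obstacle is verifying that the auxiliary algebraic facts from \cite{Can2021}---the injectivity of $W_1^{\Sigma,(0,2)}$ and the surjectivity of $W_1^{\Sigma,(1,2)}$---are indeed available at this bi-degree on a null boundary. They are of exactly the same nature as the ones invoked in Corollary~\ref{lemma_map_en_0}, so the remainder of the argument is a faithful transcription of the earlier proof, with Lemma~\ref{e_n_inj} playing the role that Lemma~\ref{strconstr_free} played in Corollary~\ref{lemma_map_en_0}.
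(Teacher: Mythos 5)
Your proposal is correct and follows essentially the same route as the paper's proof: both reduce the claim to the direct-sum decomposition $\Omega_\Sigma^{1,3}=\Ima W_1^{\Sigma,(0,2)}\oplus\Ima\bigl(W_1^{n,\Sigma,(1,2)}|_{\mathrm{Ker}W_1^{\Sigma,(1,2)}}\bigr)$, use \cref{e_n_inj} for the injectivity of the restricted map and the triviality of the intersection, cite \cite{Can2021} for the injectivity of $W_1^{\Sigma,(0,2)}$ and the surjectivity of $W_1^{\Sigma,(1,2)}$, and close with the same dimension count $6+6=12$. Your write-up is, if anything, slightly more explicit about why the intersection is trivial and where each dimension comes from.
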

\begin{proof}
   Given the map $ W_1^{n,\Sigma,(1,2)}$ defined in \cref{lemma_map_en_0}, from \cref{e_n_inj}, we know that the map $W_1^{n,\Sigma,(1,2)}|_{\mathrm{Ker}W_1^{\Sigma, (1,2)}}$ is injective, whereas, the proof of the injectivity of $W_1^{\Sigma,(0,2)}$ is given in \cite{Can2021}. Moreover, \cref{e_n_inj} basically states that the intersection $\mathrm{Im}W_1^{\Sigma,(0,2)}\cap \mathrm{Im}W_1^{n,\Sigma,(1,2)}|_{\mathrm{Ker}W_1^{\Sigma, (1,2)}}$ is trivial. We then have
\begin{align}
    \mathrm{dim}(\mathrm{Im}W_1^{\Sigma, (0,2)})=\mathrm{dim}(\Omega_\Sigma^{0,2})=6
\end{align}
and
\begin{align}
    \mathrm{dim}(\mathrm{Im}W_1^{n,\Sigma,(1,2)}|_{\mathrm{Ker}W_1^{\Sigma, (1,2)}})=\mathrm{dim}(\mathrm{Ker}W_1^{\Sigma, (1,2)})=6,
\end{align}
since we know from \cite{Can2021} that $W_1^{\Sigma, (1,2)}$ is surjective. Given that
\begin{align}
    \mathrm{dim}(\Omega_\Sigma^{1,3})=12,
\end{align}
it follows the statement.
\end{proof}
\begin{proposition}\label{cortaue_n}
    Let $\tau\in\mathcal{S}$. Then, $\tau=e_n\beta$ with $\beta\in\Omega_\Sigma^{1,2}[1]$ such that $e_n\beta\in\mathrm{Ker}\tilde{\varrho}^{1,3}$ and $e_n$ defined as above.
\end{proposition}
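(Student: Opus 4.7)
My plan is to apply Corollary \ref{lemma_map_en} to $\tau \in \mathcal S \subset \Omega_\Sigma^{1,3}$. That corollary gives a unique decomposition $\tau = ec + e_n\beta$ with $c \in \Omega_\Sigma^{0,2}$ and $\beta \in \mathrm{Ker}W_1^{\Sigma,(1,2)}$. The task then reduces to showing $c=0$; once that is done, the condition $e_n\beta \in \mathrm{Ker}\tilde\varrho^{(1,3)}$ is just a restatement of $\tau \in \mathrm{Ker}\tilde\varrho^{(1,3)}$, which holds by the very definition of $\mathcal S$.

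First, I would use $\tau \in \mathrm{Ker}W_1^{\Sigma,(1,3)}$, i.e., $e\tau = 0$. Since $e$ and $e_n$ commute in the graded sense (both factors have even total bidegree) and $e\beta=0$ by choice of $\beta$, this collapses to $e^2 c = 0$. Working in the adapted basis $\{e(v_1), e(v_2), e(v_3), e_n\}$ of $i^*\mathcal V$ from Lemma \ref{strconstr_free_deg}, a direct component expansion shows that $e^2c=0$ forces the vanishing of exactly the components $c^{an}$ with $a \in \{1,2,3\}$, leaving $c$ valued in $\textstyle\bigwedge^2 e(T\Sigma)$.

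Second, I would bring in the remaining condition $[\tilde e,\tau]=0$. Expanding via the graded Leibniz rule for the bracket of Definition \ref{generalizedlie} yields $[\tilde e, ec] + [\tilde e, e_n\beta] = 0$ in $\Omega_\Sigma^{2,2}$. Since $\tilde e$ is a degenerate vielbein ($\tilde e^*\eta=0$), its image is a totally null subspace of $(V,\eta)$, which in Lorentzian signature $(3,1)$ is one-dimensional, so $\tilde e = \alpha\otimes v$ with $v$ null. In the adapted basis one then checks that the components of $c$ surviving the first step produce contributions to $[\tilde e, ec]$ that cannot be matched by any $[\tilde e, e_n\beta]$ coming from $\beta \in \mathrm{Ker}W_1^{\Sigma,(1,2)}$, forcing $c=0$ and hence $\tau = e_n\beta$.

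The main obstacle is precisely this cancellation argument: one must show that the map $c \mapsto [\tilde e, ec]$, restricted to sections of $\textstyle\bigwedge^2 e(T\Sigma)$, meets the image of $\beta \mapsto [\tilde e, e_n\beta]$ on $\mathrm{Ker}W_1^{\Sigma,(1,2)}$ only at zero. An efficient alternative, which I would likely use in the actual write-up, is to invoke the explicit parametrization of $\mathcal S$ by the independent components $\mathcal{X}^{\mu_2}_{\mu_1}$ and $\mathcal{Y}_\mu$ from \cite{CCT21} (Proposition~8) and to read off an explicit $\beta \in \Omega_\Sigma^{1,2}[1]$ with $e_n\beta = \tau$ directly in those coordinates; uniqueness of the decomposition in Corollary \ref{lemma_map_en} then closes the argument.
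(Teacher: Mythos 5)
Your setup matches the paper's: both start from the unique decomposition $\tau = ec + e_n\beta$ of \cref{lemma_map_en} and reduce the claim to $c=0$, and your first step (using $e\tau=0$ together with $e\beta=0$ to conclude $e^2c=0$ and hence kill the components $c^{an}$) is sound. The gap is in the second step, which you yourself flag as ``the main obstacle'' and leave as something ``one then checks.'' Not only is it unproven, but the mechanism you propose does not work: in the adapted basis of \cref{diagtetr} the image of $\tilde e$ is the null line spanned by $v_3-v_4$, and $e(T\Sigma)=\mathrm{span}(v_1,v_2,v_3-v_4)$ is precisely its $\eta$-orthogonal complement. Consequently, for $c$ valued in $\textstyle\bigwedge^2 e(T\Sigma)$ the internal part of $ec$ is proportional to $v_1v_2(v_3-v_4)$, every $\eta$-contraction of $v_3-v_4$ against it vanishes, and (since $\tilde e$ is supported on the single degenerate codirection, so the spacetime antisymmetrization also collapses) one finds $[\tilde e, ec]\equiv 0$. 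The condition $[\tilde e,\tau]=0$ therefore imposes \emph{no} constraint on the surviving components of $c$, and there is no ``mismatch'' between $[\tilde e,ec]$ and $[\tilde e,e_n\beta]$ to exploit. Your fallback of importing the component parametrization of $\mathcal S$ from \cite{CCT21} would close the argument only by assuming the external result rather than proving it.

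The paper obtains $c=0$ from a different input altogether, namely the integral characterization of $\mathcal S$ in \cref{lemma_tau_1}: one pairs $\tau = ec + e_n\beta$ against test forms $\alpha\in\Omega_\Sigma^{2,1}$ satisfying $p_{\mathcal T}\alpha=0$ together with the structural constraint $e_n\alpha = e\sigma$ (without imposing $\alpha\in\mathrm{Ker}W_1^{\Sigma,(2,1)}$, so this family is nontrivial). The $\beta$ term drops out because $\int e_n\beta\,\alpha = \pm\int\beta\,e\sigma = 0$ by $e\beta=0$, leaving $\int_\Sigma ec\,p_{\mathcal T^C}\alpha = 0$ for all such $\alpha$; this forces $ec=0$, and injectivity of $W_1^{\Sigma,(0,2)}$ gives $c=0$. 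If you want to keep your more algebraic route, you would need to replace your second step by an argument of this duality type (or an equivalent use of \cref{lemma_tau_1}); the raw pointwise conditions $e\tau=0$ and $[\tilde e,\tau]=0$ alone do not suffice to eliminate the $\textstyle\bigwedge^2 e(T\Sigma)$-valued part of $c$.
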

\begin{proof}
    From \cref{lemma_tau_1}, in particular, we have that
    \begin{align}\label{implicatio_lemma_tau_1}
        p_{\mathcal T}\alpha=0\quad\Longrightarrow\quad\int_\Sigma \tau\alpha=0\quad\forall\tau\in\mathcal{S},
    \end{align}
    for $\alpha\in\Omega_\Sigma^{2,1}$. Now, consider an $\alpha\in\Omega_\Sigma^{2,1}$ such that $p_{\mathcal T}\alpha=0$ holds together with the structural constraint $e_n(\alpha-p_{\mathcal T}\alpha)=e\sigma$ (notice that this subset of $\Omega_\Sigma^{2,1}$ is in general non-trivial because we do not require the condition $\alpha\in\mathrm{Ker}W_1^{\Sigma, (2,1)}$ as in \cref{strconstr_free_deg}), then it follows that
    \begin{align}
        \int_\Sigma \tau\alpha=\int_\Sigma ec\alpha+e_n\beta\alpha=\int_\Sigma ec p_{\mathcal{T}^C}\alpha+\beta e\sigma=\int_\Sigma ec p_{\mathcal{T}^C}\alpha,
    \end{align}
    where $p_{\mathcal{T}^C}$ is the projection onto a complement of $\mathcal T$. Since the right hand side of \eqref{implicatio_lemma_tau_1} must hold for all $\tau\in\mathcal S$, if the intersection $\mathcal S\cap \mathrm{Im}W_1^{\Sigma,(0,2)}$ were not trivial, we would have an absurdum. This implies $c\in\mathrm{Ker}W_1^{\Sigma,(0,2)}$ for all $\tau\in\mathcal S$, which, thanks to the injectivity of $W_1^{\Sigma,(0,2)}$, is equivalent to $c=0$.
    
    Lastly, the fact that $e_n\beta\in\mathrm{Ker}\tilde{\varrho}^{1,3}$ follows immediately from the definition of $\mathcal S$.
\end{proof}
\begin{proposition}\label{diagtetr}
    Let $\tau\in\mathcal{S}$ and $e$ be a \textit{diagonal degenerate} boundary vielbein, i.e. $e^*\eta=i^*\tilde{g}$ with $\eta=\mathrm{diag}(1,1,1-1)$ and $i^*\tilde{g}= \mathrm{diag}(1,1,0)$. Then, we have
    \begin{align}
        e_n[\tau,e]=0.
    \end{align}
\end{proposition}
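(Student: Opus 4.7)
The plan is to exploit the structural result of Proposition \ref{cortaue_n} together with the clean decomposition of the vielbein into its degenerate and non-degenerate parts that is available in the diagonal setting.

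First I would invoke Proposition \ref{cortaue_n} to write $\tau = e_n \beta$ with $\beta \in \Omega^{1,2}_\Sigma[1]$ such that $e_n \beta \in \mathrm{Ker}\,\tilde{\varrho}^{(1,3)}$, i.e.\ $[\tilde{e}, e_n\beta] = 0$. This is the only input from the definition of $\mathcal{S}$ that I will need.

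Next, in the diagonal case I would split the vielbein as $e = \hat{e} + \tilde{e}$, where $\tilde{e}$ is the purely degenerate part (its only non-zero column lies along the null tangent direction and takes values in the null subspace of $i^*\mathcal{V}$) and $\hat{e}$ is the non-degenerate complement (with internal components only along the two spacelike basis vectors of $i^*\mathcal{V}$). Diagonality allows one to choose $e_n$ in the span of the null basis vectors of $i^*\mathcal{V}$, so that $\eta(e_n,\hat{e}(\partial_i))=0$ for every $i$, which in the notation of Definition \ref{generalizedlie} reads $[e_n,\hat{e}] = 0$.

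Using linearity of the bracket I would then split $e_n[\tau, e] = e_n[\tau, \hat{e}] + e_n[\tau, \tilde{e}]$. The second summand vanishes at once: by the graded antisymmetry of the bracket it equals (up to sign) $e_n[\tilde{e}, \tau]$, which is zero because $\tau \in \mathrm{Ker}\,\tilde{\varrho}^{(1,3)}$. For the first summand I would expand $[e_n \beta, \hat{e}]$ by the graded Leibniz rule, getting schematically a term proportional to $[e_n, \hat{e}] \wedge \beta$ and a term proportional to $e_n \wedge [\hat{e}, \beta]$. The former vanishes by the orthogonality observation above; after the outer multiplication by $e_n$ the latter becomes $e_n \wedge e_n \wedge [\hat{e}, \beta] = 0$, since $e_n \wedge e_n = 0$ in $\textstyle\bigwedge^2 i^*\mathcal{V}$.

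The main obstacle is only bookkeeping: keeping the signs in the graded Leibniz rule and in the graded antisymmetry consistent with Definition \ref{generalizedlie}, and verifying carefully that the canonical diagonal choice of $e_n$ indeed yields $[e_n, \hat{e}] = 0$. Beyond that, the argument reduces to the two structural facts $[\tilde{e}, \tau] = 0$ and $e_n \wedge e_n = 0$.
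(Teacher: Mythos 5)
Your disposal of the spacelike part of the vielbein is correct: with $\tau=e_n\beta$ from \cref{cortaue_n}, the Leibniz rule for the bracket, the orthogonality $\eta(e_n,e_1)=\eta(e_n,e_2)=0$, and $e_n\wedge e_n=0$ together kill $e_n[\tau,e_1dx^1+e_2dx^2]$. The gap is in the null direction. You take the $\tilde e$ of \eqref{maptilderho} to be the rank-one piece $e_+dx^+=(v_3-v_4)dx^+$ and then invoke $\tau\in\mathrm{Ker}\,\tilde\varrho^{(1,3)}$ to conclude that $e_n[\tau,e_+dx^+]$ vanishes. But the $\tilde e$ that actually enters the definition of $\mathcal S$ here is the complementary, spacelike part of the vielbein: the component relations listed in the paper's proof ($\tau_+^{abc}=0$, $\tau_\mu^{123}=\tau_\mu^{124}=0$, $\tau_1^{234}=\tau_2^{134}$) are exactly the kernel of $[v_1dx^1+v_2dx^2,\,\cdot\,]$, whereas $[(v_3-v_4)dx^+,\tau]=0$ would instead impose $\tau_\mu^{134}=\tau_\mu^{234}=0$ while leaving $\tau_+$ unconstrained --- a space on which \cref{cortaue_n} (which you also use) fails, since such $\tau$ are not of the form $e_n\beta$. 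The same reading is forced by \eqref{condition_entaue}, where $\hat e=e-\tilde e$ is the null piece and $[e_n,\hat e]$ is treated as a nontrivial object. So the defining condition of $\mathcal S$ does not annihilate the bracket with the null direction, and your second summand is not ``zero at once.''

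This is not cosmetic, because the null-direction term is the entire content of the proposition. Your own Leibniz computation, carried through, gives $e_n[\tau,e]=e_n[e_n\beta,e]=\pm[e_n,e]\wedge\tau$ (the other term carries $e_n\wedge e_n$), and for the diagonal vielbein $[e_n,e]=\eta(e_n,e_\mu)dx^\mu=\eta(e_n,e_+)\,dx^+$ with $\eta(e_n,e_+)$ never zero, since $e_+^{\perp}=e(T\Sigma)$ and $e_n\notin e(T\Sigma)$. Everything therefore reduces to controlling $dx^+\wedge\tau$, and that information can only come from the explicit component relations characterizing $\mathcal S$ --- which is precisely the computation the paper's proof performs and your argument omits. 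To repair the proof you would need either to carry out that component check or to show directly that membership in $\mathcal S$ controls $[e_+dx^+,\tau]$ rather than $[\tilde e,\tau]$.
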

\begin{proof}
    Given $a=1,2,3,4$ and let $\mu=1,2,+$ be the coordinates on the boundary $\Sigma$ such that we can write the diagonal degenerate boundary vielbein $e$ as
    \begin{align}
    \hat e^a&=
            \begin{cases}
                e_1^a&=\delta_1^a\\
                e_2^a&=\delta_2^a\\
            \end{cases}\\[5pt]
    e_+^a&=\delta_3^a-\delta_4^a\\[5pt]
    e_n^a&=\delta_3^a+\delta_4^a.
  \end{align}
Then, the definition of $\tau\in\mathcal S$ implies the following relations
\begin{align}
\tau_+^{abc}&=0\quad\forall a,b,c\\[3pt]
\tau_\mu^{123}&=0\quad\mu=1,2\\[3pt]
\tau_\mu^{124}&=0\quad\mu=1,2\\[3pt]
\tau_1^{234}&=\tau_2^{134}\\[3pt]
\tau_1^{134}&=-\tau_2^{234}.
\end{align}
The proof follows simply by computing $e_n[\tau,e]$ in components implementing the explicit form of the diagonal vielbein above\footnote{We refer to \cite{T2019b} for further details about this kind of computations.}.
\end{proof}
\begin{lemma}
    Let\footnote{Notice that this may be also a \textit{shifted} variable, like $\tau$ for example.} $A\in\Omega_\Sigma^{k,i}$ with $2\leq i\leq4$. Then, it holds
\begin{align}\label{identity_spinor_5}
\gamma\iota_\gamma\iota_\gamma A=(-1)^{|A|}(\iota_\gamma\iota_\gamma A\gamma+4(i-1)[\gamma,A]).
\end{align}
\end{lemma}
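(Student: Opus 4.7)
The plan is to reduce everything to repeated use of the Clifford relation $\{\gamma^a,\gamma^b\}=-2\eta^{ab}\mathbb{1}_{4\times 4}$ and then match combinatorial factors. The point is that $\iota_\gamma\iota_\gamma A$ carries an extra pair of $\gamma$-matrices relative to $A$, and the identity amounts to graded-commuting one further $\gamma$-matrix across this pair.

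First, I would expand both sides in components. Writing $A=\tfrac{1}{i!}A^{a_1\cdots a_i}v_{a_1}\wedge\cdots\wedge v_{a_i}$ and iterating the definition of $\iota_\gamma$ given earlier in the paper, one obtains
\[
\iota_\gamma\iota_\gamma A=\tfrac{1}{(i-2)!}\,\gamma_c\gamma_d\,A^{dc\,a_3\cdots a_i}\,v_{a_3}\wedge\cdots\wedge v_{a_i}.
\]
Thus $\gamma\cdot\iota_\gamma\iota_\gamma A$ and $\iota_\gamma\iota_\gamma A\cdot\gamma$ differ only in where a single additional $\gamma^b$ sits relative to the pair $\gamma_c\gamma_d$, and in the placement of $v_b$ in the internal wedge. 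Applying the Clifford relation twice gives the purely algebraic identity
\[
\gamma^b\gamma_c\gamma_d=\gamma_c\gamma_d\gamma^b+2\delta^b_{\ d}\gamma_c-2\delta^b_{\ c}\gamma_d,
\]
which I would substitute into the left-hand side.

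Next, I would isolate the sign $(-1)^{|A|}$ on the right-hand side. This sign accumulates from two bookkeeping steps carried out in tandem: (i) moving the odd matrix $\gamma^b$ past the form $A$ of total parity $|A|$, and (ii) transporting $v_b$ from the leftmost to the rightmost slot of $v_b\wedge v_{a_3}\wedge\cdots\wedge v_{a_i}$, which is needed to recast the first term of the Clifford identity into the form $\iota_\gamma\iota_\gamma A\,\gamma$. Once this factor is pulled out, the correction $2\delta^b_{\ d}\gamma_c-2\delta^b_{\ c}\gamma_d$ contracted with the antisymmetric tensor $A^{dc\,a_3\cdots a_i}$ collapses, by antisymmetry of $A$ in $d,c$, to twice a single-index contraction of $A$ with $\gamma$. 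Combined with the combinatorial mismatch between $\tfrac{1}{(i-2)!}$ and the $\tfrac{1}{(i-1)!}$ implicit in the definition of $[\gamma,A]$ as the generalized Lie bracket of \cref{generalizedlie}, this produces exactly the coefficient $4(i-1)$.

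The main obstacle will be the sign bookkeeping. The Clifford commutation itself is elementary, but threading the Koszul signs from the graded structure of the matrix-valued forms, the repositioning of internal basis vectors, and the parity conventions for spinor-like objects requires care. In particular, one must verify that the signs arising from (i) and (ii) above recombine into the single factor $(-1)^{|A|}$ with no residual dependence on the form degree $k$, and that the two correction terms $\delta^b_{\ d}\gamma_c A^{dc\cdots}$ and $\delta^b_{\ c}\gamma_d A^{dc\cdots}$ indeed add (via the antisymmetry of $A$) rather than cancel, so as to yield the coefficient $4(i-1)$ and not a smaller multiple.
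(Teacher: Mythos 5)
Your proposal is correct and follows essentially the same route as the paper's proof: expand $\gamma\iota_\gamma\iota_\gamma A$ in components, apply the Clifford relation $\{\gamma^a,\gamma^b\}=-2\eta^{ab}$ twice to transport the extra $\gamma$-matrix across the contracted pair, combine the two $\eta$-correction terms via the antisymmetry of $A$ to get the factor $4$, and absorb the $(i-1)$ from the combinatorial normalization of the single contraction $[\gamma,A]$. The sign bookkeeping you flag as the main obstacle is handled in the paper exactly as you describe, by collecting the Koszul signs from moving $\gamma^a v_a$ past $\iota_{v_b}\iota_{v_c}A$ into the single factor $(-1)^{|A|}$.
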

\begin{proof}
    \begin{align}
\gamma\iota_\gamma\iota_\gamma A&=(i-2)!\gamma^a\gamma^b\gamma^cv_a\iota_{v_b}\iota_{v_c}A\\
    &=-(i-2)!(\gamma^b\gamma^a\gamma^c+2\eta^{ab}\gamma^c)v_a\iota_{v_b}\iota_{v_c}A\\
    &=-(i-2)!(-\gamma^b\gamma^c\gamma^a+4\eta^{ab}\gamma^c)v_a\iota_{v_b}\iota_{v_c}A\\
    &=(-1)^{|A|}(\iota_\gamma\iota_\gamma A\gamma+4(i-1)[\gamma,A]).
\end{align}
\end{proof}
\begin{remark}
    This lemma introduces a relation between the action of the brackets over the Clifford algebra and $\mathcal V$-algebra. In particular, it is consistent a triviality condition on the bracket in the Clifford algebra, i.e.
    \begin{align}\label{identity_total_bracket}
    [A,\overline\psi\gamma\psi]=(-1)^{|A|}\overline\psi[A,\gamma]_{\mathcal V}\psi=\overline\psi\gamma[A,\psi]_{Cl}+ [\overline\psi,A]_{Cl}\gamma\psi,
    \end{align}
    where we occasionally added some redundancy with the labels of the specific algebras, even if we will not use them in general.
\end{remark}
\begin{lemma}\label{lemma_switch_bracket}
    Given $A\in\Omega_\Sigma^{k,i}$ and $B\in\Omega_\Sigma^{l,j}$ with $i,j=2,3$ such that $i+j<6$, then we have
    \begin{align}\label{deltatau_2}
        B(\overline\psi\gamma[A,\psi]- [A,\overline\psi]\gamma\psi)=(-1)^{|A||B|}A(\overline\psi\gamma[B,\psi]- [B,\overline\psi]\gamma\psi).
    \end{align}
\end{lemma}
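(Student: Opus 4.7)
The plan is to reduce both sides to a common canonical form by unfolding the bracket definitions, applying the Clifford rearrangement encoded in \cref{identity_spinor_5}, and exploiting the antisymmetries produced by the wedge product.

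\emph{Step 1 (unfolding).} From the spinor-section definitions of $[\cdot,\psi]$ and $[\cdot,\overline\psi]$, isolating the common prefactor $\tfrac{1}{4(i-1)}$ gives
\begin{equation*}
\overline\psi\gamma[A,\psi]-[A,\overline\psi]\gamma\psi=\frac{1}{4(i-1)}\,\overline\psi\bigl(\gamma\,\iota_\gamma\iota_\gamma A+(-1)^{|A|}\iota_\gamma\iota_\gamma A\,\gamma\bigr)\psi.
\end{equation*}
Because $A$ is internal-antisymmetric in the two indices consumed by $\iota_\gamma\iota_\gamma$, the symmetric $\eta^{ab}$-part of $\gamma^a\gamma^b$ drops and one has $\iota_\gamma\iota_\gamma A = \gamma^{ab}A_{ab\,\cdots}v^{\cdots}$, with $\gamma^{ab}=\tfrac12[\gamma^a,\gamma^b]$. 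The same simplification holds for $B$.

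\emph{Step 2 (Clifford rearrangement).} Apply the Clifford identities $\{\gamma^c,\gamma^{ab}\}=2\gamma^{[cab]}$ and $[\gamma^c,\gamma^{ab}]=2(\eta^{cb}\gamma^a-\eta^{ca}\gamma^b)$, which are the component form of \cref{identity_spinor_5}. For $|A|$ even the trace-like $\eta\,\gamma$-terms cancel and one is left with the totally antisymmetric piece proportional to $\overline\psi\gamma^{[cab]}\psi\,A_{ab\,\cdots}\,v_c v^{\cdots}$; for $|A|$ odd the $\gamma^{[cab]}$-piece drops and one is instead left with a contribution carried by the Dirac current $J^a\coloneqq\overline\psi\gamma^a\psi$ contracted once with $A$. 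In both parities the dependence on the Clifford structure is packaged into a single tensor that is totally antisymmetric in its three internal indices $(c,a,b)$.

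\emph{Step 3 (swap symmetry).} Wedge with $B$ on the outside. The totally antisymmetric Clifford factor meets the combined tensor $A\wedge B$, which by virtue of the wedge product is itself antisymmetric in all $i+j$ internal indices; hence the way the three contracted indices are partitioned between $A$ and $B$ is immaterial once the wedge has fully antisymmetrised the remaining $i+j-1$ free internal indices. Swapping $A\leftrightarrow B$ therefore produces the same total expression up to the graded-commutativity sign $(-1)^{|A||B|}$ inherited from the wedge-product commutation of $A$ and $B$, yielding the stated equality. The vector/current contribution in the odd-parity case is handled identically, using the fact that $J\in\Omega^{0,1}_\Sigma$ is totally-degree-odd so that $J\wedge J=0$, which collapses the two ways of contracting $J$ against an index of $A\wedge B$ to the same canonical expression.

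The principal technical obstacle is the consistent bookkeeping of the three simultaneous gradings (form-degree, internal-degree, and Grassmann parity) together with the signs from Clifford reorderings and the combinatorial factorials hidden in the $\iota_\gamma\iota_\gamma$ and $W_k^{\Sigma,(i,j)}$ conventions. The restrictions $i,j\in\{2,3\}$ and $i+j<6$ ensure that $\iota_\gamma\iota_\gamma$ is well-defined on both $A$ and $B$, that the output in $\Omega^{k+l,\,i+j-1}_\Sigma$ is non-degenerate on the four-dimensional internal space, and that only the basic Clifford relation $\{\gamma^a,\gamma^b\}=-2\eta^{ab}$ (together with its triple-$\gamma$ consequence) is needed.
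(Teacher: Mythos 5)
Your Step 1 correctly unfolds the brackets into the combination $\tfrac{1}{4(i-1)}\overline\psi\bigl(\gamma\,\iota_\gamma\iota_\gamma A+(-1)^{|A|}\iota_\gamma\iota_\gamma A\,\gamma\bigr)\psi$, and the parity split of Step 2 is in the spirit of \eqref{identity_spinor_5}. The gap is Step 3. The assertion that ``the way the three contracted indices are partitioned between $A$ and $B$ is immaterial'' once everything is wedged is not a consequence of the antisymmetry of the wedge product, and it fails term by term. Writing $C^{cab}=\overline\psi\gamma^{[c}\gamma^a\gamma^{b]}\psi$, the two objects you want to identify are $B\wedge\bigl(C^{cab}\iota_{v_a}\iota_{v_b}A\;v_c\bigr)$ and $\pm A\wedge\bigl(C^{cab}\iota_{v_a}\iota_{v_b}B\;v_c\bigr)$; by the Leibniz rule each differs from the fully antisymmetric contraction $C^{cab}\iota_{v_a}\iota_{v_b}(A\wedge B)\,v_c$ by a cross term of the form $C^{cab}\,\iota_{v_a}A\wedge\iota_{v_b}B\wedge v_c$, and these cross terms are different for the two sides and have no reason to vanish. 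A basis check already shows the problem: for $A\propto v_1\wedge v_2$ and $B\propto v_1\wedge v_3$ the first expression is proportional to $C^{412}\,v_1\wedge v_3\wedge v_4$ while the second is proportional to $C^{413}\,v_1\wedge v_2\wedge v_4$, which are independent components. The same issue affects your odd-parity branch: $\iota_J A\wedge B$ and $A\wedge\iota_J B$ are summands of $\iota_J(A\wedge B)$, not equal to each other, and $J\wedge J=0$ is irrelevant because the current appears only once.

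What actually makes the lemma true — and what the paper's proof consists of — is the explicit transfer of contractions from $A$ onto $B$: pushing $B$ through $\gamma\,\iota_\gamma\iota_\gamma A$ and through $\iota_\gamma\iota_\gamma A\,\gamma$ separately produces, in each case, a leftover term $\pm 6\,\iota_\gamma B$ (precisely the cross term in which one Clifford contraction lands on $B$), and these two leftovers cancel only in the particular combination $\overline\psi\gamma[A,\psi]-[A,\overline\psi]\gamma\psi$; the four parity cases then assemble into the sign $(-1)^{|A||B|}$. Your argument never produces these cross terms, so it cannot cancel them; to repair it you must carry out the Leibniz-rule bookkeeping rather than appeal to antisymmetry.
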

\begin{proof}
The proof goes by direct computation of
\begin{align}
    B\gamma\iota_\gamma\iota_\gamma A&=B\gamma\gamma^a\gamma^b[v_a,[v_b,A]]\\[3pt]
    &=(-1)^{|B|}([v_a,B]\gamma+(-1)^{|B|}B\gamma_a)\gamma^a\gamma^b[v_b,A]\\[3pt]
    &=(-1)^{|B|}([v_a,B]\gamma\gamma^a-4(-1)^{|B|}B)\gamma^b[v_b,A]\\[3pt]
    &=-\big([v_b,[v_a,A]]\gamma\gamma^a\gamma^b-(-1)^{|B|}([v_a,B]\gamma_b\gamma^a\gamma^b+4[\gamma,B])\big)A\\[3pt]
    &=-(-(-1)^{|B|}\gamma\iota_\gamma\iota_\gamma B-6(-1)^{|B|}\iota_\gamma B)A\\[3pt]
    &=(-1)^{|B|}(-1)^{|A|(|B|+1)}A(\gamma\iota_\gamma\iota_\gamma B+6\iota_\gamma B)
\end{align}
and
\begin{align}
    \hspace{-0.12cm}B\iota_\gamma\iota_\gamma A\gamma&=(-1)^{|A||B|}\gamma^a\gamma^b[v_a,[v_b,A]]B\gamma\\[3pt]
    &=(-1)^{|A||B|}(-1)^{|A}\gamma^a\gamma^b[v_b,A]([v_a,B]\gamma+(-1)^{|B|}\gamma_aB)\\[3pt]
    &=-(-1)^{|A||B|}A\gamma^a\gamma^b\big([v_b,[v_a,B]]\gamma-(-1)^{|B|}([v_a,B]\gamma_b-\gamma_a[v_b,B])\big)\\[3pt]
    &=-(-1)^{|A||B|}A\big(-\iota_\gamma\iota_\gamma B\gamma+(-1)^{|B|}(4[\gamma,B]+\gamma^a\gamma^b\gamma_a[v_b,B])\big)\\[3pt]
    &=(-1)^{|A||B|}A(\iota_\gamma\iota_\gamma  B\gamma -(-1)^{|B|}6\iota_\gamma B).
\end{align}
Then, we can conclude the proof by considering the four possible parities of $A$ and $B$.
\end{proof}

\printbibliography
\end{document}